\pdfoutput=1
\PassOptionsToPackage{bookmarksnumbered,unicode,hypertexnames=false}{hyperref}
\RequirePackage{hyperref}
\documentclass[acmtog]{acmart}
\makeatletter
\g@addto@macro\@authornotes{\footnotetext[1]{Both authors contributed equally to this work.}}
\makeatother

\usepackage{arydshln}
\usepackage{multirow}
\usepackage{threeparttable}
\usepackage[percent]{overpic}
\usepackage{booktabs}
\usepackage{colortbl}
\definecolor{rowgray}{gray}{0.93}
\usepackage{amsmath}
\usepackage{xspace}
\usepackage{wrapfig}
\usepackage[ruled]{algorithm2e}
\usepackage{bm}
\usepackage{makecell}
\usepackage{float}
\usepackage{caption}
\usepackage{subcaption}
\usepackage{amsthm}

\newtheorem{proposition}{Proposition}
\newtheorem{lemma}{Lemma}
\usepackage{enumitem}

\acmJournal{TOG}
\setcopyright{cc}
\setcctype{by}
\acmJournal{TOG}
\acmYear{2026} \acmVolume{45} \acmNumber{6} \acmArticle{197}
\acmMonth{12} \acmDOI{10.1145/3842510}

\newcommand{\name}{S4R\xspace}

\begin{document}

\title{S4R: Scaling for Rigid-Body Interpenetration Resolution}

\author{Zhiyang Dou}
\authornotemark[1]
\email{frankdou@mit.edu}
\orcid{0000-0003-0186-8269}
\affiliation{%
  \institution{MIT CSAIL}
  \city{Cambridge}
  \country{USA}
}

\author{Ang Zhao}
\authornotemark[1]
\email{heskeyangzhao@gmail.com}
\orcid{0009-0004-4661-0530}
\affiliation{%
  \institution{Xiamen University}
  \city{Xiamen}
  \country{China}
}

\author{Chen Peng}
\email{millyapeng@gmail.com}
\orcid{0000-0002-3102-2574}
\affiliation{%
  \institution{The University of Hong Kong}
  \city{Hong Kong}
  \country{China}
}

\author{Minghao Guo}
\email{guomh2014@gmail.com}
\orcid{0000-0003-3408-4997}
\affiliation{%
  \institution{MIT CSAIL}
  \city{Cambridge}
  \country{USA}
}

\author{Haixu Wu}
\email{wuhaixu@mit.edu}
\orcid{0009-0007-5715-0527}
\affiliation{%
  \institution{MIT CSAIL}
  \city{Cambridge}
  \country{USA}
}

\author{Cheng Lin}
\email{chenglin@must.edu.mo}
\orcid{0000-0002-3335-6623}
\affiliation{%
  \institution{Macau University of Science and Technology}
  \city{Macau}
  \country{China}
}

\author{Yuan Liu}
\email{yuanly@ust.hk}
\orcid{0000-0003-2933-5667}
\affiliation{%
  \institution{The Hong Kong University of Science and Technology}
  \city{Hong Kong}
  \country{China}
}

\author{Junfeng Yao}
\email{yao0010@xmu.edu.cn}
\orcid{0000-0002-2330-7406}
\affiliation{%
  \institution{Xiamen University}
  \city{Xiamen}
  \country{China}
}

\author{Xiaohu Guo}
\email{xguo@utdallas.edu}
\orcid{0000-0002-7610-6923}
\affiliation{%
  \institution{The University of Texas at Dallas}
  \city{Dallas}
  \country{USA}
}

\author{Wenping Wang}
\email{wenping@tamu.edu}
\orcid{0000-0002-2284-3952}
\affiliation{%
  \institution{Texas A\&M University}
  \city{College Station}
  \country{USA}
}

\author{Wojciech Matusik}
\email{wojciech@csail.mit.edu}
\orcid{0000-0003-0212-5643}
\affiliation{%
  \institution{MIT CSAIL}
  \city{Cambridge}
  \country{USA}
}

\renewcommand{\shortauthors}{Dou et al.}

\begin{CCSXML}
<ccs2012>
   <concept>
       <concept_id>10010147.10010371.10010396.10010397</concept_id>
       <concept_desc>Computing methodologies~Mesh models</concept_desc>
       <concept_significance>500</concept_significance>
       </concept>
   <concept>
       <concept_id>10003752.10010061.10010063</concept_id>
       <concept_desc>Theory of computation~Computational geometry</concept_desc>
       <concept_significance>500</concept_significance>
       </concept>
   <concept>
       <concept_id>10010147.10010341</concept_id>
       <concept_desc>Computing methodologies~Modeling and simulation</concept_desc>
       <concept_significance>500</concept_significance>
       </concept>
   <concept>
       <concept_id>10010147.10010371.10010352.10010379</concept_id>
       <concept_desc>Computing methodologies~Physical simulation</concept_desc>
       <concept_significance>500</concept_significance>
       </concept>
   <concept>
       <concept_id>10010147.10010371.10010352.10010381</concept_id>
       <concept_desc>Computing methodologies~Collision detection</concept_desc>
       <concept_significance>500</concept_significance>
       </concept>
 </ccs2012>
\end{CCSXML}

\ccsdesc[500]{Computing methodologies~Mesh models}
\ccsdesc[500]{Theory of computation~Computational geometry}
\ccsdesc[500]{Computing methodologies~Modeling and simulation}
\ccsdesc[500]{Computing methodologies~Physical simulation}
\ccsdesc[500]{Computing methodologies~Collision detection}
\keywords{rigid-body interpenetration resolution, scale continuation, static geometry repair, collision-free initialization, contact optimization}

\begin{teaserfigure}
  \includegraphics[width=\textwidth]{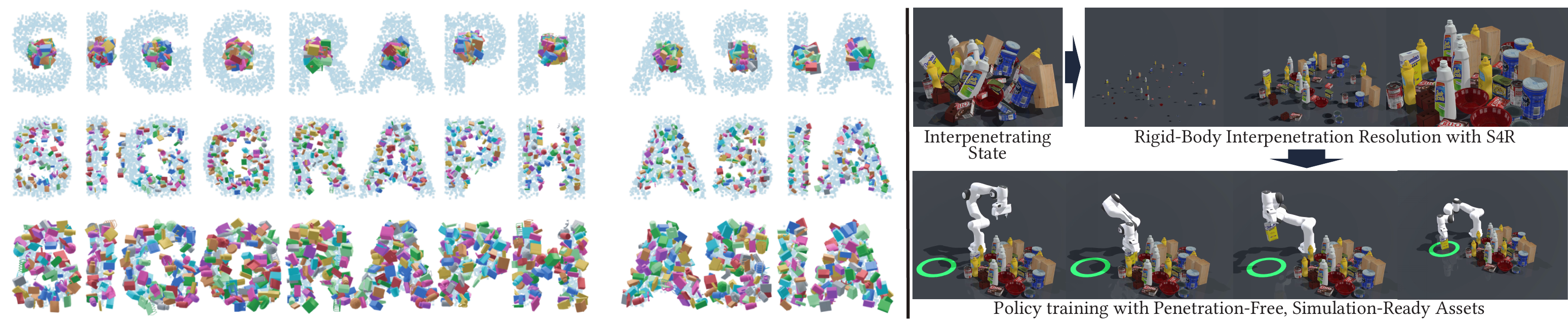}
  \caption{\textbf{S4R resolves rigid-body interpenetrations at scale and produces simulation-ready scenes.}
\textbf{Left:} Progressive scaling of 1,000 object instances from the Kubric~\cite{greff2022kubric} pool, arranged to spell ``SIGGRAPH ASIA,'' from an interpenetrating initialization to a penetration-free configuration.
\textbf{Right:} S4R efficiently converts cluttered 3D assets into simulation-ready scenes for downstream robot-policy training.
}
  \Description{Two-part teaser. On the left, three rows show one thousand small mesh assets arranged in the shape of the words SIGGRAPH ASIA: an interpenetrating initial state, an intermediate scaled state, and the resolved full-scale layout. On the right, a cluttered pile of household objects is shrunk and restored into a separated tabletop arrangement, which a robot arm then manipulates, placing an object into a marked circular target region on the ground.}
  \label{fig:teaser}
\end{teaserfigure}

\begin{abstract}
Rigid-body interpenetration frequently occurs in procedurally assembled and
generated scenes and must be removed before downstream applications such as physical simulation. We present
\name\ (\emph{S}caling \emph{for} Rigid-Body Interpenetration \emph{R}esolution), a scale-continuation method for static interpenetration repair. \name
first uniformly shrinks each body about a fixed reference center to a small
initial scale, at which the layout is penetration-free, and then restores full scale through a sequence
of minimum-norm convex contact quadratic programs (QPs) that target the linearized separation
margin during continuation. Resolution thereby replaces one deep correction with a sequence of shallow-contact subproblems. A conservative scale-event bound and frozen-witness
gap predictions cut the number of exact mesh queries; the continuation then ends
with a full-scale evaluator check and bounded tail refinement. We evaluate \name on
Kubric~\cite{greff2022kubric}, HY3D-Bench~\cite{tencent2026hy3dbench}, and
Thingi10K~\cite{zhou2016thingi10k} using a shared mesh-level evaluator and a
unified per-scene timing protocol.  In the main comparisons on all three benchmarks, up to $N{=}5000$ bodies,
\name reaches zero reported penetration with displacement that stays small and
nearly independent of scene size, and at the lowest wall time within each hardware tier among the compared methods. A GPU implementation extends these results to large-scale scenes. Our code and data can be found on our project page: \textcolor{ACMDarkBlue}{\url{https://frank-zy-dou.github.io/projects/S4R/index.html}}.
\end{abstract}

\maketitle

\providecommand{\vx}{}\renewcommand{\vx}{\mathbf{x}}
\providecommand{\vp}{}\renewcommand{\vp}{\mathbf{p}}
\providecommand{\vc}{}\renewcommand{\vc}{\mathbf{c}}
\providecommand{\vq}{}\renewcommand{\vq}{\mathbf{q}}
\providecommand{\vr}{}\renewcommand{\vr}{\mathbf{r}}
\providecommand{\vv}{}\renewcommand{\vv}{\mathbf{v}}
\providecommand{\vn}{}\renewcommand{\vn}{\mathbf{n}}
\providecommand{\vw}{}\renewcommand{\vw}{\mathbf{w}}
\providecommand{\vu}{}\renewcommand{\vu}{\mathbf{u}}
\providecommand{\vf}{}\renewcommand{\vf}{\mathbf{f}}
\providecommand{\vy}{}\renewcommand{\vy}{\mathbf{y}}
\providecommand{\vd}{}\renewcommand{\vd}{\mathbf{d}}
\providecommand{\vb}{}\renewcommand{\vb}{\mathbf{b}}
\providecommand{\vomega}{}\renewcommand{\vomega}{\boldsymbol{\omega}}
\providecommand{\vlambda}{}\renewcommand{\vlambda}{\boldsymbol{\lambda}}

\providecommand{\mA}{}\renewcommand{\mA}{\mathbf{A}}
\providecommand{\mM}{}\renewcommand{\mM}{\mathbf{M}}
\providecommand{\mI}{}\renewcommand{\mI}{\mathbf{I}}
\providecommand{\mH}{}\renewcommand{\mH}{\mathbf{H}}
\providecommand{\mR}{}\renewcommand{\mR}{\mathbf{R}}

\providecommand{\R}{}\renewcommand{\R}{\mathbb{R}}

\providecommand{\dhat}{}\renewcommand{\dhat}{\hat{d}}
\providecommand{\qtilde}{}\renewcommand{\qtilde}{\tilde{\vq}}

\providecommand{\Tr}{}\renewcommand{\Tr}{\operatorname{Tr}}
\providecommand{\argmin}{}\renewcommand{\argmin}{\operatorname*{arg\,min}}

\section{Introduction}

\label{sec:intro}

Object-rich 3D content is now produced at scale, from text-to-3D assets~\cite{poole2023dreamfusion,lin2023magic3d} to large synthetic indoor environments --- procedurally generated~\cite{deitke2022procthor}, professionally or human-authored~\cite{fu20213dfront,khanna2024hssd}, and synthesized by generative scene models~\cite{paschalidou2021atiss,zhai2023commonscenes,tang2024diffuscene,yang2024holodeck,yang2024physcene,yang2024scenecraft,wang2024architect,li2024discene}. In robotics, tabletop and embodied-simulation pipelines likewise instantiate and rearrange large collections of rigid assets for interaction, manipulation, and policy learning~\cite{greff2022kubric,szot2021habitat2,gu2023maniskill2,wang2024gensim,wang2024robogen,lee2025dynscene}. However, assembling independently modeled assets into a shared layout does not by itself guarantee mesh-level non-penetration: interpenetration and implausible placement are among the reported failure modes of generated scenes~\cite{zhai2023commonscenes}, and physics- and collision-aware guidance in scene synthesis sets out to reduce them~\cite{yang2024physcene}. Resolving those overlaps with minimal layout change is therefore the step that turns generated content into simulation-ready scenes for physics simulation~\cite{todorov2012mujoco, coumans2021pybullet, isaacgym, Li2020IPC, Lan2022ABD}, motion planning~\cite{schulman2014trajopt}, and robot-policy training~\cite{gu2023maniskill2, wang2024robogen}. In our deep-overlap initialization tests (App.~\ref{sec:exp_engine_frontend_app}), standard physics engines~\cite{todorov2012mujoco, coumans2021pybullet, isaacgym} respond with large corrective impulses that rapidly displace the bodies, and barrier-based solvers such as IPC presuppose an intersection-free state~\cite{Li2020IPC}.

Existing approaches to penetration resolution fall into three broad categories, each with characteristic trade-offs.
\emph{Projection methods} detect overlapping pairs and iteratively push them apart along estimated contact normals; their behavior depends on the quality of the contact points and normals~\cite{Erleben2018Methodology}, and in our dense non-convex tests our projection baseline (PD-PGS, Sec.~\ref{sec:exp_setup}) does not always reach feasibility within its budget (App.~\ref{sec:exp_sphere}).
\emph{Dynamics-based methods}, including recent augmented-penalty solvers such as AVBD~\cite{Giles2025AVBD} and intersection-repair schemes such as ISIR~\cite{Jang2025ISIR}, use forward simulation or gradient-flow energies to push bodies apart. However, in static deep-overlap benchmarks, the resulting motion can move bodies far from the intended layout. \emph{Optimization-based formulations}~\cite{Otaduy2009Implicit, schulman2014trajopt, Li2020IPC, Lan2022ABD} treat contact feasibility as constrained optimization --- originally for contact dynamics or trajectory planning --- and can recover low-displacement solutions when adapted to static repair, but the adapted formulations either rely on expensive nonlinear programming (Drake with Ipopt~\cite{wachter2006ipopt}), global contact quadratic programs (QPs) whose single linearization is unreliable at deep initial penetration and must be repaired by repeated outer re-linearization, or IPC log-barrier solvers that require a full Newton stack --- contact barrier, continuous collision detection (CCD) line search, backtracking, and a stiff linear solve (ABD~\cite{Lan2022ABD}).
Where these methods do handle deep or hard contact, they do so by substepping in time under full dynamics or by barrier-guarded Newton steps; in this paper, we instead propose to substep along a \emph{scale} homotopy, which needs no dynamics and keeps every subproblem a shallow-contact convex QP (Sec.~\ref{sec:progressive_scaling}--\ref{sec:per_step_qp}).
Signed-distance-field approaches~\cite{Macklin2019SDF} provide useful distance and gradient queries, but require object-level field construction with memory--resolution trade-offs or precomputation; we instead use on-the-fly mesh proximity.

In this paper, we address the infeasible initial state by introducing a scale continuation from an evaluator-checked separated configuration: we propose \name: \emph{S}caling \emph{for} Rigid-Body Interpenetration \emph{R}esolution. Specifically, \name first scales all candidate bodies down to a small initial scale, at which the layout becomes \emph{penetration-free}, and then gradually scales them back to their original size while optimizing their reference-center positions (see Fig.~\ref{fig:method_overview}); the optional variant of Sec.~\ref{sec:exp_rotation} also updates orientation. By shrinking each rigid body to a sufficiently small fraction of its intended size, most practical layouts become intersection-free; degenerate reference-center coincidences are handled by the perturbation rule of Sec.~\ref{sec:progressive_scaling}.
We then progressively scale the bodies back to full size, at each increment solving a small sparse QP that computes the minimum-norm incremental displacement needed to resolve newly activated contacts.
The continuation replaces one deep-overlap linearization with a sequence of mostly shallow-contact QPs; stale or newly activated contacts are recovered by periodic re-detection and the final refinement.
The per-step minimum-norm correction --- rather than a previous-step proximal term or a deep-penetration impulse --- is what empirically keeps \name's root-mean-square displacement (RMSD) relatively low: every QP pushes bodies only as far as the current shallow-contact constraints require, and the active-set graph at each scaling step stays sparse in our benchmarks.

Exposed in this form, progressive scaling admits an analytical acceleration: under pure uniform scaling at fixed poses, every surface-vertex trajectory is \emph{affine} in the uniform scale parameter $s\in(0,1]$.
For a contact pair $(i,j)$ with unit contact direction $\mathbf{n}_{ij}$ (from body $i$ toward body $j$), the gap between its current closest points can therefore close by at most $e_i(\mathbf{n}_{ij})+e_j(-\mathbf{n}_{ij})$ per unit of scale, i.e.\ its rate of change is bounded below by $-e_i(\mathbf{n}_{ij})-e_j(-\mathbf{n}_{ij})$, where $e_i(\mathbf{n})$ is how far body $i$ reaches from its reference center in the direction $\mathbf{n}$ (its clamped one-sided support, defined in Sec.~\ref{sec:linearized_constraints}; the bound is derived in Sec.~\ref{sec:conservativeness}).
This observation has two practical consequences.
First, a closed-form lower bound on the first-contact scale between any pair lets us prune well-separated pairs before any narrow-phase query runs.
Second, between full narrow-phase invocations the frozen-witness gap prediction for every known pair can be updated analytically by projecting the QP-predicted displacement onto the cached contact normal, so the expensive mesh-proximity primitive runs only every few scale steps.
Combined with a coarse uniform scale schedule, adaptive skipping at contact-free scales, and a sparse QP that only introduces variables for bodies in the active-contact set, these accelerations reduce both the number of QP variables and the frequency of expensive mesh-proximity queries in the query-dominated regime analyzed in App.~\ref{sec:theory}.
For tabletop generated-asset repair, the same QP can also absorb hard scene-structure constraints: we return every object to upright, attach its lowest point to a common support plane, and let \name resolve interpenetration through admissible in-plane translations and yaw rotations.
We summarize our contributions below:

\begin{itemize}[leftmargin=1.2em, itemsep=0.2em, topsep=0.2em, parsep=0pt, partopsep=0pt]
  \item A scale-continuation formulation for static rigid-body interpenetration repair, in which full-scale recovery is decomposed into a sequence of minimum-norm convex contact QPs with a conservative one-sided closure model.
  \item A practical path-following algorithm that reduces exact mesh queries using a conservative scale-event bound and frozen-witness gap predictions, then closes with a full-scale evaluator check and bounded refinement. 
  \item A broad evaluation across public mesh collections and different application settings, together with ablations and analyses. 
\end{itemize}

\section{Related Work}
\label{sec:related}
\paragraph{Penetration Resolution as Optimization.}
Resolving an infeasible configuration of interpenetrating rigid bodies into a feasible one is an old problem with many partial solutions.
\emph{Projection methods} iteratively separate overlapping pairs along estimated contact normals; they are fast, but their behavior depends on the quality of the local contact points and normals, which is itself hard to guarantee on complex meshes~\cite{Erleben2018Methodology}; in our experiments the iteration fails to converge within its budget on dense non-convex entanglement. Velocity-level projection formulations such as Staggered Projections~\cite{Kaufman2008SCF} enforce frictional contact constraints within dynamic simulation and target a different setting from static repair. Inside a running simulator, position-level error is classically kept small by constraint stabilization~\cite{baumgarte1972} and post-stabilization projections~\cite{cline2003post}, which remove per-step drift; the deep static overlap addressed here lies far outside that regime.
\emph{Hard-constraint nonlinear programming (NLP)} formulations such as Drake's~\cite{drake} minimum-distance constraints under Ipopt~\cite{wachter2006ipopt} satisfy the constraints to tolerance when they converge, but scale poorly in our benchmark.
The \emph{global contact QP/linear complementarity problem (LCP)} baseline linearizes the distance constraints at the current pose and re-linearizes one large QP until feasible, so its accuracy is limited by the error of that linearization at depth. We solve those QPs with OSQP~\cite{stellato2020osqp}.
\emph{Soft-penalty continuation} ramps a penalty on penetration depth; our baseline follows TrajOpt's hinge collision loss and outer-loop penalty continuation~\cite{schulman2014trajopt}, with a squared hinge as our static adaptation. It converges locally in practice, but every objective evaluation of its L-BFGS-B inner solve re-queries all $N(N{-}1)/2$ pairs, which is what limits it at scale.
Hard-contact solvers built on barriers advance in \emph{time} under full dynamics with safeguarded steps~\cite{Li2020IPC, lan2022pfpd}; in this paper we instead substep in \emph{scale}: penetration resolution becomes path-following continuation along a scale parameter.

\paragraph{Penalty-Force and Gradient-Flow Methods.}
AVBD~\cite{Giles2025AVBD} extends vertex block descent with an augmented-Lagrangian formulation that supports hard constraints, stiff systems, and stable rigid-body stacking; ISIR~\cite{Jang2025ISIR} repairs self-intersections of a static surface mesh by flowing its vertex positions along local signed tangent-point energies.
The AVBD implementation we benchmark is the official 3D demo with oriented-bounding-box (OBB)--OBB separating-axis (SAT) collision; in our static non-convex layouts this proxy can inflate apparent overlap and produce substantially larger displacements than mesh-level optimization baselines.
Our benchmark uses a rigid-pose adaptation of ISIR that maps its vertex-level updates to per-body rigid poses; Sec.~\ref{sec:exp_setup} provides the implementation details and evaluation results.
The two target different settings --- dynamic contact simulation for AVBD, deformable surface repair for ISIR --- and we report how they behave once adapted to static multi-body resolution.

\paragraph{Reaching Feasibility by Flows and Shrink-and-Recover.}
A complementary family reaches an intersection-free state by \emph{flowing} the geometry there and preserving feasibility once attained.
Cloth untangling recovers from intersecting states via global intersection analysis~\cite{baraff2003untangling} or intersection-contour minimization~\cite{volino2006icm}; history-free collision response separates initially penetrating oriented surface pairs without trajectory history~\cite{Ye2014HistoryFree}; and air meshes~\cite{Muller2015PBD} restore separation through an auxiliary volume mesh.
In geometry processing, one line preserves local~\cite{schuller2013lim} or global~\cite{smith2015bijective, fang2021injective} injectivity from a feasible initialization and never leaves the feasible set, while another explicitly untangles non-injective or foldover inputs~\cite{du2020lifting, garanzha2021foldover}. Tangent-point energies drive self-avoiding flows of curves and surfaces~\cite{yu2021repulsivecurves, yu2021repulsivesurfaces}; Repulsive Shells builds the same repulsion into a shape-space metric for collision-aware shell deformation~\cite{sassen2024repulsiveshells}; \citet{Chen2023ShortestPath} compute shortest paths to the boundary of self-intersecting meshes; and, most recently, shape-and-mesh repulsion energies enforce embeddedness for surfaces and multi-object configurations~\cite{Minarcik2026Untangling}. Our Rigid-ISIR baseline (Sec.~\ref{sec:exp_setup}) adapts ISIR's~\cite{Jang2025ISIR} vertex-level flow to rigid poses.
\name\ shares the shrink-to-feasible strategy, but follows a homotopy in a single scale parameter: vertex trajectories are affine in $s$ under pure inflation at fixed poses, so each step is a small convex QP and the next fixed-pose inflation event admits a closed-form conservative bound rather than being discovered by integrating a stiff flow.

\paragraph{Incremental Potential Contact.}
IPC~\cite{Li2020IPC} provides a log-barrier contact energy and a CCD-filtered line search that together make intersection-free \emph{trajectories} a built-in invariant of Newton-style solvers.
Extensions to codimensional geometry~\cite{Li2021CIPC} and rigid/affine bodies~\cite{Ferguson2021RigidIPC, Lan2022ABD} have made IPC a standard tool for robust contact in deformable and rigid-body simulation.
IPC also presupposes an intersection-free state --- its barrier acts on unsigned primitive distances and its CCD-filtered line search only preserves separation --- so it can maintain an intersection-free state but cannot establish one from an interpenetrating input; in the engine-initialization experiment of Sec.~\ref{sec:exp_engine_frontend}, \name\ produces this precondition from interpenetrating inputs.
The full Newton and feasibility-preserving line-search stack is heavy for static repair (the linear solver itself is an implementation choice), however: once bodies start separated at small scale and the schedule keeps modeled contacts shallow, one linearization per step suffices in our benchmarks.
When a step is hard to resolve, these solvers shorten it: within an implicit time step, continuous collision detection bounds the nonlinear line search so the iterates stay intersection-free~\cite{Li2020IPC}, and recent barrier variants set the barrier stiffness dynamically from the elasticity~\cite{ando2024cubic}. \name\ subdivides along the scale axis instead, where Sec.~\ref{sec:soi} provides a conservative event bound for pure inflation at fixed poses.

\paragraph{Affine Body Dynamics.}
\citet{Lan2022ABD} represent each body by its 12 affine DOFs $(\mA\in\R^{3\times3}, \vp\in\R^3)$ and softly enforce $\mA^\top\mA=\mI$ via an orthogonality energy.
The representation is smooth and differentiable and couples well with IPC.
Our work does \emph{not} require the full ABD machinery: the schedule is designed to keep modeled contacts shallow (unmodeled contacts are caught by exact refreshes and final verification), and the small per-step QP operates directly on body reference centers; the default solver fixes orientations, and the optional rotational variant updates them on $\mathrm{SO}(3)$ (App.~\ref{ap:6dof}).

\paragraph{Mesh-Level Collision.}
Fast mesh-level distance and intersection queries are provided by several libraries, including the Flexible Collision Library (FCL)~\cite{pan2012fcl} for bounding-volume-hierarchy (BVH) distance and collision, libigl~\cite{jacobson2018libigl} for computational-geometry primitives, and CGAL~\cite{cgal} for robust polygon-mesh processing. We use FCL, though the algorithm only requires mesh-distance, collision, witness, and normal queries; it does not assume the backend supplies a globally smooth signed-distance function.

\paragraph{Irregular Packing and Layout.}
A related line of work studies non-overlapping layouts of irregular shapes for mosaics, collage, artistic packing, and decorative surface design.
\citet{reinert2013interactive} generate artistic 2D packing layouts from user examples; \citet{hu2015surface} synthesize surface mosaics by starting from shrunk, non-overlapping irregular tiles and alternating continuous configuration optimization with combinatorial tile selection to improve coverage.
Later work extends irregular packing to 3D object containers~\cite{ma2018packing} and learning-based scalable packing~\cite{xue2023learning}.
Scaling there is a layout device: Hu et al.\ start tiles shrunk and optimize their scales, whose final values need not equal one, while the other systems place fixed-size shapes; here the bodies are deeply interpenetrating in 3D and must return to their given full scale.

\paragraph{Continuation and Homotopy Methods.}
Progressive scaling is an instance of \emph{numerical continuation}~\cite{allgower2003introduction}, where a difficult problem is connected to an easy one through a parameterized family.
Under fixed-pose uniform scaling, vertex trajectories are affine in $s$, yielding analytically tractable event bounds. Between exact detections, the frozen-witness gap model is also affine, enabling the cache updates of Sec.~\ref{sec:cache} and the local QP warm starts of Sec.~\ref{sec:schedule}.
\section{Method}

\name resolves deep rigid-body interpenetration by first shrinking the bodies to a small initial scale at which the layout is penetration-free, and then progressively restoring them to full scale while applying minimum-norm incremental contact corrections. Sec.~\ref{sec:problem} states the reference problem and explains why linearizing it directly is unreliable under deep penetration. Sec.~\ref{sec:progressive_scaling} introduces the scale continuation that replaces one deep correction with a sequence of shallow-contact steps. Sec.~\ref{sec:per_step_qp} derives the minimum-norm contact QP solved at each step, whose frozen-witness closure term is conservative by construction. Sec.~\ref{sec:accelerated} accelerates the continuation with scale-event skipping, analytical distance updates, and warm-starting, and Sec.~\ref{sec:tail} handles the residual contacts of the full-scale state with bounded tail refinement.

\subsection{Problem Formulation}
\label{sec:problem}

We consider a system of $N$ rigid bodies with a user-prescribed initial state $\mathbf{x}^0 = (\mathbf{p}^0, \mathbf{R}^0)$, where $\mathbf{p}^0 \in \mathbb{R}^{3N}$ and $\mathbf{R}^0 \in \mathrm{SO}(3)^N$ denote the initial positions and orientations, respectively. In practical scenarios---such as procedural scene generation or physics initialization---this input configuration $\mathbf{x}^0$ is often physically invalid, featuring severe mutual interpenetrations. A natural reference objective is to eliminate all penetrations while changing the intended layout as little as possible:
\begin{align}
  \min_{\mathbf{p}^\star, \mathbf{R}^\star} \quad & \sum_{i=1}^N \left( \frac{1}{2} \|\mathbf{p}_i^\star - \mathbf{p}_i^0\|^2 + \frac{\beta}{2} d_{\mathrm{SO}(3)}(\mathbf{R}_i^\star, \mathbf{R}_i^0)^2 \right) \label{eq:reference_problem} \\
  \text{s.t.} \quad & d_{ij}(\mathbf{x}^\star) \ge \hat{d}, \quad \forall i \neq j, \notag
\end{align}
where $d_{ij}(\mathbf{x}^\star)$ denotes the signed clearance between bodies $i$ and $j$ at state $\mathbf{x}^\star$ --- positive when the two closed bodies are separated, negative when their interiors overlap --- $\hat{d} > 0$ is a clearance margin, $d_{\mathrm{SO}(3)}$ is the geodesic distance on the rotation group, and $\beta$ is a weighting factor. $d_{ij}$ is a conceptual quantity, and Eq.~\ref{eq:reference_problem} serves only as the reference goal; no solver evaluates it globally. One option is to linearize its constraints at the current state and solve the resulting QP directly; Sec.~\ref{sec:direct} explains why this is unreliable under deep penetration. \name\ instead follows a scale continuation (Sec.~\ref{sec:progressive_scaling}) and solves a sequence of minimum-norm \emph{incremental} QPs.

\subsubsection{Limits of Direct Linearization.}
\label{sec:direct}
\label{sec:theory_linearization}

Direct QP methods resolve overlaps by repeatedly linearizing the non-convex
distance constraint at the current state $\mathbf{x}$. For this local argument let
$\boldsymbol{\xi}$ be Euclidean pose coordinates in a fixed local chart --- translation alone
in the deployed solver, plus a tangent-space rotation increment in the optional
6-DOF variant --- and write $\mathbf{x}\oplus\Delta\boldsymbol{\xi}$ for the pose reached from $\mathbf{x}$ by
that increment. Then
\begin{equation*}
  d_{ij}(\mathbf{x}\oplus\Delta \boldsymbol{\xi})
  \approx
  d_{ij}(\mathbf{x})+ \mathbf{J}_{ij}(\mathbf{x})\,\Delta \boldsymbol{\xi}
  \geq \hat d ,
\end{equation*}
with $\mathbf{J}_{ij}$ the Jacobian of the pair distance in those coordinates.
When $d_{ij}(\mathbf{x})\ll 0$ --- that is,
when the penetration depth $\delta_{ij}(\mathbf{x}):=-d_{ij}(\mathbf{x})$ is large relative to
the local geometric scale and the pair is deeply rather than shallowly
interpenetrating --- this first-order proxy can be unreliable for two reasons. \textbf{Geometric discontinuity.}
On a stable smooth witness branch, the gradient of $d_{ij}$ w.r.t.\ the relative translation $\mathbf{p}_j-\mathbf{p}_i$ is the unit direction
defined by the closest-point pair. Under deep penetration, the distance query
can approach a medial-axis or feature-switch region, where the closest feature
is non-unique and the normal can change abruptly under small perturbations
(see the detailed discussion in Fig.~\ref{fig:medial_axis}). \textbf{Linearization residual.}
Even on a stable $C^2$ distance branch --- for instance, when the closest feature of a vertex of body $i$ remains one spherical patch of body $j$ throughout the update, so that the distance is a smooth function of the pose --- the tangent approximation discards
the nonlinear remainder (written here for a translation-only setting,
where $\oplus$ is ordinary addition and $\Delta\boldsymbol{\xi}=\Delta \mathbf{x}$)
\[
R_{ij}(\Delta \mathbf{x})
=
d_{ij}(\mathbf{x}+\Delta \mathbf{x})-d_{ij}(\mathbf{x})
-\nabla d_{ij}(\mathbf{x})^{\top}\Delta \mathbf{x} .
\]
By Taylor's theorem, its magnitude is locally bounded by
$\tfrac12 L_{\mathrm{seg}}\|\Delta \mathbf{x}\|^{2}$, provided that the update
segment remains on the same branch, where $L_{\mathrm{seg}}$ uniformly
bounds the Hessian norm of the distance function along that segment
(Lemma~\ref{lem:taylor_residual}, App.~\ref{ap:theory_linearization}).
Deep penetration often requires a large corrective update and may also place
the query near a medial-axis or feature-switch region, increasing the residual
through $\|\Delta \mathbf{x}\|$ and $L_{\mathrm{seg}}$, respectively. Scale
continuation instead replaces one deep-overlap linearization with a sequence
of mostly shallow-contact subproblems, for which the local residual is
quadratic in the realized per-step update magnitude (App.~\ref{ap:theory_linearization}).

\begin{figure*}[t]
  \centering
  \includegraphics[width=\textwidth]{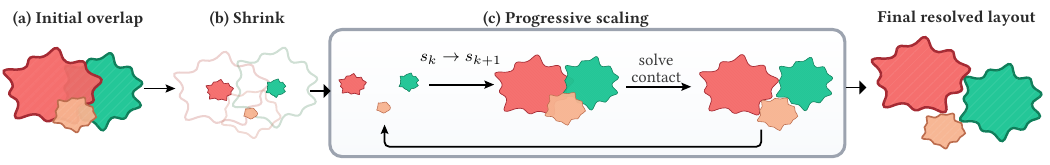}
  \caption{Method overview. Given an initially interpenetrating layout (a), \name\ shrinks each body about its reference center to the fixed initial scale $s_{\min}$ (b), at which the layout is penetration-free, then restores full scale along a continuation path (c) in which every step inflates the geometry and solves a minimum-norm contact QP. A final full-scale evaluator pass identifies residual contacts, which are handled by bounded tail refinement.}
  \Description{A pipeline diagram showing an initial overlapping set of three irregular rigid bodies, a shrink step that makes them small and separated, a progressive scaling loop with scale advancement and contact solving, and a final separated full-scale layout.}
  \label{fig:method_overview}
\end{figure*}

\subsection{Progressive Scaling}
\label{sec:progressive_scaling}
\name\ approaches the reference problem by continuation in a single scale parameter; Fig.~\ref{fig:method_overview} summarizes the construction. Our key insight is that direct first-order resolution from deep, global interpenetration is poorly conditioned by the aforementioned errors, whereas shallow-contact resolution is much more stable. To bridge this gap, we reformulate the problem using numerical continuation along a uniform scale parameter $s \in (0, 1]$. Let $\bar{\mathbf{v}}$ denote a rest-shape vertex of body $i$ in its local frame. We define its world-space trajectory during scaling as:
\begin{equation*}
  \mathbf{v}_i(s) = \mathbf{p}_i + s\,\mathbf{R}_i \bar{\mathbf{v}}.
\end{equation*}

Here $\mathbf{p}_i$ is the world-space position of the same reference center used for scaling and for displacement measurement. Under uniform scaling about this center, body $i$ at scale $s$ is contained in a ball of radius $s r_i$, centered at $\mathbf{c}_i:=\mathbf{p}_i$, where $r_i:=\max_{\bar{\mathbf{v}}}\|\bar{\mathbf{v}}\|$ is its full-scale enclosing radius about that center (a scalar, distinct from the rotation matrix $\mathbf{R}_i$).
A sufficient condition for the bounding spheres of every pair to be separated by at least the internal margin $\hat d$ at the initial scale $s_{\min}$ is
\begin{equation}
  \label{eq:smin_safe}
  s_{\min}
  \;<\;
  \min_{i<j}
  \frac{\|\mathbf{c}_i-\mathbf{c}_j\|-\hat d}{r_i+r_j}.
\end{equation}
We recenter every mesh at the center of its normalization box (App.~\ref{ap:scene_gen}) and use that fixed point as the body reference center throughout.

We fix $s_{\min}{=}0.01$ in all reported experiments and use Eq.~\ref{eq:smin_safe} as a check. Any pair with
\[
  \|\mathbf{c}_i-\mathbf{c}_j\| \;<\; \hat d+s_{\min}(r_i+r_j)
\]
is perturbed apart deterministically along its pair axis to the distance $\hat d+s_{\min}(r_i+r_j)+\varepsilon_{\mathrm{sep}}$, with a separation padding $\varepsilon_{\mathrm{sep}}=10^{-6}$ (App.~\ref{ap:hyperparams}) and a fixed axis for exactly coincident centers (App.~\ref{ap:scene_gen}). The implementation visits the offending pairs once rather than iterating the sweep to a fixed point, and the resulting displacement is charged to the reported RMSD.

Near-coincident reference centers remain admissible but make the pair direction $(\mathbf{c}_j-\mathbf{c}_i)/\|\mathbf{c}_j-\mathbf{c}_i\|$ poorly conditioned; see our discussion on this limitation in Sec.~\ref{sec:conclusion}.

Progressive scaling is not an unconditional speedup: it pays off when the exact mesh-query time it saves by evaluating contacts only at scale events exceeds the extra continuation work (per-scale QPs, caching, tail refinement); the cost decomposition in App.~\ref{sec:theory_speed} (Eq.~\ref{eq:exact_speed_condition_app}) makes this precise, and explains why the one-shot Direct-QP ablation can be faster at small scale while leaving residual penetration (Tab.~\ref{tab:abl_scaling}).

\subsection{Per-Step Contact QP}
\label{sec:per_step_qp}

With progressive scaling in place, we formulate the local subproblem for the transition from $s_k$ to $s_{k+1}=s_k+\mathit{ds}$, where $\mathit{ds}>0$ is the scale increment. Let $\widetilde d_{ij}^{\,k}$ denote the scalar the deployed FCL~\cite{pan2012fcl} detector returns for the pair at the current scale --- a \emph{full detection}, i.e.\ a fresh geometry query: the positive boundary distance when FCL reports separation, the negative of the largest reported local contact depth when it returns collision contacts, and zero for an empty contact set. This piecewise score assembles the QP.

Sec.~\ref{sec:linearized_constraints} derives the linear constraint that one contact pair contributes, Sec.~\ref{sec:active_set} fixes which pairs contribute, Sec.~\ref{sec:objective} states the resulting QP, and Sec.~\ref{sec:conservativeness} records what the linearization does and does not guarantee.

\begin{figure}[t]
  \centering
  \includegraphics[width=0.62\columnwidth]{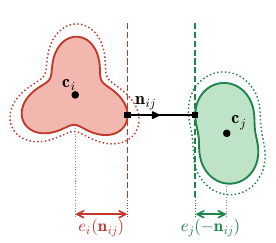}
  \caption{One-sided supports of a contact pair. With $\mathbf{n}_{ij}$ the contact direction from body $i$ to body $j$, $e_i(\mathbf{n}_{ij})$ is how far body $i$ reaches from its reference center toward body $j$, and $e_j(-\mathbf{n}_{ij})$ how far body $j$ reaches toward body $i$, i.e.\ along $-\mathbf{n}_{ij}$; inflation by $\mathit{ds}$ (dotted) closes the gap between the frozen witnesses (squares) by at most $\mathit{ds}\,[e_i(\mathbf{n}_{ij})+e_j(-\mathbf{n}_{ij})]$.}
  \Description{Two irregular 2D bodies with their reference centers, the contact normal between their closest points, and dashed support lines marking how far each body reaches toward the other; dotted outlines show the inflated bodies.}
  \label{fig:support_2d}
\end{figure}
\subsubsection{Linearized Contact Constraints}
\label{sec:linearized_constraints}
To formulate a convex subproblem, we linearize the contact condition w.r.t.\ the translational corrections $\Delta \mathbf{p}_i, \Delta \mathbf{p}_j \in \mathbb{R}^3$. For a contact pair $(i,j)$, let $\mathbf{n}_{ij}\in\mathbb{R}^3$ be the unit contact direction returned by the detector, pointing from body $i$ toward body $j$:
\begin{itemize}[leftmargin=*,nosep]
  \item \emph{Separated pair.} $\mathbf{n}_{ij}$ is the direction of the closest-point segment, $(\mathbf{w}_j-\mathbf{w}_i)/\|\mathbf{w}_j-\mathbf{w}_i\|$ for the closest points $\mathbf{w}_i$, $\mathbf{w}_j$ (the \emph{witness points}, or witnesses, of the pair); where that segment meets a face interior it agrees with the face normal, and at edges and vertices it is simply the closest-point direction.
  \item \emph{Penetrating pair.} $\mathbf{n}_{ij}$ is the separating direction the detector reports for the deepest contact, oriented toward body $j$.
  \item \emph{Frozen across the step.} $\mathbf{n}_{ij}$ is held fixed over the increment and until the next full detection (Sec.~\ref{sec:cache}), so it need not remain normal to either surface as the bodies move; it is a direction between the bodies, not a mesh normal, although we follow common usage in contact solvers and call it the contact normal.
\end{itemize}
Reversing the roles only reverses the arrow, $\mathbf{n}_{ji}=-\mathbf{n}_{ij}$, so $e_j(-\mathbf{n}_{ij})=e_j(\mathbf{n}_{ji})$ below, every pair quantity is symmetric in $(i,j)$, and each unordered pair carries one direction and one constraint row. For a normal $\mathbf{n}$, define the clamped one-sided support
\begin{equation*}
  e_i(\mathbf{n}) \;:=\; \max\Bigl\{0,\; \max_{\bar{\mathbf{v}}}\, \mathbf{n}^{\top} \mathbf{R}_i \bar{\mathbf{v}}\Bigr\},
\end{equation*}
where $\bar{\mathbf{v}}$ ranges over the body's vertex offsets from its scaling reference center, $\mathbf{R}_i$ is its rotation, and $e_j$ is defined in the same way for body $j$. Fig.~\ref{fig:support_2d} illustrates the construction. Geometrically, $e_i(\mathbf{n})$ is how far body $i$ reaches from its reference center in the direction $\mathbf{n}$: under uniform inflation by $\mathit{ds}$, no surface point of body $i$ advances along $\mathbf{n}$ by more than $\mathit{ds}\,e_i(\mathbf{n})$. Only the side facing the other body matters, which is why the pair uses $e_i(\mathbf{n}_{ij})$ and $e_j(-\mathbf{n}_{ij})$; the clamp at zero covers the case in which the reference center lies beyond the body's extent along $\mathbf{n}$ and only makes the bound more conservative. At a frozen witness pair and normal, the increment $\mathit{ds}$ closes the projected gap by at most $\mathit{ds}\,[e_i(\mathbf{n}_{ij})+e_j(-\mathbf{n}_{ij})]$, which gives the frozen-witness predictor
\begin{equation}
  \label{eq:linearized_gap}
  \ell_{ij}^{\,k+1}
  \;:=\;
  \widetilde d_{ij}^{\,k}
  + \mathbf{n}_{ij}^{\top}(\Delta \mathbf{p}_j - \Delta \mathbf{p}_i)
  - \mathit{ds}\,[e_i(\mathbf{n}_{ij}) + e_j(-\mathbf{n}_{ij})].
\end{equation}
This is a conservative projected-gap prediction at the frozen witness, not an identity for the post-step detector score: the supports upper-bound the true closure rate, and the witness features may switch during the step. The prediction bounds the projected gap between the frozen material witnesses from below whenever its intercept $\widetilde d_{ij}^{\,k}$ does not exceed that gap. This holds with equality for a fresh closest-point pair of a separated body pair; for penetrating detector rows it is a modeling convention, and in neither case is it a bound on the post-step mesh clearance. Enforcing $\ell_{ij}^{\,k+1} \ge \hat{d}$ (Eq.~\ref{eq:linearized_gap}) yields, for every active pair $(i,j)$ (the active set is defined next), a linear inequality on the displacements:
\begin{equation*}
  \mathbf{n}_{ij}^{\top} (\Delta \mathbf{p}_j - \Delta \mathbf{p}_i) \ge b_{ij},
\end{equation*}
where $b_{ij} := \hat{d} - \widetilde d_{ij}^{\,k} + \mathit{ds}\,[e_i(\mathbf{n}_{ij}) + e_j(-\mathbf{n}_{ij})]$ is the required separation gap plus the anticipated inflation closure.

\subsubsection{Active Set and Witness Conventions}
\label{sec:active_set}
With the closure coefficient $E_{ij}=e_i(\mathbf{n}_{ij})+e_j(-\mathbf{n}_{ij})$ of Eq.~\ref{eq:linearized_gap} available, the active-contact set of step $k$ is defined by the frozen-witness predictor rather than by an unknown post-step score,
\begin{equation*}
\begin{aligned}
  \mathcal{A}_k \;=\; \bigl\{(i,j) \;:\; {}& \widetilde d_{ij}^{\,k} - \mathit{ds}\,E_{ij} < \hat d \\
  & \text{and a valid witness/normal is available}\bigr\},
\end{aligned}
\end{equation*}
For each active pair we maintain \emph{one representative witness constraint per body pair}: when the pair is separated, the closest witness pair reported by the detector; when it penetrates, the deepest of the contacts the detector returns (up to $16$ per query). A single QP therefore does not jointly resolve all simultaneous primitive contacts of a non-convex pair; the remaining branches surface at the next full detection --- across scale steps, at refresh boundaries, and during tail refinement --- and are re-linearized then. Convergence is assessed empirically in Sec.~\ref{sec:exp}. The normal is fixed at the last full detection and held constant across the increment; it is refreshed at the next full detection (every $M$-th scale step, with $M$ the refresh interval of Sec.~\ref{sec:cache}). A pair whose collision query reports contact but returns an empty contact set is scored as touching by the evaluator and generates no QP row; it is reconsidered at the next full detection.

\subsubsection{Optimization Objective}
\label{sec:objective}
At each step we seek the minimum-norm translational correction that resolves the currently active constraints. Because every step solves a minimum-norm incremental update (not an absolute proximal term toward the initial pose), the cumulative displacement $\sum_k \Delta \mathbf{p}^{\star,k}$ is controlled only indirectly by the sequence of local corrections. For newly activated near-margin rows, the right-hand side $b_{ij}$ is typically of the order of the modeled inflation closure $\mathcal{O}(\mathit{ds}\,r_{\max})$, $r_{\max}=\max_i r_i$; already-penetrating or cached rows need not obey this scaling, and a poorly conditioned active set can amplify the minimum-norm solution, so a small $\mathit{ds}$ does not by itself guarantee a small correction. Empirically, the combination of per-step minimum-norm correction and progressive scaling keeps the total displacement RMSD nearly invariant over the tested scene sizes (Tab.~\ref{tab:scaling}). The translation-only QP is:
\begin{equation}
\label{eq:qp_translation}
\begin{aligned}
  \min_{\Delta \mathbf{p}^\star} \quad & \frac{1}{2} \sum_{i=1}^N \|\Delta \mathbf{p}_i^\star\|^2 \\
  \text{s.t.} \quad & \mathbf{n}_{ij}^{\top} (\Delta \mathbf{p}_j^\star - \Delta \mathbf{p}_i^\star) \ge b_{ij}, \quad \forall (i,j) \in \mathcal{A}_k.
\end{aligned}
\end{equation}
Eq.~\ref{eq:qp_translation} is the form used by every default translation-only \name\ run reported in this paper. Two variants are used elsewhere: the 6-DOF extension of App.~\ref{ap:6dof} adds an angular variable per body, weighted by a body length scale that bounds the surface displacement a rotation can cause, under a small-angle box constraint (the ablation in Tab.~\ref{tab:abl_rotation} shows a ${\sim}15\times$ wall-time overhead for a $13\%$ RMSD reduction on Kubric, so translation-only remains the default); and for the tabletop generated-asset application of Sec.~\ref{sec:exp_generated_asset}, the upright-on-plane variant of App.~\ref{sec:upright_plane} keeps every object on a common support plane and brings it upright at full scale, optimizing only in-plane translation and yaw.

\subsubsection{Conservativeness}
\label{sec:conservativeness}
The closure term is \emph{one-sided}. With $E_{ij}$ as above, the exact closure rate of the projected gap of a stable witness pair $(\bar{\mathbf{v}}_i^{\mathrm{w}},\bar{\mathbf{v}}_j^{\mathrm{w}})$ under uniform inflation is $c_{ij}=\mathbf{n}_{ij}^{\top}\mathbf{R}_i\bar{\mathbf{v}}_i^{\mathrm{w}}+(-\mathbf{n}_{ij})^{\top}\mathbf{R}_j\bar{\mathbf{v}}_j^{\mathrm{w}}$. Each witness lies inside a triangle and is therefore a convex combination of its vertices, so a linear functional over the body attains its maximum at a vertex and $c_{ij}\leq E_{ij}$; the clamp at zero can only enlarge the supports further. The pure-inflation closure term is therefore conservative at the frozen witness and normal, at the cost of a first-order slack $\mathit{ds}\,(E_{ij}-c_{ij})\geq0$ between the model and the geometry.

Separately, on a fixed $C^2$ distance branch, Taylor's theorem bounds the local remainder by $\tfrac12 L_{\mathrm{seg}}h_{ij}^2$ in the per-step relative motion $h_{ij}=\|\Delta \mathbf{p}_j^\star-\Delta \mathbf{p}_i^\star\|+\mathit{ds}\,(r_i+r_j)$, rather than in the initial penetration depth (Lemma~\ref{lem:taylor_residual}, App.~\ref{ap:theory_linearization}), with the branch written in the joint coordinates $\mathbf{z}=(\mathbf{p}_j-\mathbf{p}_i,\,(r_i+r_j)\,s)$, along which $\|\Delta \mathbf{z}\|\le h_{ij}$, and $L_{\mathrm{seg}}$ bounding its Hessian on the update segment. Notice that neither statement is a per-step feasibility guarantee: the first bounds the modeled closure at a frozen witness and normal, the second bounds the linearization error on one smooth branch, and witness switches within a step, drift of the cached normals, and the piecewise detector score lie outside both. Small increments keep the modeled per-step motion small, which is what the continuation is designed for; violations that survive a step are caught afterward --- re-detection at each refresh re-linearizes them (Sec.~\ref{sec:cache}), the tail pass corrects the full-scale state (Sec.~\ref{sec:tail}), and the shared evaluator scores every returned pose.

\subsection{Accelerated Path Following}
\label{sec:accelerated}

While the per-step quadratic program of Eq.~\ref{eq:qp_translation} is convex and locally robust, assembling its constraints requires evaluating the mesh-proximity score $\widetilde d_{ij}$ and the associated witness/contact normal $\mathbf{n}_{ij}$ for proximate mesh pairs. Computing these geometric primitives relies on BVH traversals and narrow-phase closest-point queries whose cost depends on the backend, triangle count, BVH quality, and concavity. In dense scenes the candidate graph can approach all $\mathcal{O}(N^2)$ body pairs, so these exact mesh queries dominate the CPU solver's wall time (App.~\ref{sec:exp_breakdown}). We introduce path-following accelerations that reduce full detections and solver work in the favorable regime of App.~\ref{sec:theory_speed}.

\subsubsection{Scale of Impact (SOI) Event Jumps}
\label{sec:soi}
The event rule models pure uniform inflation at fixed poses, under which each surface-vertex trajectory is affine in the scale parameter. In principle a candidate vertex--face or edge--edge primitive pair admits a polynomial-in-$s$ (degree at most three) continuous-collision condition~\cite{brochu2012ccd} whose smallest valid root above the current scale gives the exact scale of impact, with candidate roots subject to geometric filtering and persistent coplanarity handled separately. For robustness on watertight non-convex triangle meshes --- where the primitive-level cubic has many spurious roots from coplanar or degenerate triangles --- the deployed solver instead uses a bounding-sphere argument. The scaled meshes of pair $(i,j)$ are contained in the balls $B(\mathbf{c}_i,s r_i)$ and $B(\mathbf{c}_j,s r_j)$, so they cannot enter the $\hat d$ margin while $\|\mathbf{c}_j-\mathbf{c}_i\| - s\,(r_i+r_j) > \hat d$, which gives the conservative fixed-pose event scale
\begin{equation*}
  s_{\mathrm{soi}}^{\mathrm{cons}} \;=\; \frac{\|\mathbf{c}_j-\mathbf{c}_i\| - \hat d}{r_i+r_j}.
\end{equation*}
This is the per-pair form of Eq.~\ref{eq:smin_safe}, now applied at each step rather than once at $s_{\min}$; like Eq.~\ref{eq:smin_safe} it is a statement about the geometry, independent of the detector score. Two mechanisms use it. At every full detection, the culling test recomputes it from the current centers and discards any pair whose event scale lies beyond the upcoming step ($s+\mathit{ds}<0.9\,s_{\mathrm{soi}}^{\mathrm{cons}}$). Separately, ranges that are event-free under the precomputed fixed-pose model are skipped by extending the stride toward the earliest such scale, capped at $3\,\mathit{ds}_{\max}$ per jump, where $\mathit{ds}_{\max}{=}0.05$ is the schedule's base stride; after a contact-free step the schedule takes $2\,\mathit{ds}_{\max}$ instead, which can cross an event in the list. The bound underestimates the true scale of impact, so the schedule may visit a few scale points the exact primitive root would have skipped, and it is conservative only for pure inflation at fixed poses: the QP translations can bring a previously distant pair into contact. A translation-induced contact that persists to a refresh is found by the next full detection (the $0.9$ above is a $10\%$ event-scale safety factor, and the broad-phase axis-aligned bounding-box (AABB) margin is padded by $0.2\,\mathit{ds}\,\max(\mathrm{nf}_i,\mathrm{nf}_j)$, with $\mathrm{nf}_i$ the per-object normalization factor of App.~\ref{ap:scene_gen}; no additional constant is added in any reported run. Neither heuristic certifies the translation the upcoming QP will produce, since it is not yet known when detection runs) and by the tail-refinement pass. The component ablation in App.~\ref{sec:exp_soi} compares this event jump against fixed-stride schedules: relative to the tuned coarse schedule, it removes the residual contact-free steps at essentially unchanged wall time and RMSD.

\subsubsection{Analytical Distance Caching}
\label{sec:cache}
\label{sec:contact_detection}
Between full detections, the surface motion induced by uniform inflation and by the preceding QP correction is explicit, so we can propagate a frozen-witness gap prediction instead of re-querying the geometry. Let the integer $M \ge 1$ be the refresh interval. At a full detection the solver stores the measured score $\widetilde d_{ij}^{\,k}$, the witness/contact normal $\mathbf{n}_{ij}$, and the support coefficients $e_i, e_j$, and initializes $\breve d_{ij}^{\,k} := \widetilde d_{ij}^{\,k}$. For the next $M-1$ steps it bypasses BVH traversal and applies
\begin{equation}
\label{eq:cache_update}
  \breve d_{ij}^{\,k+1} = \breve d_{ij}^{\,k} + \mathbf{n}_{ij}^{\top} (\Delta \mathbf{p}_j^\star - \Delta \mathbf{p}_i^\star) - \mathit{ds}\,[e_i(\mathbf{n}_{ij}) + e_j(-\mathbf{n}_{ij})],
\end{equation}
using the scale increment of the preceding \emph{accepted} transition and the most recent QP correction $\Delta \mathbf{p}^\star$, which is carried across contact-free transitions; the $s_{\min}$ detection serves as refresh step zero, and no propagation is performed before the first transition. On these cached steps, $\breve d_{ij}^{\,k}$ takes the place of $\widetilde d_{ij}^{\,k}$ in the predictor of Eq.~\ref{eq:linearized_gap} and in the active-set test of Sec.~\ref{sec:active_set}; the QP itself is unchanged. The normal stays fixed at the value returned by the last full detection; we do not attempt to approximate a new one. Nothing in the frozen-witness model controls the normal, so we treat its drift as a heuristic: the normal moves as the bodies do, and across a refresh window that drift accumulates with the relative motion $h_{ij}$. Accordingly $\breve d_{ij}^{\,k}$ is a first-order prediction, not a newly measured score, and we flush the cache every $M$ steps to keep stale-witness drift bounded in practice; Fig.~\ref{fig:cache_rhythm} sketches this rhythm, with the residual $|\breve d_{ij}-\widetilde d_{ij}|$ growing inside each window and resetting at every refresh. App.~\ref{ap:cache_audit} reports the final outcome at each refresh interval. The default CPU-FCL configuration of \name\ uses $M{=}3$ --- main tables, cross-dataset tables, runtime breakdown, and component ablations alike. The cache ablation of Tab.~\ref{tab:abl_cache} sweeps $M$ around this default, and the upright variant (App.~\ref{sec:upright_plane}) uses no cache. The GPU-native \name-Warp pipeline (Sec.~\ref{sec:exp_setup}) re-detects at every scale step and uses no cache.

\subsubsection{Active-Set Warm-Starting}
\label{sec:schedule}
The path-following nature of our method implies that the active-contact set $\mathcal{A}_k$ typically remains constant across several consecutive scale steps. Within a cache interval with an unchanged row set and frozen normals, the QP matrix is fixed and only the right-hand side moves; at a fixed accepted state, the corresponding trial QP is piecewise affine in its right-hand-side parameter as the binding set changes (App.~\ref{ap:proof_piecewise}). This local property motivates a sensitivity-based warm start --- it is not a statement that the accepted solution along the deployed trajectory is piecewise affine in the absolute scale $s$. We exploit it by warm-starting the primal variables of the solver (e.g., OSQP) with $\mathit{ds}$ times a regularized sensitivity of the previous solve when the active-body set is unchanged (halved when it overlaps by more than half), and otherwise with the cached per-body increments. Seeding across consecutive scale steps helps keep the solver's inner iteration count roughly stable along the path.

\subsection{Tail Refinement}
\label{sec:tail}

The frozen-witness model of Sec.~\ref{sec:per_step_qp} describes one exact-query step on a stable local branch. The deployed path is not that: cached predictions and witness changes can leave a small residual by the time the continuation reaches full scale. We therefore run a bounded tail-refinement phase --- scale is locked at $s{=}1$ (setting $\mathit{ds}=0$) and we iterate correction QPs, each followed by an FCL re-evaluation that reports the pairs the detector currently scores as penetrating. The phase terminates when the detector reports zero penetrating pairs, after $K_{\mathrm{tail}}{=}20$ correction QPs, when neither the penetrating-pair count nor the deepest violation has improved for $3$ consecutive iterations, or when a correction QP returns no accepted solution. In the main comparison, scaling, and dataset benchmarks the detector reaches zero within that budget; residual pairs appear in the densest stress tests (App.~\ref{sec:exp_density}, App.~\ref{sec:exp_sphere}) and in the constrained-confinement study, where the monotone continuation fails to find a feasible path within the prescribed constraints. Zero reported penetration is thus an empirical outcome of a bounded solver, not a theorem.

\subsubsection{Full-Active-Set Correction}
A local-only residual correction would isolate the penetrating pairs and push them apart. However, in dense configurations, resolving one pair in isolation frequently moves one of its bodies into another neighboring object, creating a cascading local conflict that prevents convergence.

The tail QP therefore includes both the penetrating pairs and the separated pairs in their neighborhood. We define the tail active set as $\mathcal{A}_{\mathrm{tail}}=\{(i,j):\widetilde d_{ij} < \hat{d}\ \text{and a valid witness/normal is available}\}$, enumerated over the pairs whose bounding boxes overlap at full scale; an empty-contact zero-score pair is re-evaluated at subsequent exact detections but generates no tail-QP row, consistent with Sec.~\ref{sec:per_step_qp}. This set contains both the currently penetrating pairs ($\widetilde d_{ij} < 0$) and the separated near-contact pairs ($0 \le \widetilde d_{ij} < \hat{d}$). By imposing the non-penetration constraints collectively across $\mathcal{A}_{\mathrm{tail}}$, the QP anticipates the domino effect: it computes a unified displacement $\Delta \mathbf{p}^\star$ that resolves the penetrations while accounting for separated near-contact neighbors that could otherwise be squeezed into overlap.

\subsubsection{The Correction QP and Step Damping}
With $\mathit{ds}=0$, the anticipated inflation closure vanishes. The kinematic constraint for the QP simplifies to bridging the current gap directly:
\begin{equation*}
  \mathbf{n}_{ij}^{\top} (\Delta \mathbf{p}_j^\star - \Delta \mathbf{p}_i^\star) \ge \hat{d} - \widetilde d_{ij}, \quad \forall (i, j) \in \mathcal{A}_{\mathrm{tail}}.
\end{equation*}
While the optimization objective remains identical to the primary path-following QP (minimizing $\|\Delta \mathbf{p}^\star\|^2$), the raw output displacement $\Delta \mathbf{p}^\star$ might occasionally exceed the valid linearization radius if the local residual is exceptionally stiff.

As a heuristic safeguard on large corrections (the margin $\hat d$ is not a certified linearization radius), we apply a damping factor $\alpha \in (0, 1]$ to the QP update before modifying the body states: $\mathbf{p}_i \leftarrow \mathbf{p}_i + \alpha \Delta \mathbf{p}_i^\star$. The parameter $\alpha$ is dynamically determined as:
\begin{equation*}
  \alpha = \min \left( 1, \, \frac{\hat{d}}{\max_i \|\Delta \mathbf{p}_i^\star\|} \right).
\end{equation*}
This step clipping ensures that no single body is displaced by more than the internal margin $\hat{d}$ in one iteration, which empirically keeps updates in the regime where the first-order model is reliable. When $\alpha<1$ the applied update $\alpha\Delta \mathbf{p}^\star$ generally does not satisfy the undamped rows above, so the clipped step is an iterative correction rather than a one-step enforcement of the full margin; the next exact re-detection forms a new QP. We set $\alpha=1$ when $\max_i\|\Delta \mathbf{p}_i^\star\|=0$. Algorithm~\ref{alg:s4r} summarizes the deployed solver end to end.

\begin{algorithm}[t]
\SetKwInOut{Input}{input}\SetKwInOut{Output}{output}
\SetKwComment{Where}{$\triangleright$\ }{}
\Input{meshes and poses; $\hat d,s_{\min},\mathit{ds}_{\max},\mathit{ds}_{\min}$; $M,K_{\max},K_{\mathrm{tail}}$; optional walls $\mathcal W$}
\Output{poses $\mathbf p$; evaluator report $\mathcal R$; status $\sigma$}
normalize meshes, compute $r_i$, and perturb the pairs that violate Eq.~\ref{eq:smin_safe}\;
$s\leftarrow s_{\min}$; $\mathit{ds}_{\mathrm{cap}}\leftarrow+\infty$; $a\leftarrow0$; build BVHs and the event list\;
full detection; $\breve d_{ij}\leftarrow\widetilde d_{ij}$\;
$(\Delta\mathbf p_{\mathrm{prev}},\mathit{ds}_{\mathrm{prev}})\leftarrow(\mathbf0,0)$\;
\While{$s<1$ \textbf{and} $a<K_{\max}$}{
  $a\leftarrow a+1$\;
  $\mathit{ds}_{\mathrm{event}}\leftarrow$ event-aware increment\Where*[r]{Sec.~\ref{sec:soi}}
  $\mathit{ds}\leftarrow\min\{\mathit{ds}_{\mathrm{event}},\mathit{ds}_{\mathrm{cap}},1-s\}$\;
  \eIf{an exact refresh is scheduled or forced}{
    full detection; $\breve d_{ij}\leftarrow\widetilde d_{ij}$; $\bar d_{ij}^{\,k}\leftarrow\widetilde d_{ij}^{\,k}$\;
  }{
    $\breve d_{ij}^{\,k}\leftarrow\textsc{Predict}(\breve d_{ij}^{\,k-1},\Delta\mathbf p_{\mathrm{prev}},\mathit{ds}_{\mathrm{prev}})$\Where*[r]{Eq.~\ref{eq:cache_update}}
    $\bar d_{ij}^{\,k}\leftarrow\breve d_{ij}^{\,k}$\;
  }
  form $\mathcal A_k$ from $\bar d_{ij}^{\,k}$\;
  $(\Delta\mathbf p^\star,\sigma)\leftarrow\textsc{SolveQP}(\mathcal A_k,\mathcal W)$\Where*[r]{Eq.~\ref{eq:qp_translation}}
  \lIf{$\sigma=\textsc{container-infeasible}$}{\Return $(\mathbf p,\varnothing,\sigma)$}
  \If{$\sigma\neq\textsc{solved}$}{
    $\mathit{ds}_{\mathrm{cap}}\leftarrow\mathit{ds}/2$\;
    \lIf{$\mathit{ds}_{\mathrm{cap}}<\mathit{ds}_{\min}$}{\Return $(\mathbf p,\varnothing,\textsc{qp-failure})$}
    mark refresh due; \textbf{continue}\;
  }
  $\mathbf p\leftarrow\mathbf p+\Delta\mathbf p^\star$; $s\leftarrow s+\mathit{ds}$\;
  $(\Delta\mathbf p_{\mathrm{prev}},\mathit{ds}_{\mathrm{prev}})\leftarrow(\Delta\mathbf p^\star,\mathit{ds})$; $\mathit{ds}_{\mathrm{cap}}\leftarrow+\infty$\;
}
\lIf{$s<1$}{evaluate $(\mathbf p,s)$ and \Return $(\mathbf p,\mathcal R,\textsc{incomplete})$}
run tail refinement at $s=1$ to obtain $\sigma$\Where*[r]{Sec.~\ref{sec:tail}}
evaluate $(\mathbf p,1)$ to obtain $\mathcal R$\Where*[r]{Alg.~\ref{alg:evaluator}}
\Return $(\mathbf p,\mathcal R,\sigma)$\;
\caption{\name\ (translation-only default). $\bar d_{ij}^{\,k}$ is the score that forms the rows of step $k$: measured at a refresh, predicted otherwise.\label{alg:s4r}}
\end{algorithm}

\section{Experiments}
\label{sec:exp}

We evaluate \name on rigid-body interpenetration resolution across scene sizes from $40$ to $30{,}000$ bodies (the full multi-baseline tables run through $5000$; Fig.~\ref{fig:scaling} plots the two largest sizes) and geometric complexity (convex, mildly non-convex, highly non-convex), against a broad set of baselines.
The evaluation is designed to isolate the impact of each technical contribution of \name---progressive scaling, QP displacement correction, scale of impact (SOI) analytical event handling, collision caching, adaptive scheduling, and contact sparsity---through controlled ablations.
All methods are scored by the \emph{same} mesh-level evaluator so the reported penetration counts and RMSDs are directly comparable.
Before detailing the protocol, we show in Fig.~\ref{fig:dataset_complexity} samples from the three source pools, illustrating the concavities, thin handles, perforations, and multi-component structures that appear before random placement and scaling.

\subsection{Experimental Setup}
\label{sec:exp_setup}

\begin{figure}[t]
  \centering
  \includegraphics[width=\columnwidth]{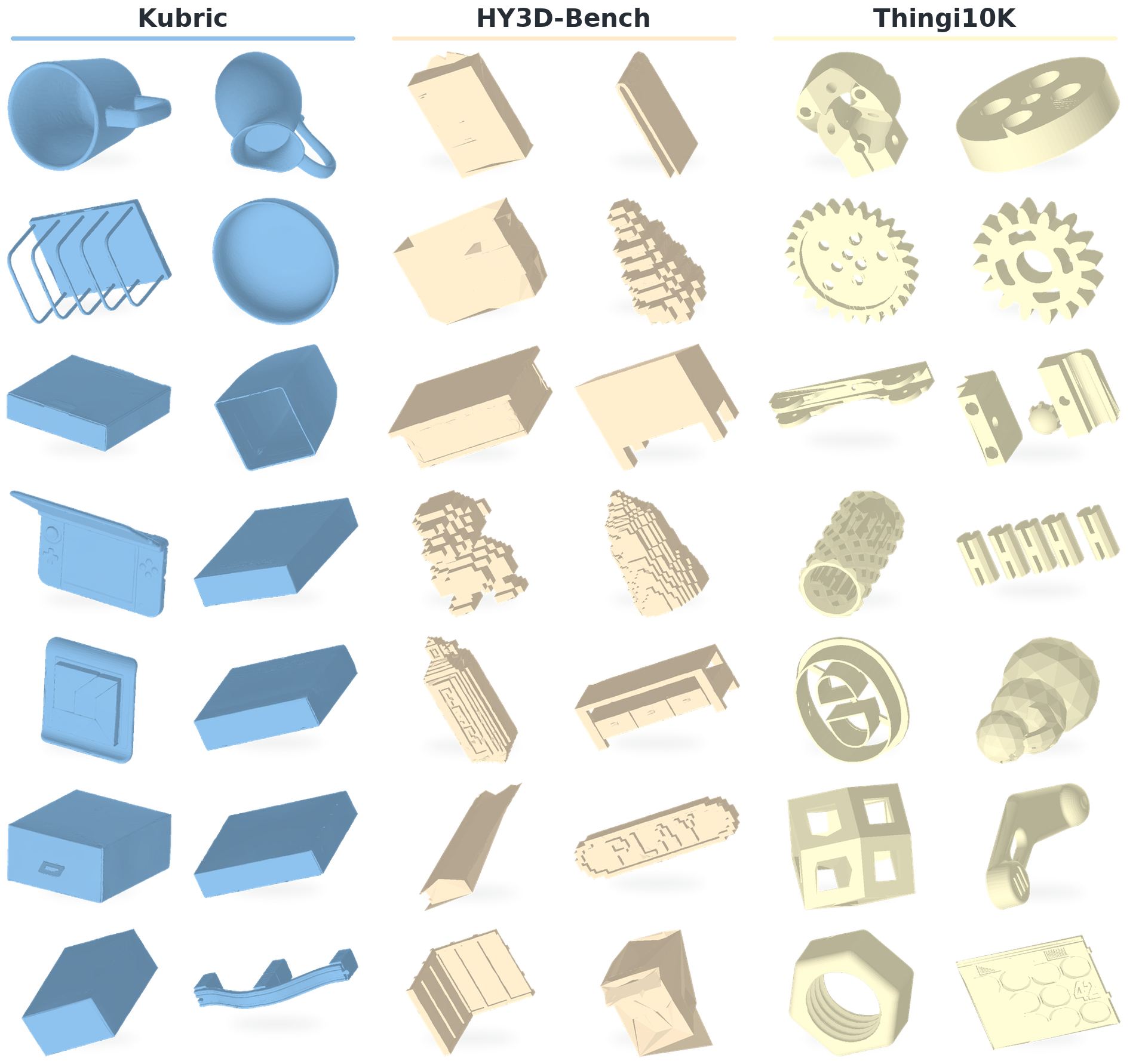}
  \caption{Mesh samples from the three source pools. Kubric, HY3D-Bench, and Thingi10K each contribute fourteen samples arranged as two columns in the single-column panel. The HY3D-Bench column shows assets of the source collection; the benchmark scenes draw on the filtered templates listed in App.~\ref{ap:scene_gen}.}
  \Description{A seven-row by six-column gallery of mesh assets on a white background. The first two columns show Kubric objects, the middle two columns show HY3D-Bench objects, and the last two columns show Thingi10K objects.}
  \label{fig:dataset_complexity}
\end{figure}

\paragraph{Scenes and metrics.}
We use three pools of varying geometric difficulty: \emph{Kubric}, $40$ watertight household meshes from Google Scanned Objects~\cite{downs2022gso} with mild-to-moderate non-convexity (area-based concavity $\kappa=1-\operatorname{area}(\text{hull})/\operatorname{area}(\text{mesh})$ up to $0.54$, App.~\ref{sec:exp_objaverse}), the collection the Kubric generator~\cite{greff2022kubric} draws on; \emph{HY3D-Bench}~\cite{tencent2026hy3dbench}, generative-3D meshes ($13$ templates pass the benchmark filters in the decimated pool and $12$ in the full-resolution pool; App.~\ref{ap:scene_gen}); and \emph{Thingi10K}~\cite{zhou2016thingi10k}, $430$ artist-authored printable meshes. Each scene places $N$ objects at random positions and orientations in a box that scales as $(N/40)^{1/3}$, so contact density stays roughly constant with $N$; three seeds ($42$, $123$, $456$) are used unless a study states otherwise. Initial penetrating pairs grow with $N$ ($31$, $772$, and $4026$ on average at $N{=}40$, $1000$, and $5000$), and every method starts from the same state. Whatever a solver returns, the same triangle-mesh evaluator (App.~\ref{ap:evaluator}) scores it: we report the number of negative-score body pairs under the shared evaluator (\emph{Pen.}), the maximum penetration depth (\emph{maxPen}) in the stress tests where residual penetration occurs, the reference-center RMSD from the initial layout, and wall-clock time. AVBD and ISIR also rotate bodies, so their center RMSD can understate surface motion. App.~\ref{ap:benchmark_limits} gives pool statistics, spawn parameters, and metric definitions.

\paragraph{Scoring protocol.}
Every method is scored against the same final penetration threshold $\tau{=}0$: the online evaluator counts the pairs with $\widetilde d_{ij}<0$ under the piecewise FCL convention, with no tolerance band, and a separate offline containment audit of the retained final states finds no missed nested watertight pair (App.~\ref{ap:evaluator}). A solver's internal margin is a solver-shaping parameter rather than the reported target: \textsc{Soft-Penalty} terminates at $\tau{=}0$, and \textsc{Drake} retains its stated $10^{-4}$ minimum-distance lower bound on its convex hulls. Every returned pose is then scored by the same strict $\tau{=}0$ evaluator. Solvers that work on proxies are scored on the meshes their poses imply.  Each cell is attempted on all three seeds within $1800$\,s per seed, with the exceptions noted for \textsc{Drake} and \textsc{AVBD-OBB}; the $N{=}30000$ cells of Fig.~\ref{fig:scaling} complete within it; cells that do not finish carry their completion count. App.~\ref{ap:benchmark_limits} has the full protocol.

\paragraph{Baselines.}
We compare against six baselines spanning the dominant families: \textbf{AVBD}~\cite{Giles2025AVBD} (official 3D implementation, \textsc{AVBD-OBB}, with OBB--OBB SAT collision); \textbf{Rigid-ISIR}~\cite{Jang2025ISIR} (ISIR in tables), our rigid-pose adaptation of a vertex-level self-intersection repair; \textbf{Drake}~\cite{drake}, inverse kinematics with minimum-distance lower bounds on convex hulls, in two solver configurations (Ipopt~\cite{wachter2006ipopt} and the SNOPT build bundled with Drake~\cite{gill2005snopt}); \textbf{Global QP/LCP}, an iterated global linearized contact QP solved by OSQP~\cite{stellato2020osqp}; \textbf{PD-PGS}, our own projected Gauss--Seidel sweeps over the overlap graph; and \textbf{Soft-Penalty} (Soft-Pen.\ in tables), our static penalty-continuation baseline after TrajOpt's collision-penalty continuation~\cite{schulman2014trajopt}, with a squared-hinge penetration penalty minimized by L-BFGS-B in our implementation. App.~\ref{ap:benchmark_limits} gives each one's geometry handling, adaptation, and tuning.
\name, QP/LCP and Soft-Penalty use the common internal clearance $\dhat{=}0.02$, PD-PGS its default clearance of $10^{-3}$, while \textsc{Drake} enforces a $10^{-4}$ minimum-distance lower bound on its convex hulls, whereas the shared evaluator tests $\tau{=}0$; methods with native stopping criteria (\textsc{AVBD-OBB}, \textsc{ISIR}) retain their reference settings and are evaluated, like every method, by the common final mesh-level evaluator, whose strict sign test is independent of any solver margin.

\paragraph{Implementation.}
\name is implemented in Python with NumPy~\cite{harris2020array} and OSQP~\cite{stellato2020osqp} for the inner quadratic programs.
Pairwise mesh proximity in our implementation is handled by FCL~\cite{pan2012fcl}.
All reported CPU \name\ numbers use the same FCL-backed oracle, except the early stages of the implementation-stack ablation (Tab.~\ref{tab:perf_stack}), which use the Python backend that FCL replaced; \name-Warp uses a separate GPU-native oracle. Both are checked by the shared evaluator of App.~\ref{ap:evaluator}.
We report all baselines in two hardware tiers and compare baseline wall times only within a tier; the two \name\ implementations are compared with each other only as an implementation diagnostic.
The CPU tier (one socket, 14 physical cores, of a dual-socket Intel Xeon E5-2680\,v4 @ 2.4\,GHz, 377\,GiB RAM; each method runs sequentially on the same 14 cores, so within-tier wall times are directly comparable) holds \textsc{Soft-Penalty}, \textsc{PD-PGS}, \textsc{QP/LCP}, \textsc{Drake (hull)}, \textsc{AVBD-OBB} (the published C++ demo) and our reference \name; the GPU tier (one RTX 2080 Ti plus 14 physical host cores as host, likewise sequential) holds \textsc{ISIR} (PyTorch CUDA) and \name-Warp (\name's full solve on device --- the per-step min-norm contact QP runs as a CUDA-graph-fused accelerated projected gradient descent on the dual (dual-APGD) on the GPU, paired with an FCL-free NVIDIA Warp~\cite{warp} contact oracle).
The mesh-level CPU methods (\textsc{Soft-Penalty}, \textsc{PD-PGS}, \textsc{QP/LCP}, \name) share an FCL-backed mesh closest-point query at solve time. \name-Warp uses an FCL-free NVIDIA Warp winding-number signed-distance oracle with a deterministic packed-argmin contact-normal reduction, and all rows are scored by the same final evaluator at test time. The GPU QPs are solved in float32 by a dual-APGD with warm starts and a fixed 60-iteration CUDA-graph schedule per QP in both the main loop and the tail, with the step size set from a power-iteration estimate of the dual Lipschitz constant inflated by $15.5\%$ (a step with at most four rows uses the Gershgorin bound with a $5\%$ margin) rather than from a certified bound; post-solve residuals are recorded as diagnostics rather than used as an acceptance test, so the CPU OSQP tolerances do not describe the GPU QP accuracy. The oracle samples the vertices and five interior points per edge of each body against the other body's mesh.

\subsection{Main Comparison}

\makeatletter\typeout{FLOATCFG topnumber=\the\c@topnumber\space totalnumber=\the\c@totalnumber\space topfraction=\topfraction\space textfraction=\textfraction\space floatpagefraction=\floatpagefraction\space colwidth=\the\columnwidth\space textheight=\the\textheight}\makeatother

\label{sec:exp_main}

Tab.~\ref{tab:main} reports the full comparison at $N{=}40$ and $N{=}100$ on Kubric, averaged over three random seeds unless noted otherwise.
All methods are evaluated with the unified mesh-level metric of Sec.~\ref{sec:exp_setup}; \name, \textsc{QP/LCP}, \textsc{PD-PGS} and \textsc{Soft-Penalty} share the same FCL-backed mesh oracle on the host during solving, while \name-Warp uses a separate GPU-native oracle and \textsc{AVBD-OBB}, \textsc{ISIR} and \textsc{Drake} keep their native geometry representations; every row is scored by the same final evaluator.
\textsc{AVBD-OBB} and \textsc{ISIR} retain their native collision backends, and \textsc{Drake} operates on convex hulls of the meshes.

\begin{table}[!tbp]
 \caption{Main comparison on Kubric ($N{=}40,100$; 3-seed means). \name\ reaches zero evaluator-reported penetration with low displacement and the shortest wall time in both hardware tiers; symbols mark budget or reporting exceptions (table note).}
 \label{tab:main}
 \centering
 \small
 \setlength{\tabcolsep}{6pt}
 \resizebox{\columnwidth}{!}{%
 \begin{tabular}{@{}lcccccc@{}}
 \toprule
 & \multicolumn{3}{c}{$N{=}40$} & \multicolumn{3}{c}{$N{=}100$} \\
 \cmidrule(lr){2-4}\cmidrule(lr){5-7}
 Method & Pen. & RMSD & Time & Pen. & RMSD & Time \\
 \midrule
 \multicolumn{7}{@{}l}{\emph{CPU methods (one 14-core socket)}} \\
 AVBD-OBB (tuned)$^{\natural}$ & \textbf{0} & 1.33 & 2.2 s & \textbf{0} & 2.34 & 8.0 s \\
 AVBD-OBB (official)$^{\flat}$ & \textbf{0}$^{\,2/3}$ & 4.46 & 1.2 s & \textbf{0} & 1.15 & 3.3 s \\
 Drake-Ipopt (hull)$^\dagger$ & \textbf{0} & \textbf{0.022} & 597 s & \textbf{0}$^{\,1/3}$ & \textbf{0.025}$^{\,1/3}$ & 6518 s$^{\,1/3}$ \\
 Drake-SNOPT (hull)$^\dagger$ & \textbf{0} & \underline{0.023} & 22.6 s & \textbf{0} & \textbf{0.025} & 553 s \\
 QP/LCP & \textbf{0} & 0.038 & \underline{0.5 s} & \textbf{0} & 0.039 & \underline{1.7 s} \\
 PD-PGS & \textbf{0} & 0.028 & 1.1 s & \textbf{0} & \underline{0.027} & 3.4 s \\
 Soft-Pen. & \textbf{0} & 0.032 & 19.6 s & \textbf{0} & 0.034 & 114 s \\
 \textbf{\name (CPU, ours)} & \textbf{0} & 0.036 & \textbf{0.2 s} & \textbf{0} & 0.035 & \textbf{0.4 s} \\
 \midrule
 \multicolumn{7}{@{}l}{\emph{GPU methods (1$\times$ RTX 2080 Ti)}} \\
 ISIR & \textbf{0} & 0.175 & 14 s & \textbf{0} & 0.175 & 14 s \\
 \textbf{\name-Warp (ours)} & \textbf{0} & \textbf{0.035} & \textbf{1.5 s} & \textbf{0} & \textbf{0.035} & \textbf{1.6 s} \\
 \bottomrule
 \end{tabular}%
 }
 \par
 \raggedright\footnotesize{Pen.\ = negative-score body pairs under the shared piecewise FCL evaluator; RMSD = reference-center displacement. Bold names mark our variants; within a tier, bold marks the column best and underline the runner-up, at the displayed precision (ties at pen${=}0$ are all bold, so no runner-up is marked there; the GPU tier has two methods, so only the best is marked). Superscripts: $\dagger$ \textsc{Drake} under a tightened bound and a $7200$\,s budget --- with Ipopt, $1/3$ seeds finish at $N{=}100$; \textsc{Drake-SNOPT} solves the same program with the SNOPT build bundled in Drake and finishes $3/3$; $\natural$ tuned and $\flat$ released \textsc{AVBD-OBB} configurations, the latter without a static-repair termination. App.~\ref{ap:benchmark_limits} gives each configuration.}
\end{table}

\emph{CPU block.} The optimization-based methods and \name\ reach zero residual penetration at $N{=}40$ with RMSD between $0.022$ and $0.038$. \name\ is the fastest method in the block ($0.2$\,s versus $0.5$\,s for QP/LCP, $1.1$\,s for PD-PGS, $19.6$\,s for Soft-Pen., and $22.6$--$597$\,s for the two Drake configurations) with RMSD about $0.014$ above the smallest (Drake's $0.022$--$0.023$, obtained at roughly $130$ and $3500$ times \name's wall time with SNOPT and Ipopt, respectively). The absolute gap widens with $N$ (Sec.~\ref{sec:exp_scaling}): QP/LCP and PD-PGS take $3.2\times$/$5.9\times$ \name's wall time at $N{=}5000$ ($4.6\times$/$8.6\times$ at $N{=}2000$), Drake-Ipopt completes only one of three seeds within the $7200$\,s budget already at $N{=}100$ while Drake-SNOPT completes all three in $553$\,s --- still roughly $1300\times$ \name's $0.4$\,s --- and Soft-Penalty --- evaluated at the common $\tau{=}0$ target --- completes $N{=}500$ in $1409$\,s and times out at $N{=}1000$. \textsc{AVBD-OBB} also reaches pen${=}0$ (given a convergence-binding step budget, as for \textsc{ISIR}/\textsc{Drake}), but sits apart from the optimization baselines on displacement: RMSD $1.33$ at $N{=}40$ ($37\times$ \name's), consistent with its momentum-carrying penalty update applied at static initialization (App.~\ref{ap:benchmark_limits}).

\emph{GPU block.} \name-Warp matches its CPU sibling: pen${=}0$ at both $N$, RMSD within $0.001$ of the CPU run, with the entire solve (QP and contact oracle) moved to the GPU. Its wall time at these sizes is almost entirely startup: bringing up the Warp runtime --- kernel loading and device allocation --- costs a fixed ${\sim}1.0$\,s per run, while the solve itself takes $0.20$\,s at $N{=}40$ and $0.28$\,s at $N{=}100$. That fixed cost is why the GPU port trails its CPU sibling on small scenes and overtakes it once the scene is large enough to amortize it (Sec.~\ref{sec:exp_scaling}). Its GPU peer is \textsc{ISIR}: with its iteration cap raised and framework initialization charged on both sides, ISIR reaches pen${=}0$ at both $N$ in ${\sim}14$\,s, but with ${\sim}5\times$ larger displacement ($0.17$--$0.18$ vs.\ $0.035$--$0.036$). \name-Warp is faster ($1.5$--$1.6$\,s), keeps the smaller displacement, and at scale runs far beyond ISIR's practical reach (Sec.~\ref{sec:exp_scaling}).

\subsection{Scaling Study}
\label{sec:exp_scaling}

To study how each method scales with scene size, we run \name, ISIR, QP/LCP, and PD-PGS on Kubric across seven sizes, from $N{=}40$ to $N{=}5000$ in Tab.~\ref{tab:scaling}, with Fig.~\ref{fig:scaling} additionally plotting the two largest measured sizes.
Baselines are reported until they time out or run out of device memory.

\begin{table*}[!tbp]
 \caption{Scaling on Kubric ($N{=}40$--$5000$; 3-seed means; setup$+$solve time, App.~\ref{ap:benchmark_limits}). Within each hardware tier \name\ is the fastest at every $N$, with essentially flat RMSD.}
 \label{tab:scaling}
 \centering
 \small
 \setlength{\tabcolsep}{14pt}
 \resizebox{\textwidth}{!}{%
 \begin{tabular}{@{}lcccccccccc@{}}
 \toprule
 & \multicolumn{6}{c}{\textbf{CPU} (single socket)} & \multicolumn{4}{c}{\textbf{GPU} (1$\times$ RTX 2080 Ti)} \\
 \cmidrule(lr){2-7}\cmidrule(lr){8-11}
 & \multicolumn{2}{c}{\textbf{\name} (ours)} & \multicolumn{2}{c}{QP/LCP} & \multicolumn{2}{c}{PD-PGS}
 & \multicolumn{2}{c}{\textbf{\name-Warp} (ours)} & \multicolumn{2}{c}{ISIR} \\
 \cmidrule(lr){2-3}\cmidrule(lr){4-5}\cmidrule(lr){6-7}\cmidrule(lr){8-9}\cmidrule(lr){10-11}
 $N$ & RMSD & Time & RMSD & Time & RMSD & Time & RMSD & Time & RMSD & Time \\
 \midrule
40 & \underline{0.036} & \textbf{0.2 s} & 0.038 & \underline{0.5 s} & \textbf{0.028} & 1.1 s & \textbf{0.035} & \textbf{1.5 s} & 0.175 & 14 s \\
100 & \underline{0.035} & \textbf{0.4 s} & 0.039 & \underline{1.7 s} & \textbf{0.027} & 3.4 s & \textbf{0.035} & \textbf{1.6 s} & 0.175 & 14 s \\
200 & \underline{0.035} & \textbf{1.0 s} & 0.039 & \underline{3.6 s} & \textbf{0.026} & 9.2 s & \textbf{0.034} & \textbf{1.7 s} & 0.241 & 51 s \\
500 & \underline{0.034} & \textbf{3.0 s} & 0.040 & \underline{11.7 s} & \textbf{0.024} & 36.5 s & \textbf{0.033} & \textbf{2.1 s} & 0.298 & 135 s \\
 1000 & \underline{0.035} & \textbf{7.7 s} & 0.040 & \underline{29.8 s} & \textbf{0.025} & 69.2 s & \textbf{0.034} & \textbf{2.9 s} & 0.356 & 427 s \\
 2000 & \underline{0.037} & \textbf{17.5 s} & 0.042 & \underline{79.8 s} & \textbf{0.025} & 150 s & \textbf{0.036} & \textbf{4.6 s} & 0.394 & 1305 s \\
 5000 & \underline{0.037} & \textbf{68.1 s} & 0.042 & \underline{220 s} & \textbf{0.025} & 404 s & \textbf{0.036} & \textbf{9.7 s} & 0.445$^\ddagger$ & 1775 s$^\ddagger$ \\
 \bottomrule
 \end{tabular}%
 }
 \par
 \raggedright\footnotesize{Bold names mark our variants; within a tier, bold marks the best RMSD and time per row and underline the runner-up (the GPU tier has two methods, so only the best is marked). Pen.\ is omitted: every cell shown reports pen${=}0$ except the marked one. $\ddagger$ \textsc{ISIR} stops at its budget ($39$ residual pairs on average). Budgets are in Sec.~\ref{sec:exp_setup}.}
\end{table*}

Across the sweep, \name\ (CPU) keeps RMSD between $0.034$ and $0.037$ from $N{=}40$ to $N{=}5000$ and \name-Warp stays in $[0.033,0.036]$ over the same range, a $125\times$ span in scene size; every \name/\name-Warp cell resolves to zero residual penetration. Each body moves in response to the shallow-contact constraints activated along the scale path, the initial pairwise perturbation, and the tail corrections, consistent with the per-step minimum-norm objective and the sparse active-contact graph observed along the path.

Under the unified setup$+$solve timing, \name\ is also the fastest CPU solver at every $N$: $3.9\times$ faster than QP/LCP ($7.7$\,s vs.\ $29.8$\,s) and $9.0\times$ faster than PD-PGS ($69.2$\,s) at $N{=}1000$, and $3.2\times$/$5.9\times$ faster at $N{=}5000$ ($68.1$\,s vs.\ $220$\,s and $404$\,s). On the GPU, \name-Warp's small-$N$ totals are dominated by one-time initialization (${\sim}1$\,s of runtime startup and allocation; the solve itself is $0.2$--$3.1$\,s through $N{=}2000$), while at $N{=}5000$ the total is $9.7$\,s, of which the solve is $8.1$\,s. It is faster than its GPU peer ISIR at every completed $N$ ($4.6$\,s vs.\ $1305$\,s at $N{=}2000$) and reaches pen${=}0$ at $N{=}5000$, where ISIR stops at its budget with $39$ residual pairs. ISIR reaches displacements of $0.17$--$0.39$ that grow with $N$ and reaches pen${=}0$ through $N{=}2000$, with its iteration cap and collision buffers non-binding on those completed cells; at $N{=}5000$ its budget, not memory, is the binding limit; AVBD's momentum-driven displacement is reported with \textsc{AVBD-OBB} in Tab.~\ref{tab:main}.

Fig.~\ref{fig:scaling} plots wall time on a log--log axis with reference slopes $\propto N$ and $\propto N^{2}$ overlaid. Empirically, every curve grows super-linearly at large $N$: the local log--log slope of \name\ (CPU) rises from about $1$ below $N{=}500$ to $1.2$--$1.6$ between $N{=}500$ and $N{=}10000$, while QP/LCP and PD-PGS stay between $0.9$ and $1.5$ over the same range, so the advantage of \name\ is a lower absolute cost at every measured $N$ rather than a lower growth order; \name-Warp is startup-bound below $N{\approx}1000$ and its slope rises to $1.75$ between $N{=}10000$ and $N{=}30000$. We make no asymptotic claim: the candidate-generation pass is a vectorized sweep over all $N(N{-}1)/2$ pairs (worst case $\Theta(N^2)$, with small constants), and it is the narrow-phase mesh-query count --- which follows the active-contact set at fixed density --- that dominates the measured wall time at these scales (App.~\ref{sec:exp_breakdown}).

\begin{figure}[t]
  \centering
  \includegraphics[width=\columnwidth]{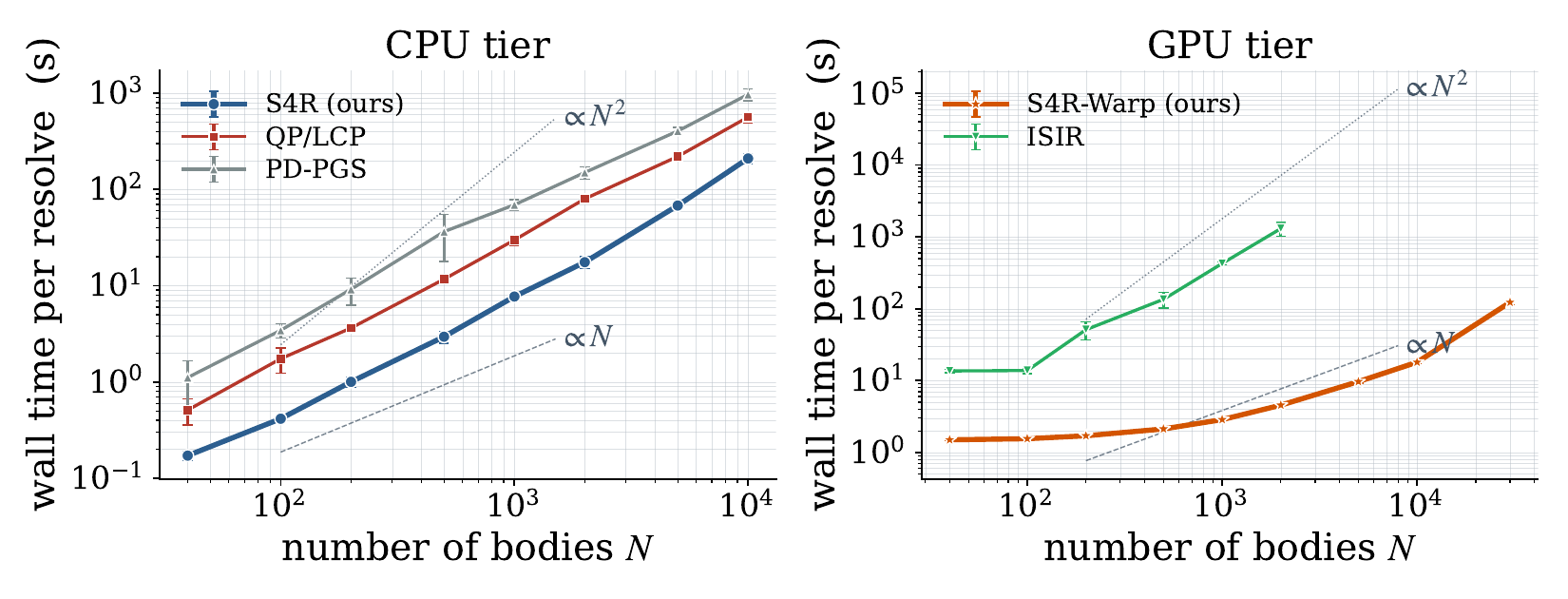}
  \caption{Wall time vs.\ scene size $N$ on Kubric (log--log, 3-seed mean $\pm$ s.d.; CPU tier on one 14-core Xeon E5-2680\,v4 socket, GPU tier on one RTX 2080 Ti). Solid lines connect the cells that reach pen${=}0$, of which Tab.~\ref{tab:scaling} lists the $N{\leq}5000$ subset; reference slopes $\propto N$ (dashed) and $\propto N^{2}$ (dotted) are guides to the eye. All curves grow super-linearly at large $N$ over $N{=}40$--$30{,}000$ (a $750\times$ span); \name\ has the lowest wall time in its tier at every measured $N$.}
  \Description{A log-log scaling plot of wall time versus body count. All wall times grow super-linearly at large body counts; the method's curve lies below the baselines of its tier at every measured size.}
  \label{fig:scaling}
\end{figure}

\subsection{Cross-Dataset Breadth at Scale}
\label{sec:exp_datasets}

The scaling study in Sec.~\ref{sec:exp_scaling} sweeps $N$ on a curated 40-object Kubric pool; to test that the wall-time and penetration advantages of \name hold on a broader mesh population, we run the CPU baselines (QP/LCP, PD-PGS, AVBD-OBB) and the GPU-tier Rigid-ISIR, alongside \name, on two larger mesh datasets:
\emph{HY3D-Bench}~\cite{tencent2026hy3dbench} --- templates drawn from a $300$-mesh subset filtered by their source metadata to winding-consistent meshes with concavity $\kappa \geq 0.1$ and face count $\leq 5000$ ($13$ pass the watertight filter, App.~\ref{ap:scene_gen}); and
\emph{Thingi10K}~\cite{zhou2016thingi10k} --- a $430$-mesh subset of artist-authored printable meshes loaded via the public HuggingFace dataset, restricted to closed, manifold, non-self-intersecting models with $1000$--$1500$ facets (the binding constraint; the secondary $500$--$5000$-vertex filter rarely fires under Euler's formula).
Both pools carry higher face counts than the curated 40-object Kubric pool of Sec.~\ref{sec:exp_scaling}, so mesh-query cost dominates solve time for the mesh-based methods.
We evaluate at $N \in \{500, 1000, 2000\}$, 3 seeds per cell, under the shared mesh-level evaluator (Sec.~\ref{sec:exp_setup}) and a 30-minute wall-time cap ($60$ minutes for \textsc{AVBD-OBB}, App.~\ref{ap:benchmark_limits}).
Scene generation is identical across methods, so every solver starts from the same initial scenes.

\begin{table*}[!tbp]
 \caption{Cross-dataset comparison on HY3D-Bench and Thingi10K. \name, QP/LCP, and PD-PGS share FCL-based mesh queries during solving; \textsc{AVBD-OBB} retains its OBB-SAT backend; every method is scored by the same evaluator. \name\ is the fastest CPU-tier solver in every cell ($2.5$--$3.8\times$ over QP/LCP, $5.4$--$12.9\times$ over PD-PGS) at pen${=}0$ with comparable displacement; \textsc{ISIR}, the GPU-tier entry, is listed for completeness; \textsc{AVBD-OBB} leaves growing residual penetration.}
 \label{tab:datasets}
 \centering
 \small
 \setlength{\tabcolsep}{14pt}
 \resizebox{\textwidth}{!}{%
 \begin{tabular}{@{}llccccccccc@{}}
 \toprule
 & & \multicolumn{3}{c}{$N{=}500$} & \multicolumn{3}{c}{$N{=}1000$} & \multicolumn{3}{c}{$N{=}2000$} \\
 \cmidrule(lr){3-5}\cmidrule(lr){6-8}\cmidrule(lr){9-11}
 Dataset & Method & Pen. & RMSD & Time & Pen. & RMSD & Time & Pen. & RMSD & Time \\
 \midrule
 \multirow{5}{*}{HY3D-Bench}
 & \textbf{\name} & \textbf{0} & \underline{0.021} & \textbf{5.2 s} & \textbf{0} & \underline{0.020} & \textbf{10.2 s} & \textbf{0} & \underline{0.021} & \textbf{22.2 s} \\
 & QP/LCP & \textbf{0} & 0.022 & \underline{13.2 s} & \textbf{0} & 0.022 & \underline{31.9 s} & \textbf{0} & 0.022 & \underline{65.4 s} \\
 & PD-PGS & \textbf{0} & \textbf{0.013} & 28.1 s & \textbf{0} & \textbf{0.013} & 73.3 s & \textbf{0} & \textbf{0.013} & 172 s \\
 & AVBD-OBB & 8.7 & 1.123 & 70 s & 46.7 & 0.935 & 210 s & 104.7 & 0.935 & 730 s \\
 & ISIR & \textbf{0} & 0.196 & 226 s & 0.3 & 0.197 & 949 s & 0.3 & 0.233 & 1472 s \\
 \midrule
 \multirow{5}{*}{Thingi10K}
 & \textbf{\name} & \textbf{0} & \underline{0.025} & \textbf{4.5 s} & \textbf{0} & \underline{0.026} & \textbf{8.7 s} & \textbf{0} & \underline{0.026} & \textbf{21.4 s} \\
 & QP/LCP & \textbf{0} & 0.026 & \underline{14.3 s} & \textbf{0} & 0.027 & \underline{32.7 s} & \textbf{0} & 0.028 & \underline{72.4 s} \\
 & PD-PGS & \textbf{0} & \textbf{0.016} & 43.1 s & \textbf{0} & \textbf{0.017} & 112 s & \textbf{0} & \textbf{0.017} & 250 s \\
 & AVBD-OBB & 14.0 & 1.332 & 72 s & 42.3 & 1.265 & 221 s & 134.0 & 1.234 & 872 s \\
 & ISIR & \textbf{0} & 0.305 & 230 s & 0.3 & 0.318 & 929 s & 0.3 & 0.348 & 1471 s \\
 \bottomrule
 \end{tabular}%
 }
 \par
 \raggedright\footnotesize{Bold names mark our variants; bold marks the best entry per column within a dataset block and underline the runner-up. Ties at pen${=}0$ are all bold, so no runner-up is marked there. \textsc{AVBD-OBB} figures are mesh-level scores of its OBB-proxy poses. \textsc{ISIR} runs in the GPU tier and the other rows in the CPU tier; wall times are compared within a tier (Sec.~\ref{sec:exp_setup}).}
\end{table*}

Across both datasets (Tab.~\ref{tab:datasets}) the optimization-based methods --- \name, QP/LCP, and PD-PGS --- reach pen${=}0$ in every cell, so runtime is the main differentiator: \name\ is the fastest in every cell; QP/LCP and PD-PGS take $2.5\times$ and $5.4\times$ its wall time at $N{=}500$ ($5.2$\,s vs.\ $13.2$/$28.1$\,s) and $2.9\times$/$7.7\times$ at $N{=}2000$ ($22.2$\,s vs.\ $65.4$/$172$\,s), with RMSD within $0.010$ of the lowest. On the harder \textsc{HY3D-Full} variant (the non-decimated $12$-template pool of App.~\ref{ap:scene_gen}, $\leq\!5000$-face meshes, initial penetrating pairs $311\to1312$ over $N{=}500\to2000$) the ranking is unchanged and the lead widens ($3.1\times$ over QP/LCP and $12.9\times$ over PD-PGS at $N{=}2000$); ISIR reaches pen${=}0$ on every $N{=}500$ cell and on two of three seeds at $N{=}1000$ and $N{=}2000$, where the third seed stops at its budget with one residual pair, and on every cell of the full-resolution pool; its wall times lie one to two orders of magnitude above \name's, a cross-tier comparison given for completeness rather than as a ranking. Thingi10K shows the same picture at lower geometric difficulty (\name\ RMSD $0.025$--$0.026$, $3.4\times$ faster than QP/LCP at $N{=}2000$). \textsc{AVBD-OBB} is the one method that stays far from feasibility: it retains $9$--$134$ mesh-level penetrating pairs at its step budget on these pools, consistent with its OBB proxy and momentum-carrying update. \name's own failure regimes (near-coincident reference centers, nested cavities, and tight confinement that blocks the deployed continuation) are discussed in Sec.~\ref{sec:conclusion}.

\paragraph{Non-convex robustness.} On a high-concavity HY3D-Bench sub-pool ($\kappa\in[0.3,0.95]$), \name\ produces RMSD ${\sim}126\times$ lower than the OBB-based AVBD demo while clearing every residual penetration (pen${=}0$ in all 5 seeds); details in App.~\ref{sec:exp_objaverse}.
\subsection{Qualitative Behavior Along the Scale Path}
\label{sec:exp_progression}

Fig.~\ref{fig:s4r_progression_gallery} shows the per-stage qualitative behavior of \name\ on four scenes drawn from all three pools. The leftmost cell of each row is the spawn configuration: $N$ randomly placed and rotated meshes whose mesh-level evaluator counts tens (small scenes) to hundreds (Kubric $N{=}300$: $221$) of penetrating body pairs. The middle three columns are snapshots of the same scene during \name's progressive-scaling sweep: at $s{=}0.31$ ($0.26$ on the Thingi10K row) the bodies are uniformly shrunk so their initial overlaps lie in the shallow-contact regime the per-step linear model is built for (Sec.~\ref{sec:progressive_scaling}); at $s{=}0.61$ and $s{=}0.91$ the QP-corrected layout is dilated back toward full size, and the rightmost cell is the resolved full-scale state with pen${=}0$ on all four scenes. Because the camera and lighting are locked within each row, every visual difference between the init and final cells is attributable to the reference-center trajectory produced by the solver.

\begin{figure*}
  \centering
  \includegraphics[width=\textwidth]{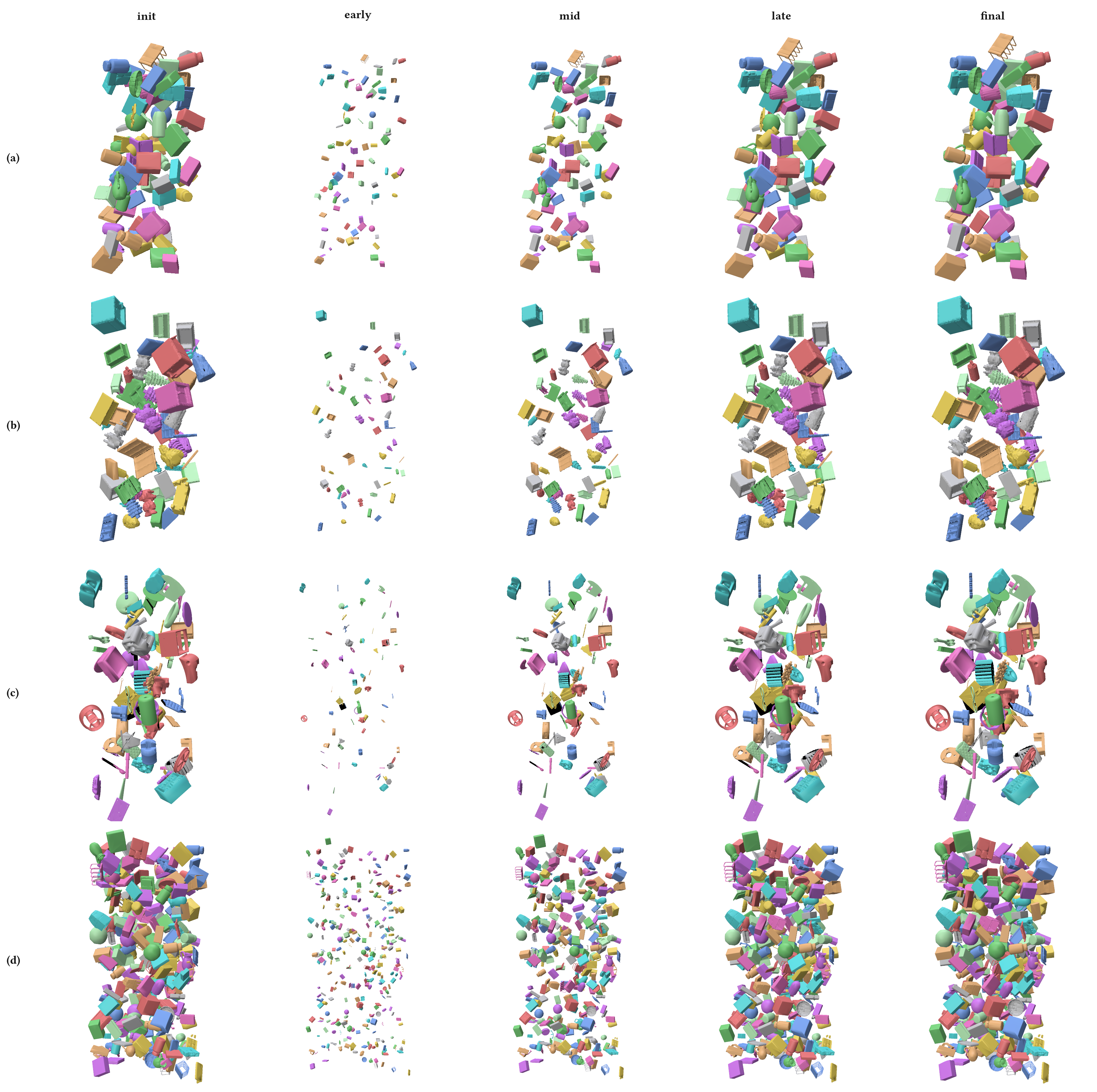}
  \caption{\name\ along the scale path: \textbf{(a)}~Kubric $N{=}80$, \textbf{(b)}~HY3D-Bench $N{=}60$, \textbf{(c)}~Thingi10K $N{=}80$, \textbf{(d)}~Kubric $N{=}300$. Each row runs left to right from the interpenetrating spawn through scaling ratios $0.31$ ($0.26$ for row c), $0.61$, $0.91$ to the full-scale result, under a camera locked per row. Every scene starts with tens to hundreds of penetrating pairs and ends at pen${=}0$ under the shared evaluator.}
  \Description{A four-row by five-column gallery of S4R progressive scaling, with rows labeled (a) through (d): (a) Kubric N=80, (b) HY3D-Bench N=60, (c) Thingi10K N=80, (d) Kubric N=300. Within each row the five columns are the initial spawn, scaling ratios 0.31 (0.26 for row c), 0.61, and 0.91, and the final result. The leftmost cells show densely overlapping bodies and the rightmost cells show the same bodies fully resolved with no penetration.}
  \label{fig:s4r_progression_gallery}
\end{figure*}

\subsection{Jointly Optimizing Positions and Rotations}
\label{sec:exp_rotation}
Our formulation extends naturally to jointly optimizing positions and rotations: each body gains a small-angle rotation increment $\boldsymbol{\omega}_i$, the objective becomes $\|\Delta \vp_i\|^2 + \beta \|\boldsymbol{\omega}_i\|^2$, and the contact rows acquire the corresponding lever-arm terms (App.~\ref{ap:6dof}). The scale schedule and tail logic are unchanged, but an accepted rotation alters the witness geometry, so it triggers fresh nearby-pair queries and invalidates the affected cache entries. App.~\ref{ap:6dof} details the formulation, and App.~\ref{sec:upright_plane} builds the upright-on-plane tabletop variant on the same rotational machinery.
Tab.~\ref{tab:abl_rotation} shows that on typical, loosely spawned Kubric layouts rotation buys a modest displacement reduction at a large wall-time cost; on packing-limited spawns it becomes a packing lever, lowering RMSD by $9$--$10\%$ at $2.4$--$4.8\times$ wall time (Tab.~\ref{tab:rotation_packing}), so the optional 6-DOF path is useful only when the application values the tighter pack enough to pay the extra contact-refresh cost.

\begin{table}[t]
  \caption{6-DOF vs.\ translation-only QP on Kubric (3-seed mean, $N{=}40$).
  Rotation lowers RMSD by about $13\%$ on these layouts but adds a ${\sim}15\times$ wall-time overhead.}
  \label{tab:abl_rotation}
  \centering
  \small
        \setlength{\tabcolsep}{10pt}

  \resizebox{\columnwidth}{!}{
  \begin{tabular}{@{}llccc@{}}
    \toprule
    $N$ & Method                      & Pen. & RMSD & Time \\
    \midrule
    \multirow{2}{*}{40}  & \textbf{3-DOF (translation only)} & 0 & 0.036 & \textbf{0.16 s} \\
                         & 6-DOF (translation + rotation)    & 0 & \textbf{0.031} & 2.4 s \\
    \bottomrule
  \end{tabular}
  }
\end{table}

On Kubric, the rotation DOFs leave the residual penetrating-pair count at zero and lower the translation-only RMSD by about $13\%$ ($0.031$ vs.\ $0.036$).
The wall-time overhead is ${\sim}15\times$ and is dominated by the auxiliary $\mathrm{SO}(3)$ refinement loop that runs after every QP update: each accepted angular step changes the witness points and contact normals, forcing a fresh distance pass over nearby pairs.
The $\mathrm{SO}(3)$ loop and the translation oracle share prebuilt mesh-query structures; rotation updates only modify the rigid transform used for nearby-pair checks.
Rotation DOFs do not pay for themselves on layouts where translation alone can separate the bodies; we therefore use 3-DOF by default and report 6-DOF as an optional extension rather than as part of the main method.

\paragraph{Rotation as a Packing Lever.}
To test whether rotation helps when packing binds, we run a controlled comparison on the same $N{=}40$ Kubric objects, solver, and evaluator over 8 seeds, varying only spawn density: \emph{loose} is the main-benchmark $N{=}40$ spawn, and \emph{tight} shrinks the spawn box $2.5\times$ per axis (${\sim}16\times$ denser, packing-limited). Every cell reaches pen${=}0$.
The 6-DOF variant lowers RMSD on 8/8 seeds in both regimes: $0.0792$ vs.\ $0.0869$ on tight spawns ($-9\%$) and $0.0262$ vs.\ $0.0290$ on loose spawns ($-10\%$).
The gain costs $2.4$--$4.8\times$ wall time, because each accepted rotation invalidates the warm start and the analytical distance cache; this trade-off is what justifies the translation-only default.
\begin{table}[t]
  \caption{Rotation as a packing lever on Kubric $N{=}40$ (8-seed mean; pen${=}0$ in every cell). Rows differ only in spawn density: \emph{loose} is the main-benchmark spawn; \emph{tight} is $2.5\times$ smaller per axis (${\sim}16\times$ denser). When packing binds, 6-DOF beats 3-DOF on 8/8 seeds.}
  \label{tab:rotation_packing}
  \Description{A two-row table comparing translation-only and six-degree-of-freedom variants on loose and tight spawns. On tight spawns rotation lowers displacement on all seeds at higher wall time.}
  \centering\small
  \setlength{\tabcolsep}{4pt}
  \resizebox{\columnwidth}{!}{%
  \begin{tabular}{@{}lccc@{}}
    \toprule
    Spawn & 3-DOF (RMSD / Time) & 6-DOF (RMSD / Time) & 6-DOF : 3-DOF \\
    \midrule
    Tight & 0.0869 / 1.98 s & \textbf{0.0792} / 9.59 s & RMSD $0.91\times$, time $4.8\times$ \\
    Loose & 0.0290 / 1.53 s & \textbf{0.0262} / 3.68 s & RMSD $0.90\times$, time $2.4\times$ \\
    \bottomrule
  \end{tabular}%
  }
\end{table}

\subsection{Application: Penetration-Free Initialization for Barrier-Based Solvers}
\label{sec:exp_engine_frontend}

Barrier-based solvers such as IPC presuppose an intersection-free state and have no mechanism to establish one from overlap; impulse- and penalty-based engines can step from an overlapping state, but their contact response has to undo the overlap, and on a deep one that response is violent. App.~\ref{sec:exp_engine_frontend_app} measures the second case on MuJoCo~\cite{todorov2012mujoco}, PyBullet~\cite{coumans2021pybullet}, and Isaac Gym~\cite{isaacgym} --- peak speeds of $7.8$, $12.5$, and $230.6$\,m/s from the raw layout, and a configuration that stays essentially at rest after \name.

\paragraph{IPC Feasibility.}
For IPC the failure is total rather than merely violent: its barrier acts on unsigned primitive distances and is kept finite by a CCD-filtered line search that assumes an intersection-free start, so an interpenetrating layout violates that invariant before the first Newton step, and IPC provides no procedure for restoring it.
On tightly spawned Kubric $N{=}40$ scenes (3 seeds, $\dhat{=}0.02$, same mesh evaluator as Sec.~\ref{sec:exp_setup}), the raw layouts contain $236$--$333$ penetrating pairs at depths up to $0.073$, and no seed satisfies IPC's precondition.
After \name, every seed reports pen${=}0$ with a strictly positive closest gap ($+6{\times}10^{-5}$ to $+3{\times}10^{-4}$), so the precondition holds (a log-barrier of IPC's form evaluated on the closest gap is finite, $1.6\times10^{-3}$--$2.3\times10^{-3}$; no IPC solve is run in this test), at RMSD $0.086$--$0.093$.
(The closest gap is smaller than the solve-time margin $\dhat{=}0.02$ by design: the tail refinement terminates when the detector reports \emph{zero penetrations}, not when every pair clears the full $\dhat$ margin, so gaps in $(0,\dhat)$ remain --- IPC needs only $d>0$, which is what matters here.)
\name thus establishes, rather than assumes, the intersection-free invariant that IPC maintains.

\subsection{Application: Rigid-Body Interpenetration Resolution for More 3D Assets}
\label{sec:exp_generated_asset}

Beyond the Kubric, HY3D-Bench, and Thingi10K benchmarks, \name is a natural post-process for generated 3D scenes whose coarse physical initialization contains mesh-level interpenetrations.
Together with the generatively produced HY3D-Bench pool (Sec.~\ref{sec:exp_setup}), this covers the main penetration sources we target: text/image-to-3D generator output, packed 3D asset datasets, and coarse scene-layout initialization.
We draw from an 88-asset pool of textured household meshes downloaded from the BlenderKit asset library~\cite{blenderkit} (max-extent--normalized, $12$ to $820$k faces), covering kitchenware, furniture, and everyday objects: the upright qualitative gallery (Fig.~\ref{fig:upright_yaw_qp_blender}) uses 8 scenes sampled from the full pool, while the quantitative Tab.~\ref{tab:generated_asset_repair} uses the five meshes that pass a watertight/manifold filter, screened over a 13-asset subset of the pool.  For the tabletop version of this application, each asset is constrained to a common support plane and an upright final pose using App.~\ref{sec:upright_plane}.  Fig.~\ref{fig:upright_yaw_qp_blender} shows eight representative scenes whose initial layouts span four overlap regimes (severe, heavy, moderate, light; controlled by spawn radius and an initial-penetration floor), with $28^\circ$--$42^\circ$ roll/pitch tilts and 97, 95, 67, 55, 41, 32, 26, and 20 initial mesh-level penetrating pairs.  During progressive inflation, the solve restores the upright pose while optimizing admissible in-plane translation and yaw, and reaches zero penetration at full scale in all eight scenes.  Collision evaluation and rendering use identical triangulated geometry and rigid transforms, and the final state is verified by a strict FCL pass against that same mesh.

\begin{figure*}
  \centering
  \includegraphics[width=\textwidth]{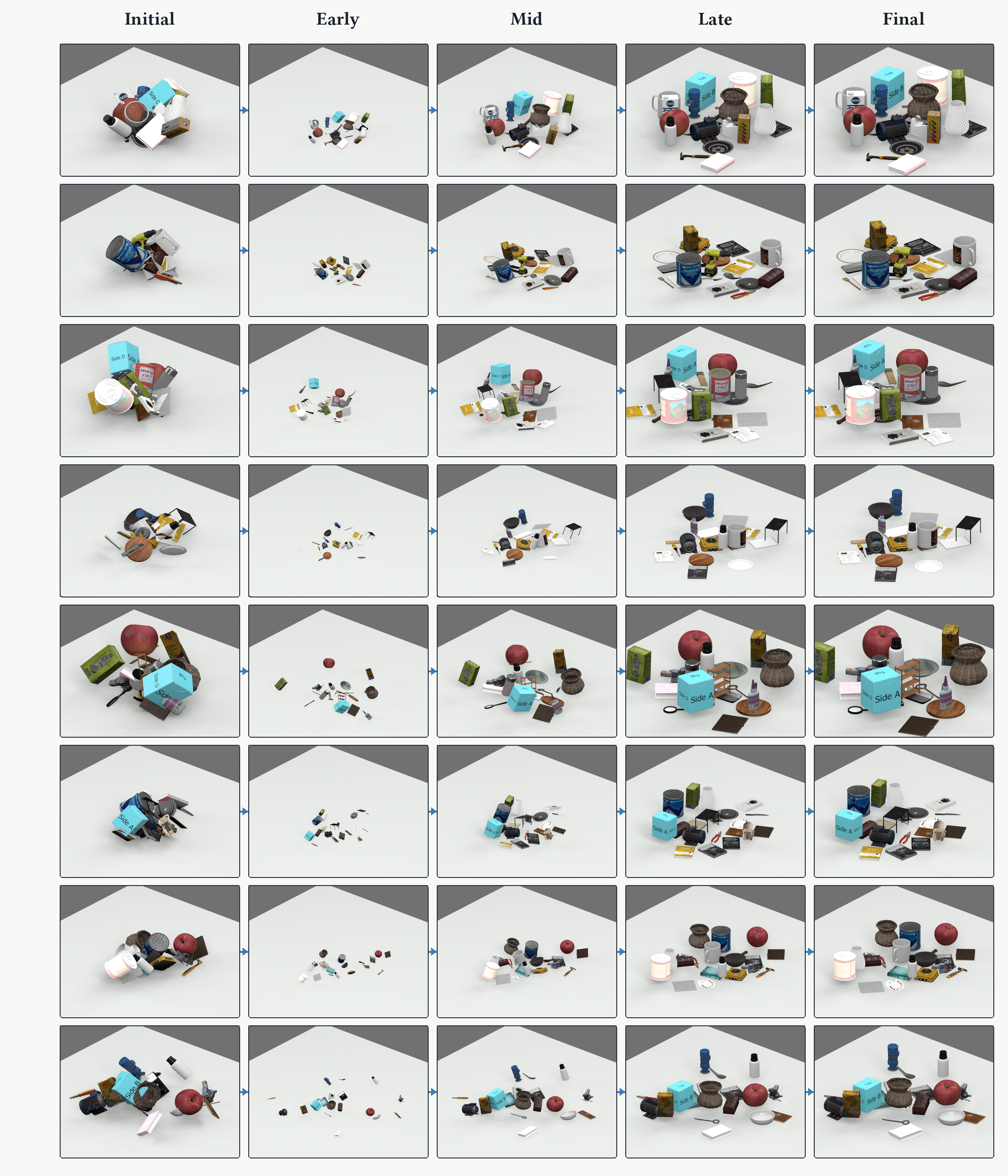}
  \caption{Upright-on-plane generated-asset repair. Eight scenes (rows) progress left to right from the tilted, mesh-interpenetrating initial state through 31\%, 61\%, and 91\% inflation (the Early/Mid/Late columns) to the final upright result.}
  \Description{Eight-row, five-column grid of Blender renders. Each row starts with tilted, interpenetrating objects and ends with the same objects standing upright on a shared support plane with zero pairwise penetration.}
  \label{fig:upright_yaw_qp_blender}
\end{figure*}

\begin{figure}
  \centering
  \includegraphics[width=\columnwidth]{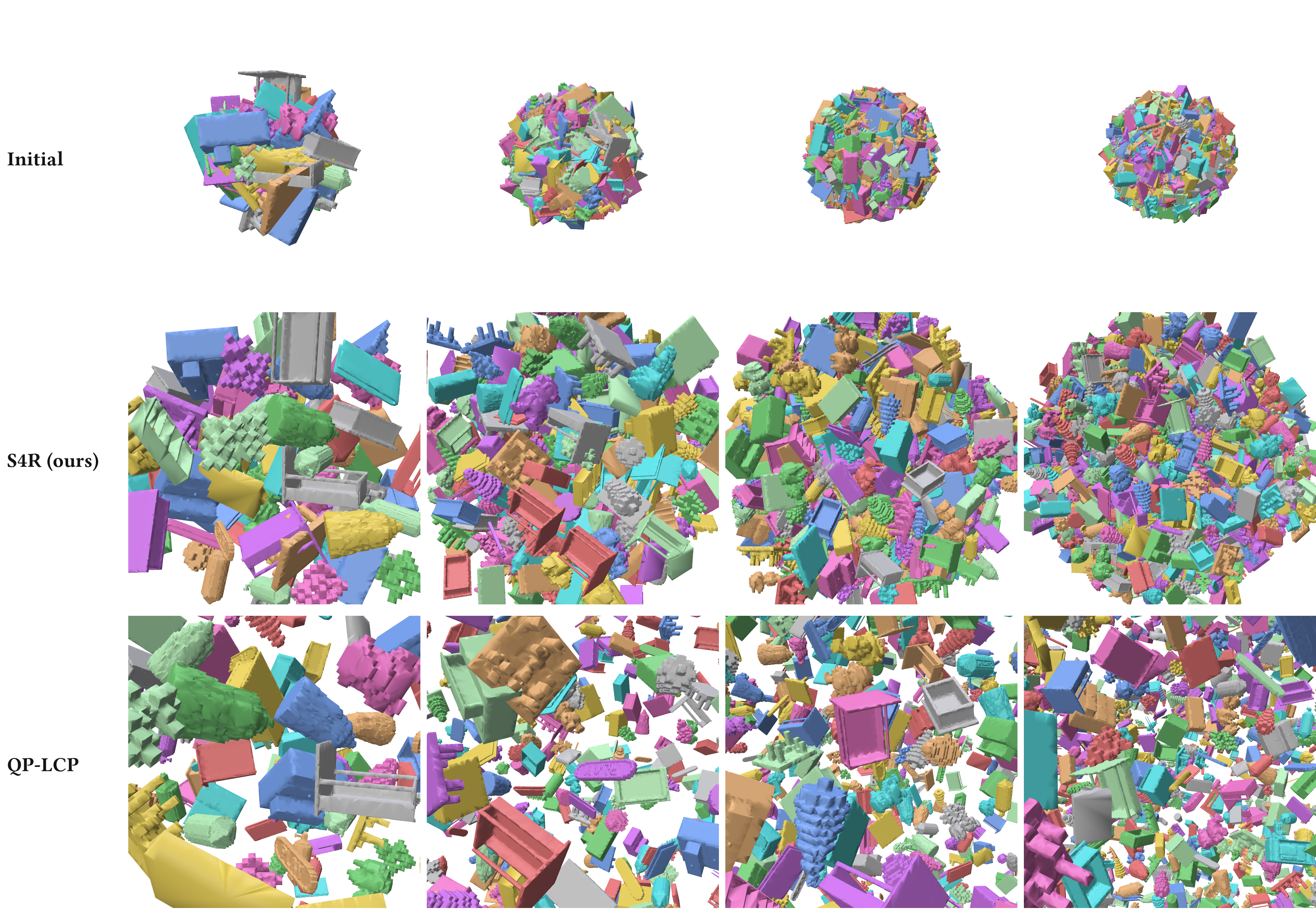}
  \caption{Final states for the spherical-spawn stress test (seed 42; rows: initial spawn, \name, QP/LCP; columns, left to right: $N{=}100$, $500$, $1000$, $2000$). Each column shares one camera anchored to the spawn-ball radius, so visible spread reflects actual displacement: \name\ keeps the pack tightest in every column.}
  \Description{Three-row, four-column gallery of Polyscope renders. Rows are the initial spawn, S4R (ours), and QP/LCP; columns are N=100, 500, 1000, 2000. Each column uses one shared camera frame anchored to the spawn-ball radius, so visual size differences across rows reflect the solver's actual displacement.}
  \label{fig:sphere_baselines_gallery}
\end{figure}

Fig.~\ref{fig:sphere_baselines_gallery} shows the densest regime we test: bodies spawned inside a ball at constant cell density, up to $25{,}000$ initial penetrating pairs at $N{=}2000$. Under a camera anchored to the spawn ball the difference in displacement is visible --- QP/LCP spreads the pack two to three times farther (App.~\ref{sec:exp_sphere}).

Tab.~\ref{tab:generated_asset_repair} further compares \name\ against QP/LCP and PD-PGS on the solver-usable subset of the pool introduced above. That subset is small --- five usable meshes --- so each scene repeats objects many times, which stresses contact reuse and penalizes solvers that rebuild the global contact graph every iteration (QP/LCP, PD-PGS). Face counts are also heterogeneous, $600$--$5476$ across the subset.  \name\ reuses mesh-query structures and cached contact information across scale steps, which is why the wall-time gap grows with $N$ rather than tracking the global QP size alone: \name\ is $2.8\times$ faster than QP/LCP and $11\times$ faster than PD-PGS at $N{=}100$, and $4.0\times$ and $13\times$ faster at $N{=}1000$.

\begin{table}[!tbp]
  \caption{Synthetic asset layout repair on five solver-usable meshes of the 88-asset BlenderKit pool (a watertight/manifold filter over a 13-asset subset; the broader gallery in Fig.~\ref{fig:upright_yaw_qp_blender} uses the full pool).
  3-seed averages. All solvers reach final pen${=}0$; \name\ is the fastest at every $N$, with RMSD below QP/LCP's and above PD-PGS's.}
  \label{tab:generated_asset_repair}
  \centering
  \small
  \setlength{\tabcolsep}{8pt}
  \resizebox{\columnwidth}{!}{%
  \begin{tabular}{@{}llcccc@{}}
    \toprule
    Method & $N$ & Init Pen. & Final Pen. & RMSD & Time \\
    \midrule
    \multirow{3}{*}{\textbf{\name}} & 100  & 87   & 0 & \underline{0.039} & \textbf{1.5 s} \\
                                     & 500  & 491  & 0 & \underline{0.043} & \textbf{11.6 s} \\
                                     & 1000 & 1031 & 0 & \underline{0.047} & \textbf{26.5 s} \\
    \midrule
    \multirow{3}{*}{QP/LCP}  & 100  & 87   & 0 & 0.041 & \underline{4.1 s}   \\
                              & 500  & 491  & 0 & 0.051 & \underline{43.1 s}   \\
                              & 1000 & 1031 & 0 & 0.051 & \underline{105 s}   \\
    \midrule
    \multirow{3}{*}{PD-PGS}  & 100  & 87   & 0 & \textbf{0.028} & 16.8 s   \\
                              & 500  & 491  & 0 & \textbf{0.031} & 123 s   \\
                              & 1000 & 1031 & 0 & \textbf{0.032} & 348 s   \\
    \bottomrule
  \end{tabular}%
  }
  \par
  \raggedright\footnotesize{We evaluate all three methods on identical scenes using the same unconstrained 3-DOF repair task. Bold marks the best entry per column and underline the runner-up; the penetration columns tie everywhere and are left unmarked.}
\end{table}

\subsection{Additional Stress Tests and Ablations}
\label{sec:exp_supplement_pointer}
The supplement reports several further evaluations (App.~\ref{sec:appendix}): a high-density spherical-spawn stress test to $N{=}2000$ (App.~\ref{sec:exp_sphere}); a contact-density sweep (App.~\ref{sec:exp_density}); a full component ablation that isolates each acceleration technique --- progressive scaling versus a direct full-scale QP, analytical scale-of-impact event handling, tail refinement, the analytical distance collision cache, schedule choice, and contact sparsity (App.~\ref{sec:exp_ablation}); the optional 6-DOF extension (evaluated in Sec.~\ref{sec:exp_rotation}, formulated in App.~\ref{ap:6dof}); and a large-scale runtime breakdown (App.~\ref{sec:exp_breakdown}). The acceleration components change wall time without changing feasibility on the tested configurations (one seed of one intermediate stage of the implementation stack leaves one pair, App.~\ref{sec:exp_perf_summary}): scale-of-impact event skipping and the analytical distance cache each remove redundant full detections, and contact sparsity keeps the per-step QP small --- together, this is what holds the measured wall time below the quadratic reference slope over the tested range (Sec.~\ref{sec:exp_scaling}). The progressive-scaling and tail-refinement ablations are different in kind: they disable feasibility-critical stages, and are reported as such.

\section{Discussion and Conclusion}
\label{sec:conclusion}

We have presented \name, a method that, when successful, transforms an infeasible configuration of interpenetrating rigid bodies into a state with zero evaluator-reported penetration, with low displacement from the prescribed layout.
The method shrinks every body to a separated scale, checked with the evaluator, then progressively restores full size, using a small convex QP at each increment to restore the linearized margin.
Across Kubric, HY3D-Bench, and Thingi10K, \name\ reaches zero evaluator-reported penetration in every main-benchmark cell with nearly $N$-invariant displacement over the main-comparison range ($N\leq5000$), and is the fastest solver in both hardware tiers of the main comparisons (Sec.~\ref{sec:exp}).

The scale parameter turns one problem initialized deep inside the infeasible region into a sequence of small local contact QPs: on a separated, stable witness branch the linearization error is second order in the realized per-step motion. Progressive scaling is designed to keep many of these motions small and does so empirically on our benchmarks --- though it does not certify a small correction under a poorly conditioned active set --- with witness switches and stale cached rows handled by later re-detection and tail refinement. Because each step minimizes $\tfrac12\sum_i\|\Delta \mathbf{p}_i\|^2$ over the current active constraints, with no proximal term toward $\mathbf{p}^0$, the per-step objective empirically contributes to low cumulative displacement --- without guaranteeing that every individual correction is small --- and the RMSD at $s{=}1$ is nearly invariant in $N$. Each component carries part of this: progressive scaling supplies the shallow-contact subproblems (a single global linearization leaves residual penetration), the sparse warm-started QP replaces a Newton-and-barrier stack, the analytical cache cuts wall time $1.5\times$ at $M{=}3$ for a small RMSD cost, and tail refinement corrects the full-scale state over the detected near-contact set. App.~\ref{sec:exp_ablation} reports the runtime and displacement effect of each component.

\paragraph{Limitations.}
The method assumes a separated full-scale configuration \emph{exists}: when the bodies collectively exceed the available space, the continuation either stops before full scale and returns an incomplete report, or reaches full scale with the tail refinement reporting residual penetrations. The start is likewise empirical rather than certified: Eq.~\ref{eq:smin_safe} is a bounding-sphere condition, near pairs are perturbed apart (charged to RMSD), and the down-scaled state is verified with the evaluator (Sec.~\ref{sec:progressive_scaling}). The per-step model degrades near medial-axis or feature-switch regions and under poorly conditioned active sets; we certify neither, and the solver carries no per-step feasibility invariant. Rotation is small-angle-linearized ($\|\vomega\|_\infty\leq 0.1$\,rad per step), so scenes requiring large re-orientation would need more steps or a nonlinear rotation subsolve, and the scale-up is quasi-static --- gravity, friction, and inertia are not modeled.

The deeper assumption is a feasible \emph{path}: a collision-free configuration at every intermediate scale, not merely at the two ends. In free space such a path always exists --- the whole layout can be dilated about any point --- but it may require large displacement, and the local minimum-norm continuation follows a low-displacement path instead. Where confinement or fixed geometry closes that path up in between, the monotone continuation stalls even though the full-scale problem has a solution. Tight confinement is the clearest case: App.~\ref{sec:exp_density} locates the onset at packing fraction $\phi\approx0.18$, where no seed reaches zero reported penetration --- consistent with path blockage or severe local conditioning, though not a certificate that no continuous path exists. Fully nested bodies fail at the other end (the deployed monotone path finds no exit from a closed full-scale cavity, although nesting alone does not rule out a collision-free scale path), and near-coincident centers make the separation direction so ill-conditioned that the push side can flip under small input changes. Finally, the upright-on-plane mechanism (App.~\ref{sec:upright_plane}) hard-codes one support plane and one up axis; oriented multi-support equilibrium --- a pile leaning against a wall --- would need the 6-DOF QP with per-support constraints, which we have not evaluated.

\paragraph{Future Work.}
\name\ is most useful to a time-stepping simulator as a repair stage, not a contact-solver replacement: simulators handle shallow contact well, while deep overlap violates their small-step assumptions (Sec.~\ref{sec:exp_engine_frontend}). The per-step program could absorb the physical terms it omits --- mass weighting, friction, velocity variables --- and hand the simulator consistent initial conditions. Shrinking also exposes an editable collision-free state (Fig.~\ref{fig:editing_workflow}): shrink an interpenetrating layout, rearrange it while separated, then restore scale along the same path. Where the monotone schedule is blocked, a discrete \emph{scale jump} across a finite infeasible interval --- the opposite of the SOI jump of Sec.~\ref{sec:soi}, which is guided by the intervals it can rule out --- would reach configurations the deployed continuation cannot. Articulated mechanisms, deformable bodies, exact primitive-level event scheduling, and batched GPU QP solves are natural extensions.

\begin{figure}[t]
  \centering
  \includegraphics[width=\columnwidth]{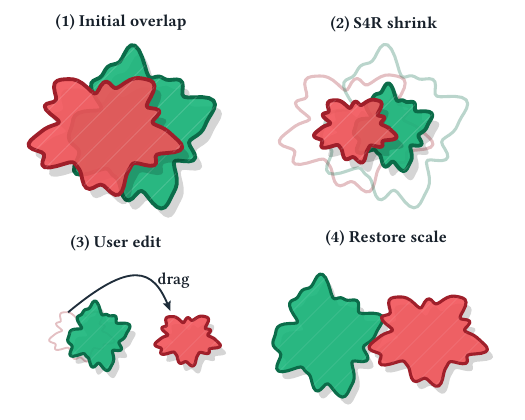}
  \caption{S4R as an editable scale-space. Two complex 2D shapes initially overlap; after shrinking, the user can move the red object to the right and keep the green object on the left, then restore the scale to obtain a separated layout.}
  \Description{A four-panel 2D illustration showing a red and a green complex shape. The first panel shows the shapes overlapping. The second shows both shapes shrunk with clearance. The third shows a user dragging the red shape to the right while the green shape remains on the left. The fourth shows the full-scale separated result.}
  \label{fig:editing_workflow}
\end{figure}

\begin{acks}
The authors sincerely thank all the reviewers for their constructive feedback. Zhiyang Dou is supported in part by a Hasso Plattner Fellowship (2026--2027) of the MIT and HPI AI and Creativity Hub (MHACH). This work was also supported by the Macao Science and Technology Development Fund (FDCT) under Grant 0119/2025/ITP2.
\end{acks}
\bibliographystyle{ACM-Reference-Format}
\bibliography{references.bib}

\clearpage
\appendix
\renewcommand\thefigure{\Alph{section}\arabic{figure}}
\renewcommand\thetable{\Alph{section}\arabic{table}}

\section{Supplementary Material}
\label{sec:appendix}

\emph{Table convention (all supplementary tables).} As in the main text, a bold method name marks our variant (\name/\name-Warp), and a bold value marks the best entry per column within its comparison group (lower error or time, higher speedup; ties at the displayed precision are all bold, purely descriptive tables carry no marks, and columns in which every entry is equal are left unmarked). Underline marks the runner-up where a caption says so, and a caption may restrict the marked columns.

\subsection{Runtime Cost Decomposition}
\label{sec:theory_speed}
\label{sec:theory}
Progressive scaling is not an unconditional speedup: it wins exactly when the mesh-query time it saves exceeds the extra continuation work it adds.
The following is an accounting model for the measured runtime, not a complexity theorem: per-query and per-QP costs are modeled as fixed, whereas real costs vary with pose, traversal, and warm start. Let $q_{ij}>0$ be the modeled wall-clock cost of one exact mesh-level proximity query for pair $(i,j)$.
Write $Q_{\mathrm{saved}}$ for the mesh-query time saved within this accounting model by evaluating contacts only at scale events along the progressive path (rather than re-querying every pair at full scale), and $A_{\mathrm{extra}}$ for the remaining cost of the continuation (per-scale QPs, cache maintenance, event scheduling, and tail refinement including its queries, minus the direct method's QP/overhead); App.~\ref{ap:theory_speed} gives the formal definitions.
Under fixed hardware, collision backend, and QP solver, the two totals differ only in these terms, so progressive scaling is faster than direct full-scale resolution precisely when
\begin{equation}
\label{eq:exact_speed_condition_app}
Q_{\mathrm{saved}} > A_{\mathrm{extra}}.
\end{equation}
Complex geometry helps only insofar as it raises the cost of the exact mesh queries the continuation path avoids: on small scenes, already-sparse contact, or backends with cheap queries, the added scale steps can dominate, which is why the one-shot Direct-QP ablation of Tab.~\ref{tab:abl_scaling} can be faster at small scale while leaving residual penetration.

\subsection{Details for the Conditional Speed Statement}
\label{ap:theory_speed}

We spell out the cost quantities used in the runtime cost decomposition (Eq.~\ref{eq:exact_speed_condition_app}).
Let $N$ be the number of rigid bodies, and let $q_{ij}>0$ denote the
wall-clock cost of one exact mesh-level proximity query for body pair $(i,j)$
under the chosen collision backend.  This quantity absorbs triangle count,
BVH quality, concavity, and implementation details; we do not assume a closed
form such as $q_{ij}=\Theta(V)$ in a mesh-complexity measure $V$ such as the face count.

Consider a direct full-scale method with $R_{\mathrm{dir}}$ outer
re-linearization iterations.  At iteration $r$, let
$\mathcal P_{\mathrm{dir}}^r$ be the candidate pairs queried by the exact
mesh detector, and let $m_{\mathrm{dir}}^r$ be the number of contact
inequalities in the global QP.  Let $\Psi(n,m)$ denote the setup and solve
cost of a QP with $n$ variables and $m$ inequality constraints.  The direct
cost is
\begin{equation}
\label{eq:theory_direct_cost}
\begin{aligned}
T_{\mathrm{dir}} =
\sum_{r=1}^{R_{\mathrm{dir}}}
\bigg[
  \sum_{(i,j)\in\mathcal P_{\mathrm{dir}}^r} q_{ij}
  + \Psi(3N,m_{\mathrm{dir}}^r)
  + h_{\mathrm{dir}}^r
\bigg],
\end{aligned}
\end{equation}
where $h_{\mathrm{dir}}^r\geq0$ collects remaining per-iteration overhead.

For \name, let $K$ be the number of scale steps and
$\mathcal I_{\mathrm{det}}\subseteq\{0,\ldots,K-1\}$ the steps at which the
exact mesh detector is invoked.  At step $k$, let $\mathcal P_k$ be the
candidate pairs passed to the detector, $\mathcal A_k$ the active QP contact
set, and $\mathcal B_k$ the bodies incident to $\mathcal A_k$.  The
progressive cost is
\begin{equation}
\label{eq:theory_s4r_cost}
\begin{aligned}
T_{\mathrm{S4R}} =
&\sum_{k\in\mathcal I_{\mathrm{det}}}
  \sum_{(i,j)\in\mathcal P_k} q_{ij}
 +\sum_{k=0}^{K-1}\Psi(3|\mathcal B_k|,|\mathcal A_k|)\\
&\quad + H_{\mathrm{cache}} + T_{\mathrm{tail}},
\end{aligned}
\end{equation}
where $H_{\mathrm{cache}}\geq0$ is the total cost of analytical distance
updates, scale-event scheduling, and warm-start maintenance, and
$T_{\mathrm{tail}}\geq0$ is the final refinement cost.

Subtracting Eq.~\ref{eq:theory_s4r_cost} from
Eq.~\ref{eq:theory_direct_cost} yields
\begin{equation*}
\begin{aligned}
T_{\mathrm{dir}}-T_{\mathrm{S4R}}
=Q_{\mathrm{saved}}-A_{\mathrm{extra}},
\end{aligned}
\end{equation*}
where
\begin{align}
Q_{\mathrm{saved}}
&=
\sum_{r=1}^{R_{\mathrm{dir}}}
  \sum_{(i,j)\in\mathcal P_{\mathrm{dir}}^r} q_{ij}
-
\sum_{k\in\mathcal I_{\mathrm{det}}}
  \sum_{(i,j)\in\mathcal P_k} q_{ij},\\
A_{\mathrm{extra}}
&=
\sum_{k=0}^{K-1}\Psi(3|\mathcal B_k|,|\mathcal A_k|)
+H_{\mathrm{cache}}+T_{\mathrm{tail}}\notag\\
&\quad -
\sum_{r=1}^{R_{\mathrm{dir}}}
\left[
\Psi(3N,m_{\mathrm{dir}}^r)+h_{\mathrm{dir}}^r
\right].
\end{align}
Therefore $T_{\mathrm{S4R}}<T_{\mathrm{dir}}$ if and only if
$Q_{\mathrm{saved}}>A_{\mathrm{extra}}$, which proves
the cost decomposition of Eq.~\ref{eq:exact_speed_condition_app}.

For interpretation, suppose query costs are comparable at a fixed mesh
complexity scale $V$: there exist $0<c_{\min}\leq c_{\max}$ and a positive
function $G(V)$ such that
\begin{equation}
c_{\min}G(V)\leq q_{ij}\leq c_{\max}G(V)
\end{equation}
for all queried pairs.  Let
\[
C_{\mathrm{dir}}=\sum_{r=1}^{R_{\mathrm{dir}}}
|\mathcal P_{\mathrm{dir}}^r|,
\qquad
C_{\mathrm{S4R}}=\sum_{k\in\mathcal I_{\mathrm{det}}}|\mathcal P_k|,
\]
and let $A_{\mathrm{dir}}$ and $A_{\mathrm{S4R}}$ denote the remaining (non-main-loop-query) parts of
Eqs.~\ref{eq:theory_direct_cost} and~\ref{eq:theory_s4r_cost}.  A useful
sufficient condition for speedup is
\begin{equation}
c_{\min}C_{\mathrm{dir}}G(V)-c_{\max}C_{\mathrm{S4R}}G(V)
>
A_{\mathrm{S4R}}-A_{\mathrm{dir}}.
\end{equation}
If $c_{\min}C_{\mathrm{dir}}>c_{\max}C_{\mathrm{S4R}}$, this becomes the
mesh-complexity threshold
\begin{equation}
G(V)>
\frac{[A_{\mathrm{S4R}}-A_{\mathrm{dir}}]_+}
{c_{\min}C_{\mathrm{dir}}-c_{\max}C_{\mathrm{S4R}}},
\qquad [x]_+=\max(x,0).
\end{equation}
Thus complex meshes favor progressive scaling only when the progressive path
reduces enough exact mesh queries for the saved query time to dominate its
extra continuation work.

\paragraph{Penalty-cleanup fallback.}\phantomsection\label{ap:penalty_cleanup} The released solver also offers an optional pairwise penalty pass after the correction-QP tail: each remaining penetrating pair is nudged apart along its current normal by its penetration depth plus a small margin, split equally between the two bodies, for up to $200$ iterations. It is switched off in every reported run; all \name\ numbers in this paper come from the progressive-scaling QP and the bounded tail refinement alone.

\subsection{Proof of Lemma~\ref{lem:taylor_residual} (Linearization Residual)}
\label{ap:theory_linearization}

\textbf{Linearization residual.}
Linearizing a distance function discards the second-order remainder
\begin{equation*}
  R(\Delta \mathbf{x}) \;=\; d(\mathbf{x}+\Delta \mathbf{x})-d(\mathbf{x})-\nabla d(\mathbf{x})^{\top}\Delta \mathbf{x} .
\end{equation*}
We bound it on a fixed smooth branch and assume the Hessian bound directly, rather than deriving it from the reach of the underlying surfaces.

  \begin{lemma}[First-order residual on a smooth local branch]
  \label{lem:taylor_residual}
  Let $f:\Omega\to\mathbb R$ be $C^2$ on an open convex set $\Omega\subset\mathbb R^m$ with $\|\nabla^2 f(\mathbf{x})\|\leq L_{\mathrm{seg}}$ for all $\mathbf{x}\in\Omega$, where $\|\cdot\|$ denotes the Euclidean norm and the spectral norm it induces.
  If $\mathbf{x},\mathbf{x}+\Delta \mathbf{x}\in\Omega$, then
  \begin{equation}
    \bigl|f(\mathbf{x}+\Delta \mathbf{x})-f(\mathbf{x})-\nabla f(\mathbf{x})^{\top}\Delta \mathbf{x}\bigr|\;\leq\;\frac{L_{\mathrm{seg}}}{2}\|\Delta \mathbf{x}\|^2 .
  \end{equation}
  \end{lemma}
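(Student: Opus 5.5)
The plan is to reduce the statement to a one-variable Taylor argument along the segment joining $\mathbf{x}$ and $\mathbf{x}+\Delta\mathbf{x}$. Convexity of $\Omega$ guarantees that the whole segment $\{\mathbf{x}+t\Delta\mathbf{x}: t\in[0,1]\}$ lies in $\Omega$. On this segment the $C^2$ regularity and the Hessian bound are therefore available at every point. Define the scalar function $g(t):=f(\mathbf{x}+t\Delta\mathbf{x})$. Since $\Omega$ is open, $g$ is $C^2$ on an open interval containing $[0,1]$. The chain rule gives
\[
g'(t)=\nabla f(\mathbf{x}+t\Delta\mathbf{x})^{\top}\Delta\mathbf{x},
\qquad
g''(t)=\Delta\mathbf{x}^{\top}\nabla^2 f(\mathbf{x}+t\Delta\mathbf{x})\,\Delta\mathbf{x}.
\]

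Next, I would write the residual as $g(1)-g(0)-g'(0)$. I will use the integral form of Taylor's remainder, $g(1)-g(0)-g'(0)=\int_0^1(1-t)\,g''(t)\,dt$. It follows from integrating $g'(t)-g'(0)=\int_0^t g''(\tau)\,d\tau$ over $[0,1]$ and swapping the order of integration (Fubini on a triangle, valid since $g''$ is continuous). Alternatively, the Lagrange form $\tfrac12 g''(\theta)$ for some $\theta\in(0,1)$ gives the same result in one line.

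The bound then comes from Cauchy--Schwarz together with the definition of the induced spectral norm: $|g''(t)|\le\|\Delta\mathbf{x}\|\,\|\nabla^2 f(\mathbf{x}+t\Delta\mathbf{x})\,\Delta\mathbf{x}\|\le L_{\mathrm{seg}}\|\Delta\mathbf{x}\|^2$. Integrating, the residual is at most $L_{\mathrm{seg}}\|\Delta\mathbf{x}\|^2\int_0^1(1-t)\,dt=\tfrac{L_{\mathrm{seg}}}{2}\|\Delta\mathbf{x}\|^2$, which is the claim.

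There is no real obstacle here. The only point needing care is that the Hessian bound is used only along the segment, which is exactly why convexity of $\Omega$ is assumed. In the paper's application, this is the hypothesis that the update stays on one smooth witness branch; if the branch is left, the lemma gives no information.
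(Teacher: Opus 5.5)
Your proof is correct and follows essentially the same route as the paper: the integral-form Taylor remainder along the segment, the spectral-norm bound on the Hessian quadratic form, and $\int_0^1(1-t)\,\mathrm{d}t=\tfrac12$. You also state explicitly that convexity of $\Omega$ keeps the segment inside the domain, a step the paper uses without comment.
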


\begin{proof}
Taylor's theorem with integral remainder gives
\[
f(\mathbf{x}+\Delta \mathbf{x})
=
f(\mathbf{x})+\nabla f(\mathbf{x})^{\top}\Delta \mathbf{x}
+
\int_0^1
(1-t)\,
\Delta \mathbf{x}^{\top}\nabla^2 f(\mathbf{x}+t\Delta \mathbf{x})\Delta \mathbf{x}\,\mathrm{d}t .
\]
Taking absolute values and using $\|\nabla^2 f(\mathbf{x})\|\leq L_{\mathrm{seg}}$ on $\Omega$ gives
\[
\left|
f(\mathbf{x}+\Delta \mathbf{x})-f(\mathbf{x})-\nabla f(\mathbf{x})^{\top}\Delta \mathbf{x}
\right|
\leq
\int_0^1(1-t)L_{\mathrm{seg}}\|\Delta \mathbf{x}\|^2\,\mathrm{d}t
=
\frac{L_{\mathrm{seg}}}{2}\|\Delta \mathbf{x}\|^2 .
\]
\end{proof}

The residual is thus quadratic in the step magnitude on a fixed smooth branch. This is a local conditioning statement: it says nothing across witness switches or nonsmooth mesh features, and the implementation never estimates $L_{\mathrm{seg}}$.

Where does such an $L_{\mathrm{seg}}$ come from, and why does it degrade with depth? For intuition only, consider the signed distance to a single smooth surface of \emph{reach} $\rho$~\cite{federer1959}, the largest distance within which every query point has a unique closest surface point. Positive reach alone does not justify assuming a $C^2$ distance branch~\cite{federer1959}; for a $C^2$ embedded surface the signed distance is $C^2$ in a sufficiently small tubular neighborhood~\cite[Theorem~1 and Remark~2]{foote1984regularity}, which is why we assume the $C^2$ regularity of the pairwise distance branch separately; we call this the \emph{local smooth-distance assumption}, and it underlies every local bound in this paper. Reach does not provide a uniform Hessian bound over the entire reach tube; on a restricted band $|d|\leq\delta<\rho$ it does: writing a query point as $\mathbf{x}=\mathbf{y}+d\,\mathbf{n}(\mathbf{y})$, with $\mathbf{y}$ its closest surface point and $\mathbf{n}$ the surface normal there, the Hessian of $d$ has the tangential eigenvalues $\kappa_a/(1+d\,\kappa_a)$, where $\kappa_a$ are the principal curvatures of the surface at $\mathbf{y}$, and a zero normal eigenvalue; since $|\kappa_a|\leq1/\rho$ on a surface of reach $\rho$ (the reach of the surface itself, not of the solid it bounds), $\|\nabla^2 d\|\leq 1/(\rho-|d|)$, which degrades toward the medial axis and gives no bound at all once $|d|\geq\rho$. If the whole update segment stays inside that band,
\[
  \sup_{t\in[0,1]}|d(\mathbf{x}+t\Delta \mathbf{x})|\;\leq\;\delta\;<\;\rho,
  \qquad \|\Delta \mathbf{x}\|\leq\delta,
\]
then one may take $L_{\mathrm{seg}}=1/(\rho-\delta)$ along the segment, and Lemma~\ref{lem:taylor_residual} gives $|R(\Delta \mathbf{x})|\leq\delta^2/[2(\rho-\delta)]$, asymptotic to $\delta^2/(2\rho)$ only in the shallow limit $\delta\ll\rho$. Note that the depth bound alone does not imply the band condition: it has to be assumed. This bound concerns the distance from a point to one surface; the constant $L_{\mathrm{seg}}$ of Sec.~\ref{sec:direct} refers to the pairwise distance branch in the coordinates used there and is assumed separately.

Fig.~\ref{fig:truncation_error} makes the same point on the one case where the residual is exact rather than bounded. Take the signed distance to a circle of radius $\rho$,
\begin{equation*}
  d(\mathbf{x})=\|\mathbf{x}\|-\rho ,
\end{equation*}
and an interior query point $\mathbf{x}_0=(a,0)$ with $a=\rho-\delta_0>0$, where $\delta_0$ is its penetration depth, so $a$ is the remaining distance to the medial axis (here the center). For a tangential update $\Delta \mathbf{x}=(0,\eta)$ we have $\nabla d(\mathbf{x}_0)^{\top}\Delta \mathbf{x}=0$, and the exact first-order residual is
\begin{equation}
\label{eq:circle_residual}
  \begin{aligned}
    \mathcal{E}(\eta;a) &\;:=\; d(\mathbf{x}_0+\Delta \mathbf{x})-d(\mathbf{x}_0)-\nabla d(\mathbf{x}_0)^{\top}\Delta \mathbf{x} \\
              &\;=\; \sqrt{a^{2}+\eta^{2}}-a \;=\; \frac{\eta^{2}}{\sqrt{a^{2}+\eta^{2}}+a},
  \end{aligned}
\end{equation}
which equals $\eta^{2}/(2a)+\mathcal O(\eta^{4}/a^{3})$ for $|\eta|\ll a$. The local quadratic coefficient is therefore $1/(2a)$: it degrades as the query point approaches the medial axis, and the quadratic regime itself holds only while $|\eta|\ll a$, crossing over to linear growth for $|\eta|\gg a$. The example is illustrative only: the deployed solver never estimates this scale.

For a direct method, the displacement required to escape a deep penetration
can be comparable to the penetration depth $\delta$, giving a residual
$\mathcal{O}(L_{\mathrm{seg}}\delta^2)$.  In progressive scaling, a scale step produces a per-step relative
motion $h_{ij}=\|\Delta \mathbf{p}_j^\star-\Delta \mathbf{p}_i^\star\|+\mathit{ds}\,(r_i+r_j)$, giving
a residual $\mathcal{O}(L_{\mathrm{seg}}h_{ij}^2)$; the inflation part of $h_{ij}$ is $\mathcal O(\mathit{ds}\,r_{\max})$
by construction, while the QP-correction part is small only when the active
constraints are well conditioned (Sec.~\ref{sec:objective}).  This argument assumes a local $C^2$ signed-distance function.
Triangle meshes violate this assumption globally at edges and vertices, so
the lemma is a local conditioning argument rather than a
global correctness proof; the mesh-level detector and tail refinement provide
the final feasibility check.

\subsection{Piecewise-Affine Property of the Frozen-Normal Translation QP}
\label{ap:proof_piecewise}

This appendix justifies the warm-start remark of Sec.~\ref{sec:schedule}. It concerns the \emph{translation-only} QP with frozen witnesses and normals; with rotational variables enabled the constraint matrix itself varies through the lever arms and no affine characterization is claimed.

Fix the current accepted state and freeze the row set, witnesses, normals, and closure coefficients, and consider the one-parameter trial QP
\[
\vq^\star(\lambda) \;=\; \argmin_{\vq}\; \tfrac{1}{2}\|\vq\|^2 \quad\text{s.t.}\quad \mA\,\vq \;\geq\; \vb(\lambda):=\vb_0+\lambda\,\vb_1,
\]
where $\mA$, $\vb_0$ and $\vb_1$ are constant and $\lambda$ parameterizes a trial inflation; $\vq$ stacks the translational corrections $\Delta\mathbf{p}_i$, $\lambda$ plays the role of $\mathit{ds}$, and the rows of $\vb_1$ are the closure coefficients $E_{ij}$. Let $D=\{\lambda:\ \mA\vq\geq\vb(\lambda)\ \text{for some}\ \vq\}$ be the feasible parameter domain; for $\lambda\in D$ the strictly convex objective has a unique minimizer.

\begin{proposition}
On every open interval of $D$ over which the binding row set $\mathcal R$ of $\vq^\star(\lambda)$ is constant and $\mA_{\mathcal R}$ has full row rank, $\vq^\star(\lambda)$ is affine in $\lambda$.
\end{proposition}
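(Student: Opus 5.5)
The plan is to use the KKT conditions of the strictly convex trial QP. Because the constraints are affine, Slater-type qualification is unnecessary: linear constraints already guarantee that KKT conditions are necessary and sufficient. So for each $\lambda\in D$ the unique minimizer $\vq^\star(\lambda)$ comes with a multiplier vector $\vlambda\ge 0$ satisfying
\[
\vq^\star(\lambda)=\mA^{\top}\vlambda,\qquad \mA\vq^\star(\lambda)\ge \vb(\lambda),\qquad \vlambda^{\top}\bigl(\mA\vq^\star(\lambda)-\vb(\lambda)\bigr)=0 .
\]
By complementary slackness, only the binding rows $\mathcal R$ can carry nonzero multipliers. Hence $\vq^\star(\lambda)=\mA_{\mathcal R}^{\top}\vlambda_{\mathcal R}$ together with $\mA_{\mathcal R}\vq^\star(\lambda)=\vb_{\mathcal R}(\lambda)$.

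Next I would solve for the multipliers. Substituting gives $\mA_{\mathcal R}\mA_{\mathcal R}^{\top}\vlambda_{\mathcal R}=\vb_{\mathcal R}(\lambda)$. Full row rank of $\mA_{\mathcal R}$ makes the Gram matrix $\mA_{\mathcal R}\mA_{\mathcal R}^{\top}$ invertible. This yields the closed form
\[
\vq^\star(\lambda)=\mA_{\mathcal R}^{\top}\bigl(\mA_{\mathcal R}\mA_{\mathcal R}^{\top}\bigr)^{-1}\bigl(\vb_{0,\mathcal R}+\lambda\,\vb_{1,\mathcal R}\bigr),
\]
valid at every $\lambda$ in the interval. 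Since $\mathcal R$ is the same set throughout, the matrix in front is a constant, and the right-hand side is affine in $\lambda$. So $\vq^\star$ is affine on the interval.

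The main obstacle is a subtlety in the multiplier step. Complementary slackness only says that non-binding rows have zero multipliers, not that binding rows have positive ones. I would handle this by noting that the derivation never needs positivity: it only uses $\vlambda_i=0$ off $\mathcal R$ and equality on $\mathcal R$. Even if the KKT multiplier is not unique in degenerate cases, any valid multiplier yields the same formula, and full row rank makes the multiplier on $\mathcal R$ unique. I would also remark that $\vq^\star$ is continuous in $\lambda$ on $D$, by Lipschitz continuity of the projection onto a translated polyhedron, though this is not needed inside a fixed interval.
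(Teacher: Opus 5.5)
Your proposal is correct and follows essentially the same route as the paper: KKT stationarity with multipliers supported on the binding rows $\mathcal R$, then inverting the Gram matrix $\mA_{\mathcal R}\mA_{\mathcal R}^{\top}$ (invertible by full row rank) to get $\vq^\star(\lambda)=\mA_{\mathcal R}^{\top}(\mA_{\mathcal R}\mA_{\mathcal R}^{\top})^{-1}(\vb_{0,\mathcal R}+\lambda\,\vb_{1,\mathcal R})$, which is affine because $\mathcal R$ is fixed on the interval. Your extra remarks make explicit what the paper leaves implicit: linear constraints suffice for KKT necessity, and positivity of the multipliers on binding rows is never used. One small point: avoid writing $\vlambda$ for the multipliers, since $\lambda$ is already the trial parameter; the paper uses $\bm{\mu}$ for this reason.
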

\begin{proof}
On such an interval the binding constraints satisfy $\mA_{\mathcal R}\vq^\star=\vb_{0,\mathcal R}+\lambda\vb_{1,\mathcal R}$, where the subscript $\mathcal R$ selects rows, and Karush--Kuhn--Tucker (KKT) stationarity for $\mA\vq\geq\vb(\lambda)$ reads $\vq^\star=\mA_{\mathcal R}^\top\bm{\mu}$ with multipliers $\bm{\mu}\geq0$ (written $\bm{\mu}$ to keep $\lambda$ for the trial parameter). Hence, while primal and dual feasibility hold,
\[
\begin{aligned}
\vq^\star(\lambda) &\;=\; \mA_{\mathcal R}^{\top}\bigl(\mA_{\mathcal R}\mA_{\mathcal R}^{\top}\bigr)^{-1}\bigl(\vb_{0,\mathcal R}+\lambda\,\vb_{1,\mathcal R}\bigr), \\[2pt]
\frac{\mathrm{d}\vq^\star}{\mathrm{d}\lambda} &\;=\; \mA_{\mathcal R}^{\top}\bigl(\mA_{\mathcal R}\mA_{\mathcal R}^{\top}\bigr)^{-1}\vb_{1,\mathcal R},
\end{aligned}
\]
so the derivative is constant on that interval and $\vq^\star(\lambda)$ is affine there.
\end{proof}
Across such intervals $\vq^\star(\lambda)$ is therefore piecewise affine. At rank-deficient binding sets the inverse above is not used; the piecewise-affine characterization of the unique primal solution on $D$ follows from the parametric-QP result in~\cite{tondel2003mpqp}.

The statement concerns only this frozen local trial QP. A change of row set --- whether from cached activation/deactivation or from a full refresh --- or of witness or normal replaces $(\mA,\vb_0,\vb_1)$ and falls outside the fixed-QP characterization. We use the property to seed a warm start --- it never replaces a QP solve, and the analytical cache propagates frozen-witness gap predictions (Eq.~\ref{eq:cache_update}), not $\vq^\star$.

\subsection{Solver--Evaluator Consistency}
\label{ap:solver_evaluator_consistency}

In the CPU-FCL implementation, at full-detection steps and during tail refinement \name\ uses the same piecewise FCL convention as Algorithm~\ref{alg:evaluator}: candidate pairs come from an AABB pass, and each surviving pair is queried with FCL BVH--BVH \texttt{distance} when separated or \texttt{collide} with per-contact depths when intersecting. Cached steps instead use the frozen-witness prediction $\breve d_{ij}$, which is never treated as a measured score. Every returned pose is then scored by Algorithm~\ref{alg:evaluator}, so the solver's exact queries and the final metrics share one sign convention, while cached predictions stay explicitly separate. \name-Warp uses its own contact oracle during the solve (Sec.~\ref{sec:exp_setup}) and shares the final evaluator. QP/LCP, PD-PGS, and Soft-Penalty use the same proximity primitive.

\subsection{Default Hyperparameters}
\label{ap:hyperparams}

Unless stated otherwise, every \name run in Sec.~\ref{sec:exp} uses the following defaults (Tab.~\ref{tab:hyperparams}).

\begin{table}[t]
\centering
\small
\setlength{\tabcolsep}{5pt}
\renewcommand{\arraystretch}{1.15}
\begin{tabular}{@{}lll p{0.40\columnwidth}@{}}
\toprule
Parameter & Sym. & Value & Notes \\
\midrule
Clearance            & $\dhat$              & $0.02$       & $0.02$ for \name/QP/LCP/Soft-Penalty; {PD-PGS uses $10^{-3}$;} Drake uses a $10^{-4}$ lower bound \\
\rowcolor{rowgray} Scale lower bound    & $s_{\min}$           & $0.01$       & checked per scene; Sec.~\ref{sec:progressive_scaling}, App.~\ref{ap:scene_gen} \\
Base scale stride    & $\mathit{ds}_{\max}$ & $0.05$       & fixed default; not derived from any certified bound \\
\rowcolor{rowgray} Cache refresh        & $M$                  & $3$          & Sec.~\ref{sec:cache}; ablated at $M\in\{1,2,3,5,10\}$, no certified maximum \\
Contact backend      &                      & FCL          & CPU default: prebuilt-BVH FCL; \name-Warp: Warp-native oracle \\
Initial separation padding & $\varepsilon_{\mathrm{sep}}$ & $10^{-6}$ & added to the perturbation target \\
Minimum retry increment & $\mathit{ds}_{\min}$ & $10^{-6}$ & Alg.~\ref{alg:s4r} step-halving floor \\
\rowcolor{rowgray} Scale-loop attempt budget & $K_{\max}$ & $200$ & Alg.~\ref{alg:s4r}; not reached in any reported run \\
\rowcolor{rowgray} OSQP abs/rel tol     & $\varepsilon$        & $10^{-6}$    & $10^{-7}$ in the tail loop \\
OSQP max iters       &                      & $4000$       & $8000$ in the tail loop \\
\rowcolor{rowgray} Rotation DOFs        &                      & off          & 6-DOF via \texttt{-{}-rotation} \\
6-DOF weight         & $\beta$              & $r_{\max}^2$ & surface-motion normalization \\
\rowcolor{rowgray} 6-DOF clamp          & $\omega_{\max}$      & $0.1$\,rad   & per component ($\|\boldsymbol{\omega}_i\|_\infty$) \\
Schedule             &                      & uniform      & $\mathit{ds}_{\max}{=}0.05$, adaptive skip \\
Tail max iters       & $K_{\mathrm{tail}}$  & $20$         & 3-iter stagnation guard \\
\bottomrule
\end{tabular}
\caption{Default hyperparameters of \name.}
\label{tab:hyperparams}
\end{table}

Baseline hyperparameters follow the reference implementations: Drake-Ipopt
uses $10^4$ Ipopt maximum iterations (an explicit exception to the reference settings; App.~\ref{ap:baseline_geometry}), and Drake-SNOPT sets SNOPT's iteration limits high enough that they never bind; QP/LCP uses 50 outer iterations; PD-PGS uses
relaxation 0.8 and iterates to feasibility within its wall-clock budget (its sweep cap is non-binding); Soft-Penalty uses $\mu_0=10$, $4\times$
growth, and 6 outer loops; \textsc{AVBD-OBB} (tuned) uses its own $\alpha=0.97$, $\beta=10^5$,
damping $0.95$, and $100$ iterations; the official configuration keeps $\alpha=0.99$, $\beta=10^4$, damping $0$, and $10$ iterations.

\subsection{Scene Generation and Units}
\label{ap:scene_gen}
Where metric lengths are quoted (e.g.\ centimeters in the stress tests), we take one scene unit $=1$\,m.

Every benchmark scene is generated from a fixed seed by the rules below.

\emph{Per-object normalization.}
Each source mesh is recentered at the center of its normalization box --- the object bounds shipped with each Kubric and HY3D-Bench asset, and the mesh's own bounding box for Thingi10K --- and this center is the body ``reference center'' $\mathbf{c}_i$ used everywhere in the paper; the mesh is uniformly scaled so the box's \emph{longest} axis equals $\texttt{target\_size}=0.1$ world units: $\mathrm{nf}_i = 0.1/\max(\mathrm{bbox}_i)$, $\bar{\mathbf{v}} \leftarrow \mathrm{nf}_i\,(\mathbf{v}_{\mathrm{raw}} - \text{bbox center})$, and world-space vertices are $\mathbf{v}_i(1) = \mathbf{R}_i\bar{\mathbf{v}} + \mathbf{c}_i$. All three datasets therefore share the same normalized units, in which every spatial quantity of the paper ($\hat d{=}0.02$, RMSD, penetration depths) is expressed.

\emph{Scene assembly (constant-density spawn box).}
With $f=(N/40)^{1/3}$, body centers are drawn i.i.d.\ uniformly from the box $[-a_{\mathrm{xz}},a_{\mathrm{xz}}]\times[-a_{\mathrm{y}},a_{\mathrm{y}}]\times[-a_{\mathrm{xz}},a_{\mathrm{xz}}]$ with $a_{\mathrm{xz}}=0.15f$ and $a_{\mathrm{y}}=0.35f$ (full sides $0.30\times0.70\times0.30$ at $N{=}40$), so the expected density is $N$-invariant. Orientations use three independent uniform Euler angles $\vartheta_x,\vartheta_y,\vartheta_z\sim \mathcal U(0,2\pi)$ composed as $\mathbf{R}=\mathbf{R}_x(\vartheta_x)\,\mathbf{R}_y(\vartheta_y)\,\mathbf{R}_z(\vartheta_z)$ from the rotations $\mathbf{R}_x,\mathbf{R}_y,\mathbf{R}_z$ about the coordinate axes (not Haar-uniform on $\mathrm{SO}(3)$; the same convention for every method). Mesh templates are sampled from the dataset pool with replacement when $N$ exceeds the pool size, with the same seeds for every method.

\emph{Dataset pools.} Kubric: $40$ templates selected with a $500$--$5000$-vertex metadata filter; their collision meshes have $64$--$1170$ vertices. HY3D-Bench: watertight volume meshes decimated toward $1500$ faces ($1438$--$4254$ after repair) (\textsc{HY3D-Decimated}; the $\leq\!5000$-face source pool is \textsc{HY3D-Full}, and the high-concavity subset \textsc{HY3D-Concave}); the benchmark filters retain $13$ templates in the decimated pool and $12$ in the full pool, which the scenes sample with repetition. Thingi10K: pre-validated pool with a $1000$--$1500$-facet band. The spherical-spawn stress variant replaces the box by a ball of radius $0.10\,(N/100)^{1/3}$.

\paragraph{Initial-scale perturbation.}
At the fixed $s_{\min}$, any pair with $\|\mathbf{c}_i-\mathbf{c}_j\| < \hat d + s_{\min}(r_i+r_j)$ violates Eq.~\ref{eq:smin_safe}, including exactly coincident centers, which no positive $s_{\min}$ admits. Such centers are pushed apart deterministically along their pair axis (a fixed axis for exact coincidence) to $\|\mathbf{c}_i-\mathbf{c}_j\|\geq\hat d + s_{\min}(r_i+r_j) + \varepsilon_{\mathrm{sep}}$; the offending pairs are visited once, and the displacement is charged to the reported RMSD (at most $4\%$ of it across the scene sizes we measured, $8\%$ for the GPU pipeline, whose sweep pushes each visited pair by a fixed $\hat d$). Since Eq.~\ref{eq:smin_safe} is only a bounding-sphere condition, we check the mesh-level statement directly: shrinking every body to $s_{\min}$ and scoring the layout with the evaluator of App.~\ref{ap:evaluator} returns zero penetrating pairs on every scene reported in this paper, with a closest gap of at least $0.014$ on the main-benchmark scenes and $0.009$ on the stress scenes.

\subsection{Baselines and Protocol}
\label{ap:benchmark_limits}
\paragraph{Baselines.}
\begin{itemize}[leftmargin=1.2em,itemsep=0.2em,topsep=0.2em,parsep=0pt,partopsep=0pt]
  \item[(i)] \textbf{AVBD}~\cite{Giles2025AVBD} --- the official 3D implementation of Augmented Vertex Block Descent (\textsc{AVBD-OBB}), which represents bodies as OBBs and uses OBB--OBB SAT collision.
  \item[(ii)] \textbf{Rigid-ISIR}~\cite{Jang2025ISIR} --- our rigid-pose adaptation of Instant Self-Intersection Repair: the original method flows mesh vertices along local signed tangent-point energies; we keep its energy, CUDA mesh-intersection BVH and optimizer schedule, and aggregate the per-vertex gradients onto per-body 6-DOF rigid poses (one global GPU step per outer iteration), with the step size selected on the validation seeds.
  \item[(iii)] \textbf{Drake}~\cite{drake,wachter2006ipopt,gill2005snopt} --- inverse kinematics with minimum-distance lower-bound constraints on convex hulls of the meshes, solved by Ipopt (\textsc{Drake-Ipopt}) or by Drake's bundled SNOPT (\textsc{Drake-SNOPT}).
  \item[(iv)] \textbf{Global QP/LCP} --- an iteratively re-linearized global contact QP solved by OSQP~\cite{stellato2020osqp} until feasibility.
  \item[(v)] \textbf{PD-PGS} --- a position-based projected Gauss--Seidel baseline: local sweeps over the overlap graph with connected-component decomposition and mass-weighted displacement splitting; an implementation of the standard projection sweep rather than a published method.
  \item[(vi)] \textbf{Soft-Penalty} --- a static penalty-continuation baseline after TrajOpt's hinge collision loss and penalty-strength continuation~\cite{schulman2014trajopt}, with a squared-hinge penetration penalty minimized by L-BFGS-B.
\end{itemize}

\paragraph{Baseline configurations.}
\label{ap:baseline_geometry}
Drake uses its default collision geometry for these meshes, one convex hull per body, a $10^{-4}$ minimum-distance lower bound and an influence distance of $0.05$; its Ipopt iteration budget is raised to $10^4$ so that the solver's own convergence test, not the stock iteration cap, ends the solve, and \textsc{Drake-SNOPT} solves the same program with non-binding iteration limits. Because the hull hides non-convex cavities, Drake separates highly non-convex bodies farther than the meshes require, which raises its RMSD under the mesh-level evaluator; Drake is reported on the Kubric setting up to $N{=}100$.
\textsc{AVBD-OBB (tuned)} is the official demo under our quasi-static adaptation with a convergence-binding step budget; \textsc{AVBD-OBB (official)} keeps the released defaults, which have no static-repair termination, so we report the first stable pen$_{\mathrm{OBB}}{=}0$ pose within a $20{,}000$-step cap (reached by $2/3$ seeds at $N{=}40$), while its Time covers the full capped run of those seeds. The OBB proxy overstates the overlap of non-convex meshes, so the solver separates bodies farther than mesh-level contact would require, consistent with its RMSD of $1.15$--$4.46$ at pen${=}0$. \textsc{ISIR}'s iteration cap and collision buffers are enlarged so that they never truncate a run prematurely; its state at $N{=}5000$ is limited by the wall-time budget.

\paragraph{Timing and software.}
Each method's reported time is $\text{setup}+\text{solve}$: setup covers the method's own per-scene construction (oracle or BVH build, OBB system, GPU framework initialization and JIT, with synchronized timers on GPU) and solve covers the solver loop including its own contact detection; scene generation, mesh loading and the shared final evaluation are excluded. We use Python 3.13 with NumPy 2.4, SciPy 1.17, OSQP 1.1.1, python-fcl 0.7.0, trimesh 4.11 and Warp 1.14; \textsc{ISIR} and \textsc{Drake} run in a Python 3.10 environment with PyTorch 2.6 (CUDA 12.4) and Drake 1.51.

\paragraph{Evaluation conventions.}
\emph{(a)~Common final target.} Every method is scored against the same target $\tau{=}0$ by the evaluator of Algorithm~\ref{alg:evaluator} (strict zero threshold, no tolerance band), complemented for watertight meshes by an offline containment audit of the retained final states; internal margins ($\dhat$ for \name, QP/LCP and Soft-Penalty, $10^{-3}$ for PD-PGS, $10^{-4}$ for Drake) are solver parameters, not the target.
\emph{(b)~Proxy solvers are scored on meshes.} \textsc{AVBD-OBB}'s final poses are applied back to the triangle meshes before evaluation.
\emph{(c)~Per-seed budget.} $1800$\,s per (method, $N$, seed) cell, counted on the method's own setup and solve; Rigid-ISIR stops itself at $1770$\,s so that it can return its best pose; \textsc{Drake} is allowed $7200$\,s and \textsc{AVBD-OBB} $3600$\,s so that their step budgets can complete.
\emph{(d)~Incomplete seeds.} Unless a study states otherwise, every cell is attempted on all three seeds; a cell with an incomplete seed is annotated with its completion count (e.g.\ ``$2/3$, 1 T/O''), its mean is taken over the completed seeds, and no speedup ratio is quoted against it; a cell with no completed seed is reported as T/O (time-out) or OOM (out of memory).

\paragraph{Scenes.}
\emph{Kubric}: our primary pool of $40$ watertight household meshes from Google Scanned Objects~\cite{downs2022gso}, the collection the Kubric generator~\cite{greff2022kubric} draws on, with mild-to-moderate non-convexity ($\kappa$ from $0$ for the convex templates to $0.54$ under the concavity of App.~\ref{sec:exp_objaverse}); we place them with the procedural spawn of App.~\ref{ap:scene_gen} and write \emph{Kubric} for this pool throughout.
\emph{HY3D-Bench}~\cite{tencent2026hy3dbench}: a $300$-asset collection of synthesized meshes from which the benchmark filters retain $13$ templates in the decimated pool and $12$ in the full-resolution pool, plus a $57$-mesh high-concavity sub-pool with $\kappa\in[0.3,0.95]$ for the non-convex stress test (App.~\ref{sec:exp_objaverse}).
\emph{Thingi10K}~\cite{zhou2016thingi10k}: a $430$-mesh subset of artist-authored printable meshes, closed, manifold and non-self-intersecting, with $500$--$5000$ vertices and $1000$--$1500$ facets.
Three seeds ($42$, $123$, $456$) are used unless a study states otherwise (five for the high-concavity test, eight for the packing study); the seed spread is small (\name's RMSD standard deviation is at most $0.006$ on CPU and $0.0074$ on GPU across scene sizes), so the tables quote means and Fig.~\ref{fig:scaling} shows $\pm$s.d.\ error bars.

\paragraph{Metrics.}
Given a solver's output poses, the same triangle-mesh evaluator (App.~\ref{ap:evaluator}) computes
(i)~\emph{Pen.}, the number of body pairs with $\widetilde d_{ij}<0$ under the FCL convention of Algorithm~\ref{alg:evaluator};
(ii)~\emph{maxPen}, reported in the stress tests where residual penetration occurs: the maximum reported local contact depth over those pairs (zero when none remains);
(iii)~the root-mean-square reference-center displacement from the initial configuration, $\mathrm{RMSD}=\smash{\sqrt{\tfrac1N\sum_i\|\mathbf{p}_i-\mathbf{p}_i^0\|^2}}$ over the body reference centers (Sec.~\ref{sec:progressive_scaling}); and
(iv)~the wall-clock time under the timing protocol above.
RMSD measures translational layout deviation: for the translation-only solvers every surface point of a body moves by its center displacement, so center RMSD and surface RMSD coincide, while for AVBD and ISIR, which also rotate bodies, it can understate the total surface motion. Because all methods share the evaluator, differences in (i)--(iii) arise from the solver output rather than from method-specific scoring.

\subsection{Mesh-Level Penetration Evaluator}
\label{ap:evaluator}

All seven solvers in Sec.~\ref{sec:exp_main} are scored by the same triangle-mesh evaluator. It applies one strict zero threshold to every method and counts the pairs whose piecewise score is negative; we give its pseudocode so the reported \emph{Pen.} counts can be reproduced exactly.

\begin{algorithm}[t]
\SetKwInOut{Input}{input}\SetKwInOut{Output}{output}
\Input{pose-updated world-space meshes $\{\mathcal M_i = (\mathcal V_i, \mathcal F_i)\}_{i=1}^N$ (one FCL BVH per body)}
\Output{negative-score pair count $\mathrm{Pen.}$; local-depth diagnostic $\mathrm{maxPen}$}
$\mathrm{Pen.} \leftarrow 0;\; \mathrm{maxPen}\leftarrow 0$\;
\ForEach{pair $(i,j)$, $i<j$ (vectorized upper-triangle AABB pass over all $N(N{-}1)/2$ pairs)}{
  \If{the AABBs are disjoint}{
    $g_{ij}^{\mathrm{AABB}} \leftarrow$ AABB axis-gap norm $\geq 0$ \tcp*{box-gap lower bound; feeds only the min-gap diagnostic; the pair is not penetrating}
  }
  \Else{
    $g \leftarrow \texttt{FCL.distance}(\mathcal M_i,\mathcal M_j)$ \tcp*{BVH--BVH; returns a non-negative gap}
    \If{$g > 0$}{$\widetilde d_{ij} \leftarrow g$ \tcp*{separated: FCL-reported boundary gap}}
    \Else{
      $\mathcal C \leftarrow \texttt{FCL.collide}(\mathcal M_i,\mathcal M_j,$
        \Indp
        $\texttt{enable\_contact},\ \texttt{num\_max\_contacts}{=}16)$\;
        \Indm
      $\widetilde d_{ij} \leftarrow \begin{cases}0 & \mathcal C=\emptyset\ \text{(by convention)}\\[2pt] -\max\limits_{c\in\mathcal C} c.\mathrm{pen\_depth} & \text{otherwise}\end{cases}$\;
    }
    \If{$\widetilde d_{ij} < 0$}{
      $\mathrm{Pen.}\leftarrow\mathrm{Pen.}+1$\;
      $\mathrm{maxPen}\!\leftarrow\!\max(\mathrm{maxPen},\,-\widetilde d_{ij})$\;
    }
  }
}
\caption{Mesh-level penetration evaluator.\label{alg:evaluator}}
\end{algorithm}

The threshold is exactly zero: a pair is counted only when $\widetilde d_{ij}<0$, with no tolerance band; a pair whose \texttt{collide} call reports contact but returns an empty contact set is scored as zero by convention ($\widetilde d_{ij}=0$), which counts it as non-penetrating. The online score packs two different quantities into one signed variable: a positive boundary distance from \texttt{distance}, and a negative local contact-depth summary from \texttt{collide}. The sign alone partitions them, and the positive branch never contributes to either penetration metric. The reported online metric is \emph{Pen.}\ $=|\{(i,j): \widetilde d_{ij}<0\}|$; the companion diagnostic is \emph{maxPen} $=\max(-\widetilde d_{ij})$ over those pairs, zero when there are none. The depth is the maximum over the at most $16$ contacts FCL returns and is a local triangle-pair intersection depth, not a global minimum-translation distance, so \emph{maxPen} is a cap- and tessellation-dependent local diagnostic; \emph{Pen.}\ depends only on the sign and not on this estimate.

The online evaluator has no containment branch, so a fully nested watertight pair can be assigned a positive boundary gap. We therefore re-score every retained final state of the main, scaling, cross-dataset and spherical-spawn runs offline with a separate containment test; that audit reproduces the tabulated \emph{Pen.}\ values and finds no missed nested pair; the first-feasible \textsc{AVBD-OBB} poses lie outside it. FCL is invoked only for the $P$ pairs whose boxes overlap ($P \ll N^2$ in practice).

\subsection{Optional 6-DOF Extension}
\label{ap:6dof}

The default \name solver uses the translation-only QP of Eq.~\ref{eq:qp_translation}. For scenes whose initial layout is packing-limited, where re-orientation can reduce the displacement needed to separate the bodies (in free space, bounded bodies can always be separated by translation alone), the QP can be extended with a per-body small-angle rotation increment $\boldsymbol{\omega}_i\in\mathbb{R}^3$, exposed via a \texttt{-{}-rotation} command-line flag. Let $\vw_i,\vw_j$ be the witness points of a contact pair and $\mathbf{y}_i = \vw_i-\mathbf{p}_i$, $\mathbf{y}_j = \vw_j-\mathbf{p}_j$ the lever arms; the rotational contribution to the first-order increment of the projected gap $\mathbf{n}_{ij}^{\top}(\vw_j-\vw_i)$ is $(\mathbf{y}_j\times \mathbf{n}_{ij})^{\top}\boldsymbol{\omega}_j - (\mathbf{y}_i\times \mathbf{n}_{ij})^{\top}\boldsymbol{\omega}_i$, with $\boldsymbol{\omega}_i$ a world-frame increment applied as $\mathbf{R}_i^{+}=\exp([\boldsymbol{\omega}_i]_\times)\mathbf{R}_i$, where $[\boldsymbol{\omega}]_\times$ is the cross-product matrix, $[\boldsymbol{\omega}]_\times\mathbf{a}=\boldsymbol{\omega}\times\mathbf{a}$. Substituting into the per-step constraint gives
\begin{equation*}
\begin{aligned}
  \min_{\Delta \mathbf{p}^\star, \boldsymbol{\omega}^\star} \quad & \frac{1}{2}\sum_{i=1}^N \left(\|\Delta \mathbf{p}_i^\star\|^2 + \beta\|\boldsymbol{\omega}_i^\star\|^2\right)\\
  \text{s.t.}\quad & \mathbf{n}_{ij}^{\top}(\Delta \mathbf{p}_j^\star-\Delta \mathbf{p}_i^\star) + (\mathbf{y}_j\times\mathbf{n}_{ij})^{\top}\boldsymbol{\omega}_j^\star\\
  & \quad - (\mathbf{y}_i\times\mathbf{n}_{ij})^{\top}\boldsymbol{\omega}_i^\star \geq b_{ij}, \quad \forall (i,j)\in\mathcal{A}_k,\\
  & \|\boldsymbol{\omega}_i^\star\|_\infty \leq \omega_{\max}, \quad \forall i.
\end{aligned}
\end{equation*}
The weight $\beta=r_{\max}^2$ matches the rotational term to the translational one in length$^2$ units: the surface-arc displacement of a small-angle rotation $\boldsymbol{\omega}$ on a body of radius $r$ is at most $r\|\boldsymbol{\omega}\|$, so $\beta=r_{\max}^2$ weighs the worst-case rotational surface motion like a translation. The box bound $\|\boldsymbol{\omega}_i\|_\infty\leq\omega_{\max}=0.1$\,rad per step limits the small-angle linearization error of the lever arm.
Each accepted rotation changes the witness geometry, so it invalidates the affected cache and warm-start entries and triggers fresh nearby-pair queries.
Sec.~\ref{sec:exp_rotation} evaluates this variant: cost on typical layouts (Tab.~\ref{tab:abl_rotation}), and the packing-limited regime where it lowers displacement (Tab.~\ref{tab:rotation_packing}).

\subsection{Upright-on-Plane Layout Constraint}
\label{sec:upright_plane}

Generated-asset layout repair often has a stronger scene prior than the fully unconstrained 3D problem: objects are meant to stand upright on a tabletop or floor, and penetration should be removed without letting the solver lift objects into the air or tilt them sideways.  We encode this prior as an optional hard constraint in the S4R optimization, rather than as a Blender post-process.

Let $\vu$ be the unit up vector and let the support plane be $\Pi=\{\vx\in\mathbb R^3:\vu^\top\vx=h_0\}$.  For each body we first choose an upright canonical orientation $\mathbf{R}_i^{\mathrm{up}}$ whose semantic up axis aligns with $\vu$.  Its admissible orientation is then parameterized by a single yaw angle about $\vu$:
\begin{equation}
  \mathbf{R}_i(\theta_i) = \mathbf{R}_{\vu}(\theta_i) \mathbf{R}_i^{\mathrm{up}},
  \qquad
  \theta_i^{k+1}=\theta_i^k+\Delta\theta_i,
  \end{equation}
where $\mathbf{R}_{\vu}(\theta_i)$ is a rotation about $\vu$.  Roll and pitch are not decision variables because they would violate the upright prior; yaw is optimized simultaneously with translation during progressive inflation.

If a generated initial layout contains tilted objects, as in the Blender application of Sec.~\ref{sec:exp_generated_asset}, we treat the tilt as part of the continuation path rather than as an admissible final degree of freedom.  Let $\mathbf{T}_i^0$ denote the initial roll/pitch tilt and let $\zeta(s)$ be a smooth monotone tilt schedule with $\zeta(s_{\min})=1$ and $\zeta(1)=0$ (written $\zeta$ to keep $\alpha$ for the tail damping of Sec.~\ref{sec:tail}). Our implementation uses the smoothstep complement $\zeta(s)=1-\sigma^2(3-2\sigma)$ with $\sigma=\operatorname{clamp}\bigl((s-s_{\min})/(1-s_{\min}),0,1\bigr)$, and interpolates the tilt by scaling the initial roll/pitch Euler angles, $\mathbf{T}_i(\zeta)=\mathbf{R}_y(\zeta\,\varphi_i^{\mathrm{pitch}})\,\mathbf{R}_x(\zeta\,\varphi_i^{\mathrm{roll}})$, with $\mathbf{R}_x,\mathbf{R}_y$ the rotations about the two horizontal axes, so that $\mathbf{T}_i(1)=\mathbf{T}_i^0$. The initial orientation is decomposed as $\mathbf{R}_i^0=\mathbf{R}_{\vu}(\theta_i^0)\,\mathbf{T}_i^0\,\mathbf{R}_i^{\mathrm{up}}$ (initial yaw, tilt, and canonical upright orientation), and during inflation we use $\mathbf{R}_i(s,\theta_i)=\mathbf{R}_{\vu}(\theta_i)\mathbf{T}_i(\zeta(s))\mathbf{R}_i^{\mathrm{up}}$, so the path starts from the tilted layout and ends upright.  The QP still optimizes the rotation term that is physically admissible on the support plane---yaw---while the roll/pitch component is driven to zero by the upright scene prior. This variant re-detects at every scale step, without event skipping or caching, runs the tail pass once the tilt schedule reaches $(s,\zeta)=(1,0)$, and enters no timing comparison (Tab.~\ref{tab:generated_asset_repair} times the unconstrained 3-DOF repair).

The support constraint is imposed at every scale step.  For any prescribed orientation $\mathbf{R}$, let $\underline h_i(\mathbf{R})=\min_{\bar{\mathbf{v}}}\vu^\top \mathbf{R}\bar{\mathbf{v}}$, with $\bar{\mathbf{v}}$ ranging over the body's vertex offsets from its reference center as in Sec.~\ref{sec:linearized_constraints}, be the lowest local vertex height of body $i$; during the tilt transition this quantity is evaluated at the deterministic target roll/pitch state $\mathbf{R}_i(s_{k+1},\theta_i)$, and at the upright orientation once $\zeta{=}0$.  Since the scaled world-space surface is $\vp_i+s\,\mathbf{R}_i\bar{\mathbf{v}}$, body $i$ touches the support plane at scale $s$ when
\begin{equation}
  \vu^\top \vp_i(s) + s\,\underline h_i(\mathbf{R}_i) = h_0 .
  \end{equation}
During the transition $s_k\!\to\!s_{k+1}$, we append the linear equality
\begin{equation}
  \vu^\top(\vp_i^k+\Delta\vp_i) + s_{k+1}\,\underline h_i(\mathbf{R}_i) = h_0,
  \qquad i=1,\ldots,N,
  \label{eq:support_qp_constraint}
\end{equation}
to the contact QP of Eq.~\ref{eq:qp_translation}.  The equality is imposed by parameterizing $\vp_i$ with two in-plane coordinates and the deterministic height $\vu^\top\vp_i(s)=h_0-s\,\underline h_i(\mathbf{R}_i)$.  Because yaw is around $\vu$, $\underline h_i$ is invariant to $\theta_i$ for any fixed roll/pitch homotopy state, and in particular for the final upright state.

The resulting tabletop QP uses variables $(\Delta\vp_i,\Delta\theta_i)$. We split the center motion as $\vp_i(s)=\vp_i^{\parallel}(s)+\bigl(h_0-s\,\underline h_i(\mathbf{R}_i(s))\bigr)\vu$: the QP optimizes only the in-plane part $\Delta\vp_i^{\parallel}$, while the normal component is the deterministic height that keeps the body on the plane. That deterministic vertical motion, and the tilt-induced motion when the roll/pitch homotopy is active, are not part of the QP's contact linearization: the contact rows of Eq.~\ref{eq:upright_yaw_qp_constraint} form a simplified model that keeps the right-hand side $b_{ij}$, whereas eliminating Eq.~\ref{eq:support_qp_constraint} exactly from the full contact row would replace it by $b_{ij}-(\vn_{ij}^\top\vu)(\Delta z_j-\Delta z_i)$, with $\Delta z_i$ the prescribed height change of body $i$; any residual effect that persists after the accepted update is observed and re-linearized at the next per-step exact detection.  For an active contact $(i,j)$ with witness lever arms $\vy_i=\vw_i-\vp_i$ and $\vy_j=\vw_j-\vp_j$ (where $\vw_i,\vw_j$ are the witness points), the yaw-linearized constraint is
\begin{equation}
\begin{aligned}
  \vn_{ij}^{\top}(\Delta\vp_j^{\parallel}-\Delta\vp_i^{\parallel})
  &+ \Delta\theta_j\,(\vy_j\times \vn_{ij})^{\top}\vu
   - \Delta\theta_i\,(\vy_i\times \vn_{ij})^{\top}\vu \\
  &\ge b_{ij}.
\end{aligned}
\label{eq:upright_yaw_qp_constraint}
\end{equation}
Our implementation evaluates $b_{ij}$ with the projected width $W_i(\vn)=\max_{\bar{\mathbf{v}}}\vn^\top \mathbf{R}_i\bar{\mathbf{v}}-\min_{\bar{\mathbf{v}}}\vn^\top \mathbf{R}_i\bar{\mathbf{v}}$ of each body in place of the one-sided support $e_i(\vn)$, and clips each in-plane translation to $0.03$ ($0.018$ in the tail pass) after the solve; yaw is bounded by additional QP rows.
The objective adds a yaw regularizer to the minimum-norm displacement objective,
\begin{equation}
  \min_{\Delta\vp,\Delta\boldsymbol{\theta}}
  \frac{1}{2}\sum_i\|\Delta\vp_i\|^2
  +\frac{\beta_{\theta}}{2}\sum_i(\Delta\theta_i)^2,
  \end{equation}
with $\beta_{\theta}$ in length$^2$ units; our implementation uses $\beta_{\theta}=t^2$ for bodies normalized to longest extent $t$, and clamps $|\Delta\theta_i|\leq0.08$\,rad per step ($0.04$ in the tail pass).  This is the formulation intended for Blender tabletop generated-asset repair: \name resolves penetration during inflation through coupled in-plane translation and yaw, while the support-plane equality prevents lifting and the upright parameterization prevents tipping.

This constraint changes the feasible set and should therefore be evaluated separately from the unconstrained volumetric benchmarks.  It is appropriate for tabletop layout repair, shelf organization, and generated-scene cleanup, but not for scenes where valid resolution requires stacking, lifting, or tipping objects, nor for multi-support configurations (e.g.\ a floor plus a wall) whose simultaneous constraints from different directions remove the yaw freedom this parameterization relies on; Sec.~\ref{sec:conclusion} discusses this limitation.

\begin{figure}[htbp]
  \centering
  \includegraphics[width=\columnwidth]{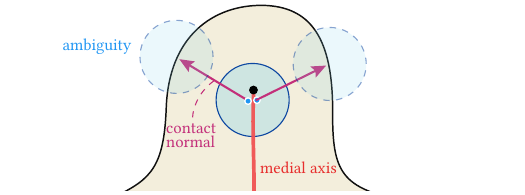}
  \caption{Medial-axis ambiguity under deep penetration.}
  \Description{A schematic of a deeply penetrated configuration near the medial axis, where the closest point and contact normal can switch discontinuously.}
  \label{fig:medial_axis}
\end{figure}

\begin{figure*}[t]
  \centering
  \includegraphics[width=0.9\textwidth]{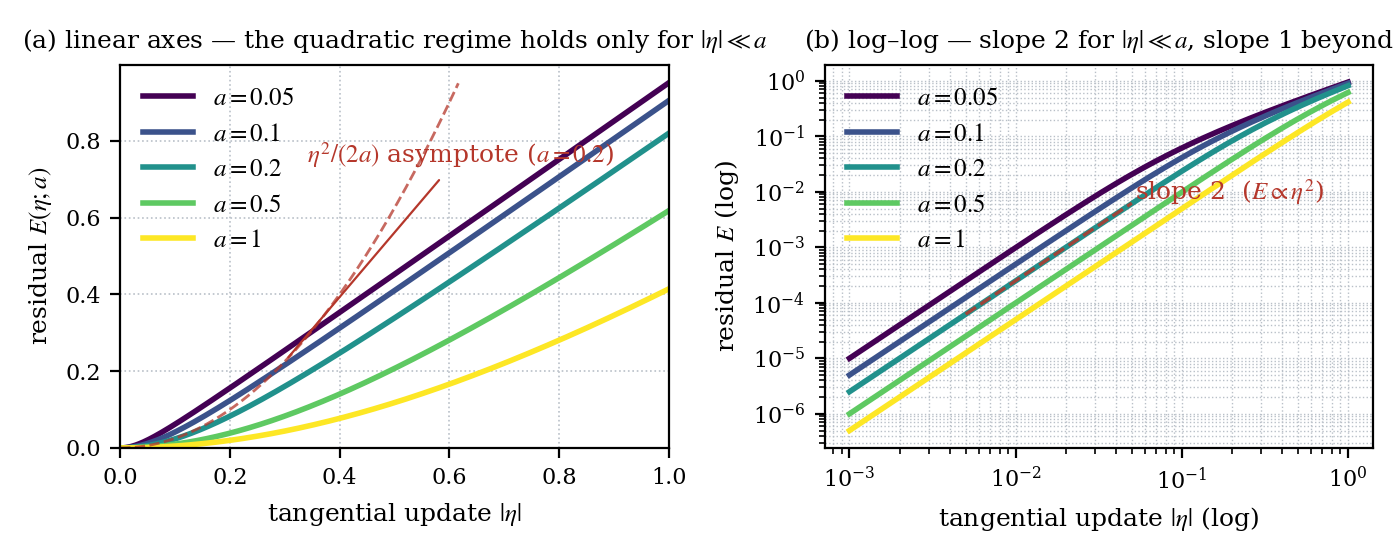}
  \caption{Exact first-order residual for the signed distance to a circle (Eq.~\ref{eq:circle_residual}). For an interior query point whose remaining distance to the medial axis is $a>0$, a tangential update of magnitude $|\eta|$ leaves the residual $\mathcal{E}(\eta;a)=\sqrt{a^{2}+\eta^{2}}-a$. The quadratic approximation $\mathcal{E}\approx\eta^{2}/(2a)$ holds only for $|\eta|\ll a$, and the coefficient $1/(2a)$ degrades as $a\downarrow0$. The plot illustrates the local smooth-branch argument.}
  \Description{Two plots of the exact circle-distance residual versus the tangential update magnitude, for several positive values of the remaining distance to the medial axis, showing the local quadratic regime and the linear growth for large updates.}
  \label{fig:truncation_error}
\end{figure*}

\subsection{Non-Convex Test (HY3D-Bench High-Concavity)}
\label{sec:exp_objaverse}

Kubric objects are at most moderately non-convex ($\kappa\leq0.54$; App.~\ref{ap:benchmark_limits}).
To stress the non-convex regime, we draw $34$ distinct meshes, one per body, from the HY3D-Bench~\cite{tencent2026hy3dbench} high-concavity sub-pool of $57$ eligible meshes ($\kappa \in [0.3, 0.95]$ after the winding-consistency and face-count filters; the largest $\kappa$ in the pool is $0.548$).
Concavity is defined as $\kappa = 1 - \operatorname{area}(\text{hull})/\operatorname{area}(\text{mesh})$ (distinct from the local-reach parameter $\rho$ of App.~\ref{ap:theory_linearization}); objects with $\kappa > 0.5$ have over $2\times$ more surface area than their convex hulls.

\begin{table}[t]
  \caption{Non-convex stress test on \textsc{HY3D-Concave}, the HY3D-Bench high-concavity subset (concavity $\kappa \in [0.3, 0.95]$).
  Averages over 5 random seeds. Both methods reach pen${=}0$; the separating axis is displacement: \textsc{AVBD-OBB} resolves phantom OBB overlaps of the concave meshes and displaces bodies $126\times$ farther than \name.}
  \label{tab:objaverse}
  \centering
  \small
  \setlength{\tabcolsep}{17pt}
  \resizebox{\columnwidth}{!}{
  \begin{tabular}{@{}lccc@{}}
    \toprule
    Method & Pen. (avg) & RMSD (avg) & Time \\
    \midrule
    AVBD-OBB           & 0.0 & 2.530 & \textbf{3.6 s}  \\
    \textbf{\name (ours)}& 0.0 & \textbf{0.020} & 5.8 s \\
    \bottomrule
  \end{tabular}
  }
\end{table}

This sub-pool contains non-watertight source meshes, so it is scored with the negative-score metric of Algorithm~\ref{alg:evaluator} alone; the containment audit applies to watertight inputs.
On highly non-convex meshes (Tab.~\ref{tab:objaverse}) both methods reach pen${=}0$ in all $5$ seeds, but \name's displacement is $126\times$ lower ($0.020$ vs.\ $2.530$): the concave meshes occupy a small fraction of their bounding boxes, so the OBB proxy reports overlaps the meshes do not have and \textsc{AVBD-OBB} separates bodies much farther than mesh-level contact requires. This is consistent with a mesh-level formulation being essential when geometry does not approximate its bounding box.

\begin{figure}[t]
  \centering
  \includegraphics[width=\columnwidth]{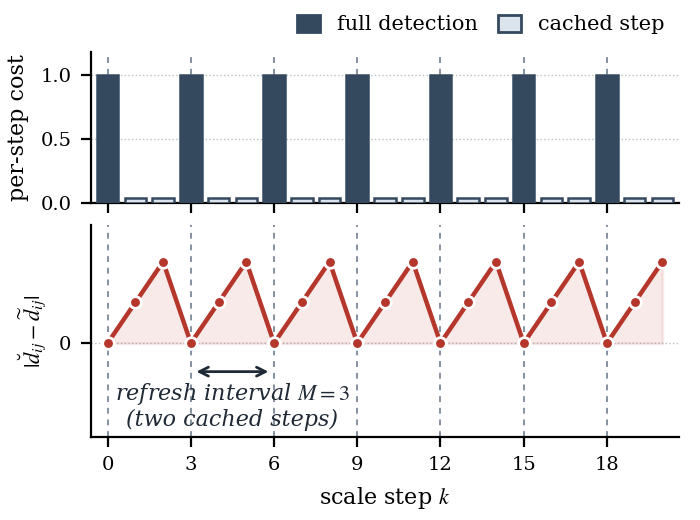}
  \caption{Schematic of the cache rhythm with refresh interval $M{=}3$: full BVH/narrow-phase detection every third scale step (top, dark bars), two cached steps in between; the residual axis is $|\breve d_{ij}-\widetilde d_{ij}|$; the cached-score residual (bottom) accumulates within each window and is reset to zero at every full detection. The panel illustrates the cadence rather than a measured trace; App.~\ref{ap:cache_audit} reports the final feasibility at each cache interval.}
  \Description{A two-part diagram showing full collision-detection calls every three scale steps and the corresponding cached-distance residual rising between refreshes before resetting to zero.}
  \label{fig:cache_rhythm}
\end{figure}

\subsection{Spherical-Spawn Stress Test}
\label{sec:exp_sphere}

To probe how each solver behaves when crowding outranks scene size, we place bodies uniformly inside a ball of radius $r_{\mathrm{ball}}(N){=}0.10\,(N/100)^{1/3}$ so cell density stays roughly invariant across $N{\in}\{100, 500, 1000, 2000\}$.  Initial penetrating pairs grow from about $825$ to $\sim\!25\,000$ and the deepest pair reaches about $9.5$\,cm.

Under extreme density, \name's QP update minimizes the total squared incremental displacement subject to the current linearized margin rows, which keeps RMSD small (${\sim}0.13$--$0.31$ here); at its default tail budget it leaves under $1\%$ of the initial pairs from $N{=}500$ on ($1.3$, $32$, and $192$ pairs at $N{=}500$, $1000$, $2000$), and a $200$-iteration tail clears them at $1.1$--$1.7\times$ the wall time ($20$, $64$, $215$\,s) with RMSD at most $5.1\%$ higher. QP/LCP reaches pen${=}0$ at every $N$, at two to three times the displacement ($0.24$--$0.92$) and, at $N{=}2000$, three times the wall time of the longer-tail \name\ run; \textsc{ISIR} reaches pen${=}0$ up to $N{=}1000$ at four to six times \name's displacement and stops at its budget on two of three seeds at $N{=}2000$; \textsc{PD-PGS} reaches pen${=}0$ up to $N{=}500$ and returns a budget-limited state at $N{=}1000$; \textsc{Soft-Penalty} times out from $N{=}500$, and \textsc{Drake} completes two of three seeds at $N{=}100$ (RMSD $0.09$--$0.14$ in $1600$--$1800$\,s) and none at $N{=}500$ within its $7200$\,s allowance.  With the longer tail, the methods that reach zero penetration differ mainly in displacement and wall time.
Fig.~\ref{fig:sphere_baselines_gallery} renders \name\ and QP/LCP, the two methods that reach zero penetration at every $N$, under a common camera per $N$.

\subsection{Rigid-Body Engines on an Interpenetrating Scene}
\label{sec:exp_engine_frontend_app}

The IPC front-end of Sec.~\ref{sec:exp_engine_frontend} checks, under the shared evaluator, the intersection-free start that a barrier solver presupposes. Impulse- and penalty-based engines can step from an overlapping state, but the contact
response has to remove the overlap, and on a deeply interpenetrating layout that response is violent.
We measure this on three independent engines (Tab.~\ref{tab:engine_frontend}).

\begin{table}[t]
  \caption{Peak body speed over a one-second rollout from an interpenetrating YCB layout and from the same
  layout after \name\ (3-seed means, $N{=}50$, gravity disabled so the motion is contact response alone).}
  \label{tab:engine_frontend}
  \centering
  \small
  \setlength{\tabcolsep}{9pt}
  \renewcommand{\arraystretch}{1.35}
  \begin{tabular}{@{}lcc@{}}
    \toprule
    Engine & Raw scene & After \name \\
    \midrule
    MuJoCo 3.10              & $7.8$\,m/s   & $0.000$\,m/s \\
    PyBullet 3.2.7           & $12.5$\,m/s  & $0.001$\,m/s \\
    Isaac Gym (PhysX)          & $230.6$\,m/s & $0.000$\,m/s \\
    \bottomrule
  \end{tabular}
\end{table}

We drop $50$ YCB~\cite{calli2015ycb} instances, resampled from a $16$-object pool, into a tight spawn; the raw
layouts carry $773$ overlapping pairs on average (up to $0.11$ deep). Every engine ejects the bodies, mildly
in MuJoCo and some $30\times$ more in PhysX~\cite{physx5}, and the maximum body displacement over the one-second
rollout ($7.2$, $10.0$, and $230.8$ units) is close to the peak speed times the rollout duration. \name\ resolves the same layouts to
pen${=}0$ in $8.2$\,s with a closest gap of $2.6\times10^{-4}$--$6.4\times10^{-4}$, after which all three engines hold
the configuration essentially at rest. The collision geometry is each object's convex hull, used identically
by \name\ and by the engines, so the comparison does not turn on a mesh-versus-proxy mismatch.

\subsection{Contact-Density Stress Test}
\label{sec:exp_density}

The scaling experiments above keep contact density approximately constant by expanding the spawn box with $N$.
We additionally test fixed-size scenes while varying the spawn-box multiplier, which directly changes the number of initial penetrating pairs and the density of the active-contact graph.
Tab.~\ref{tab:density_sweep} reports how crowding changes feasibility, displacement, and wall time at fixed scene size.

\begin{table}[t]
  \caption{Contact-density sweep on Kubric and HY3D-Bench at $N{=}1000$, 3-seed averages. Smaller spawn multipliers create denser initial interpenetration; displacement and wall time grow as packing tightens, and only the densest spawn leaves residual pairs at the default tail budget. The multiplier-$1.00$ Kubric row is the scene of Tab.~\ref{tab:scaling}.}
  \label{tab:density_sweep}
  \centering
  \small
  \setlength{\tabcolsep}{9pt}
  \resizebox{\columnwidth}{!}{%
  \begin{tabular}{@{}lccccc@{}}
    \toprule
    Dataset / $N$ & Spawn mult. & Init. Pen. & Final Pen. & RMSD & Time \\
    \midrule
    Kubric / 1000     & 0.70 & 2160 & 9.3 & 0.106 & 26 s \\
    Kubric / 1000     & 0.85 & 1230 & 0  & 0.054 & 16 s \\
    Kubric / 1000     & 1.00 & 772  & 0 & 0.035 & 8 s \\
    HY3D-Full / 1000 & 0.70 & 1793 & 21.7 & 0.088 & 66 s \\
    HY3D-Full / 1000 & 1.00 & 641  & 0  & 0.032 & 29 s \\
    \bottomrule
  \end{tabular}%
  }
\end{table}

This study separates scene-size effects from crowding effects: at fixed $N$, denser initial layouts create larger active-contact graphs and increase RMSD and wall time; only the densest spawn leaves residual pairs at the default tail budget, and a $200$-iteration tail clears them at about $1.3\times$ the wall time.
The confinement failure mode discussed in Sec.~\ref{sec:conclusion} appears one regime tighter, when a container bounds the free volume: with wall constraints at packing fraction $\phi\approx0.14$ (total mesh volume over the volume of the cubic container) all three seeds still separate (pen${=}0$), while at $\phi\approx0.18$ none does ($12$--$35$ residual pairs per seed) --- as free volume shrinks, the continuation first becomes expensive and then fails. This locates the empirical onset of failure for the current solver.

\subsection{Ablation of Technical Contributions}
\label{sec:exp_ablation}

We isolate each technical contribution of \name.
Each ablation switches one component on or off with everything else at the main-table setting, so the default rows agree with the $N{=}40$ entry of Tab.~\ref{tab:main} to within $10\%$. All ablations use Kubric with $N{=}40$, $\dhat{=}0.02$, $\mathit{ds}_{\max}{=}0.05$, 3-seed means, and the $M{=}3$ cache setting of the main tables unless stated otherwise.

\subsubsection{Progressive Scaling vs.\ Direct QP}
Progressive scaling decomposes a problem with a deeply interpenetrating initialization into a sequence of locally convex QPs.
To isolate what progressive scaling contributes, we compare \name\ against two full-scale \emph{Direct-QP} variants --- a one-shot linearization, and an iterative variant that re-detects and re-linearizes for up to $50$ outer iterations --- both running their global QP on the full-size configuration without progressive scaling.

\begin{table}[t]
  \caption{Progressive-scaling ablation on Kubric ($N{=}40$, 3-seed mean). Both feasible variants reach pen${=}0$; progressive scaling does so through shallow-contact QPs and in less time. Among the feasible runs, bold marks the best entry per column and underline the runner-up (lower is better).}
  \label{tab:abl_scaling}
  \centering
  \small
    \setlength{\tabcolsep}{9pt}
  \resizebox{\columnwidth}{!}{
  \begin{tabular}{@{}lcccc@{}}
    \toprule
    Method                       & Pen.  & RMSD   & Feasible?  & Time \\
    \midrule
    Direct-QP (1 linearization)  & 13.7  & 0.028  & No         & 0.12 s \\
    Direct-QP (50 re-linearizations)  & \textbf{0}     & \underline{0.038}  & Yes        & \underline{0.41 s} \\
    \textbf{\name (progressive)} & \textbf{0} & \textbf{0.036} & Yes & \textbf{0.16 s} \\
    \bottomrule
  \end{tabular}
  }
\end{table}

Progressive scaling keeps the modeled contacts shallower, and on a fixed smooth branch Lemma~\ref{lem:taylor_residual} bounds the Taylor remainder by the square of the realized update. The Direct-QP~(50-iter) variant relies on outer iterations to walk out of the initial linearization; at larger $N$ the corresponding QP/LCP baseline becomes more expensive (Tab.~\ref{tab:scaling}: $3.0$--$4.6\times$ \name's wall time through $N{=}2000$).

\subsubsection{Tail Refinement.}
\label{sec:exp_tail_ablation}
The final tail-refinement phase rechecks the full-scale configuration and solves correction QPs on the near-contact set.
This phase is designed to remove residual linearization and stale-witness errors after the scale path reaches $s{=}1$.
Tab.~\ref{tab:abl_tail} quantifies the feasibility/runtime trade-off of this final correction.

\begin{table}[t]
  \caption{Tail-refinement ablation on three large-scale cells;
  3-seed averages. Tail refinement removes the residual penetrations
	  the main loop leaves behind, at the cost of an additional QP pass
	  over the near-contact set. The Off rows disable the tail so the main loop's own residual is visible; the On rows are the main configuration.}
  \label{tab:abl_tail}
  \centering
  \small
  \setlength{\tabcolsep}{8pt}
  \resizebox{\columnwidth}{!}{%
  \begin{tabular}{@{}llcccc@{}}
    \toprule
    Dataset / $N$ & Tail & Pen. & maxPen & RMSD & Time \\
    \midrule
    Kubric / 1000      & Off & 85.3 & 0.052 & \textbf{0.033} & \textbf{3 s} \\
    Kubric / 1000      & On  & \textbf{0} & \textbf{0.000} & 0.035 & 7 s \\
    HY3D-Full / 2000  & Off & 212.3 & 0.015 & \textbf{0.030} & \textbf{44 s} \\
    HY3D-Full / 2000  & On  & \textbf{0} & \textbf{0.000} & 0.034 & 66 s \\
    Thingi10K / 2000   & Off & 129.7 & 0.055 & \textbf{0.024} & \textbf{12 s} \\
    Thingi10K / 2000   & On  & \textbf{0} & \textbf{0.000} & 0.026 & 21 s \\
    \bottomrule
  \end{tabular}%
  }
\end{table}

\subsubsection{Scale of Impact (SOI) Analytical Event Handling.}
\label{sec:exp_soi}

We ablate the SOI rule of Sec.~\ref{sec:soi}, which advances toward a conservative lower bound on the earliest margin-entry scale under pure inflation at fixed poses (capped at $3\,\mathit{ds}_{\max}$), against two fixed-stride schedules: the coarse $\mathit{ds}{=}0.05$ stride of the default configuration and a deliberately fine $\mathit{ds}{=}0.01$ reference that takes about $100$ steps, most of them contact-free.

\begin{table}[t]
  \caption{Component ablation of the scale of impact (SOI) event rule on Kubric (3-seed average, $N{=}40$).
  Naïve $\mathit{ds}{=}0.01$ requires $\approx100$ scale steps; coarsening to $\mathit{ds}{=}0.05$ already removes most of that cost. SOI then advances toward a conservative lower bound on the earliest possible margin-entry scale (pure inflation, fixed poses) and skips the residual event-free steps. Bold marks the best Time and RMSD, underline the runner-up.}
  \label{tab:abl_soi}
  \centering
  \small
\setlength{\tabcolsep}{14pt}
  \resizebox{\columnwidth}{!}{
  \begin{tabular}{@{}lccc@{}}
    \toprule
    Method                         & Scale steps & Time   & RMSD \\
    \midrule
    Fixed $\mathit{ds}{=}0.01$     & 99.0        & 0.34 s & \textbf{0.0342} \\
    Fixed $\mathit{ds}{=}0.05$     & 20.0        & \textbf{0.15 s} & \underline{0.0348} \\
    \textbf{SOI jump (ours)}       & 17.7 & \underline{0.16 s} & 0.0359 \\
    \bottomrule
  \end{tabular}
  }
\end{table}

Relative to the coarse fixed stride ($20.0$ steps, $0.15$\,s, RMSD $0.0348$), SOI removes the remaining contact-free steps ($17.7$ steps) at essentially unchanged wall time and RMSD at this scene size ($0.16$\,s, $0.0359$); most of the cost of the fine schedule ($99.0$ steps, $0.34$\,s) is removed by coarsening the stride itself (Tab.~\ref{tab:abl_schedule}), so the gap to it is not a speedup attributable to SOI. What SOI changes is where the QPs are solved: at or before the potential contact events of the fixed-pose model rather than at arbitrary stride points, while contacts induced by translation are caught at the next refresh or in the tail.

\subsubsection{Collision Cache with Analytical Distance Update.}
\label{sec:exp_cache}
Full closest-point re-detection can approach a dense $\mathcal{O}(N^2)$ pair sweep, with each exact pair query carrying a backend- and mesh-dependent cost (the $q_{ij}$ term in App.~\ref{sec:theory_speed}).
Between events, we advance the cached predictions $\breve d_{ij}$ analytically by Eq.~\ref{eq:cache_update} and only re-run the full detector every $M$ scaling iterations.

\begin{table}[t]
  \caption{Ablation of the collision cache (refresh interval $M$) on Kubric (3-seed average, $N{=}40$). Bold marks the best Time and RMSD, underline the runner-up.}
  \label{tab:abl_cache}
  \centering
  \small
      \setlength{\tabcolsep}{14pt}
  \resizebox{\columnwidth}{!}{
  \begin{tabular}{@{}lccc@{}}
    \toprule
    $M$ & Main-loop detections & Time & RMSD \\
    \midrule
    1 (no cache)       & 19.0 & 0.234 s & \textbf{0.033} \\
    2                  &  9.3 & 0.186 s & \underline{0.034} \\
    \textbf{3 (ours)}  & 6.3 & \underline{0.157 s} & 0.036 \\
    5                  &  3.3 & \textbf{0.149 s} & 0.036 \\
    10                 &  2.0 & 0.160 s & 0.038 \\
    \bottomrule
  \end{tabular}
  }
\end{table}

At $N{=}40$, $M{=}3$ yields a $1.5\times$ speedup over $M{=}1$ at a small RMSD cost ($0.033\to0.036$). Setting $M{=}5$ is slightly faster still at the same RMSD, and $M{=}10$ gains nothing further while RMSD drifts to $0.038$ (cached distances become stale and the QP over-pushes). 

\paragraph{Cache Correctness Audit.}
\phantomsection\label{ap:cache_audit}
At every refresh interval $M{\in}\{1,2,3,5,10\}$ on Kubric ($N{=}40$, three seeds), the continuation ends with zero evaluator-reported penetration after the full-scale check and tail refinement. This certifies the final outcome at each interval, not step-wise agreement between the cached rows and a fully re-detected active set; a pair that enters the activation margin between two refreshes is recovered at the next refresh or in the tail.

\subsubsection{Schedule Type and Adaptive Skip.}
\label{sec:exp_schedule}

\begin{table}[t]
  \caption{Step size and adaptive skip on Kubric ($N{=}40$, 3-seed mean). Coarser steps dominate the speedup; adaptive skip removes the remaining contact-free steps at near-neutral cost at this scene size. Bold marks the best Time and RMSD, underline the runner-up.}
  \label{tab:abl_schedule}
  \centering
  \small
        \setlength{\tabcolsep}{7pt}
  \resizebox{\columnwidth}{!}{
  \begin{tabular}{@{}lllccc@{}}
    \toprule
    $\mathit{ds}_{\max}$ & Adaptive skip & Steps & Time & RMSD & Speedup \\
    \midrule
    $0.01$ (fine)        & No  & 99 & 0.33 s & \underline{0.034} & $1.0\times$ \\
    $0.01$ (fine)        & Yes & 76 & 0.31 s & \textbf{0.032} & $1.1\times$ \\
    $0.05$ (coarse)      & No  & 20 & \textbf{0.15 s} & 0.035 & $2.2\times$ \\
    \textbf{0.05 (coarse)} & Yes & 18 & \underline{0.16 s} & 0.036 & $2.1\times$ \\
    \bottomrule
  \end{tabular}
  }
\end{table}

Adaptive skip buys $1.1\times$ on the fine schedule and is cost-neutral on the coarse default because a fine schedule runs many inexpensive steps. The step-size choice is the bigger lever: with $\mathit{ds}_{\max}{=}0.05$ the solver takes $18$--$20$ steps with or without skip, and each step clears a meaningful fraction of the scale range.

\subsubsection{Contact Sparsity.}
\label{sec:exp_sparsity}
At each scale step only the bodies incident to an active contact enter the QP, about half of the bodies on average over a run (the per-step fraction ranges from a few percent to about nine tenths at $N{=}40$, $100$, and $1000$); variables are introduced only for those bodies and the constraint matrix is block-sparse.

\begin{table}[t]
  \caption{Contact sparsity exploitation on Kubric (3-seed average, $N{=}40$). Without sparsity the QP dimension is $3N$; with sparsity it is $3|\mathcal{B}|$ where $\mathcal{B}$ is the set of bodies in the active-contact graph.}
  \label{tab:abl_sparsity}
  \centering
  \small
  \resizebox{\columnwidth}{!}{
  \begin{tabular}{@{}lcccc@{}}
    \toprule
    Method & QP dim (avg) & QP iters & Time & RMSD \\
    \midrule
    Dense QP (no sparsity)      & 120          & 63          & 0.16 s          & 0.036 \\
    \textbf{Sparse QP (ours)}   & \textbf{56}  & 63 & 0.16 s & 0.036 \\
    \bottomrule
  \end{tabular}
  }
\end{table}

Sparsity roughly halves the QP dimension ($120\to56$ at $N{=}40$; about $1.6$k of $3000$ variables at $N{=}1000$) at an unchanged iteration count. Because contact detection dominates the wall time and candidate-pair generation remains a separate worst-case $\mathcal{O}(N^2)$ stage, this is a scalability feature rather than a measurable constant-factor speedup.

\subsection{Cumulative Optimization Stack}
\label{sec:exp_perf_summary}

\name's practical performance comes from the cumulative effect of several targeted optimizations.
Tab.~\ref{tab:perf_stack} summarizes the wall-time contribution of each on Kubric.

\begin{table}[t]
  \caption{Cumulative implementation stack on Kubric ($N{=}40$, 3-seed mean). Adaptive skip is cost-neutral at this scale under the Python backend (L3 to L4; Tab.~\ref{tab:abl_schedule} shows the same under FCL), and the controlled per-component effects are in Tab.~\ref{tab:abl_cache} and Tab.~\ref{tab:abl_sparsity}. The largest gain comes from replacing the Python BVH with FCL. Every stage reaches zero reported penetration on all three seeds except L3, which leaves one negative-score pair on one seed.}
  \label{tab:perf_stack}
  \centering
  \small
  \resizebox{\columnwidth}{!}{
  \begin{tabular}{@{}lcc@{}}
    \toprule
    Optimization (cumulative) & Time & Speedup \\
    \midrule
    L1: na\"ive (trimesh, fine $\mathit{ds}$, dense QP, no cache, no skip) & 68.0 s & --- \\
    L2: + contact sparsity (Sec.~\ref{sec:exp_sparsity})            & 67.8 s & $1.0\times$ \\
    L3: + coarse $\mathit{ds}_{\max}{=}0.05$                      & 21.2 s & $3.2\times$ \\
    L4: + adaptive skip (Sec.~\ref{sec:exp_schedule})               & 21.9 s & $3.1\times$ \\
    L5: + collision cache $M{=}3$ (Sec.~\ref{sec:exp_cache})        & 12.8 s & $5.3\times$ \\
    \textbf{L6: + FCL backend (ours)}                             & \textbf{0.16 s} & {\boldmath$428\times$} \\
    \bottomrule
  \end{tabular}
  }
\end{table}

\subsection{Large-Scale Runtime Breakdown}
\label{sec:exp_breakdown}

To expose the bottlenecks in the regime where \name is intended to be used, we also measure wall-time breakdowns at large scene sizes.
The measurement separates full mesh-proximity queries from QP setup/solve, SOI/cache maintenance, tail refinement, and final evaluation.
Tab.~\ref{tab:runtime_breakdown} reports the breakdown over three datasets at $N{=}2000$ and over Kubric at $N{\in}\{1000, 2000, 5000\}$.

\begin{table}[H]
  \caption{Large-scale runtime breakdown (per-phase seconds and \% of total; 3-seed means under the main tables' configuration; totals match Tab.~\ref{tab:scaling} and Tab.~\ref{tab:datasets} within $3\%$, and the \textsc{HY3D-Full} row corresponds to the full-resolution-pool comparison reported in Sec.~\ref{sec:exp_datasets}). Detection and tail refinement account for $73$--$97\%$ of every cell and the QP solve never exceeds $1\%$: savings come from fewer mesh-proximity queries, not a faster QP.}
  \label{tab:runtime_breakdown}
  \centering
  \small
  \setlength{\tabcolsep}{4pt}
  \resizebox{\columnwidth}{!}{%
  \begin{tabular}{@{}lcccccc@{}}
    \toprule
    Dataset / $N$ & Setup & Contact detection & QP solve & Tail refinement & Other & Total \\
    \midrule
    Kubric / 1000 & 0.1 s (1\%) & 3.0 s (39\%) & 0.1 s (1\%) & 4.1 s (54\%) & 0.4 s (5\%) & 7.6 s \\
    Kubric / 2000 & 0.3 s (1\%) & 5.9 s (34\%) & 0.1 s (1\%) & 9.3 s (53\%) & 1.8 s (10\%) & 17.4 s \\
    Kubric / 5000 & 1.5 s (2\%) & 15.1 s (23\%) & 0.3 s (1\%) & 33.4 s (50\%) & 16.3 s (24\%) & 66.6 s \\
    HY3D-Full / 2000 & 0.6 s (1\%) & 42.6 s (64\%) & 0.1 s (0\%) & 22.1 s (33\%) & 1.0 s (2\%) & 66.4 s \\
    Thingi10K / 2000 & 0.3 s (1\%) & 10.8 s (50\%) & 0.1 s (0\%) & 9.9 s (46\%) & 0.6 s (3\%) & 21.7 s \\
    \bottomrule
  \end{tabular}%
  }
\end{table}

Since the main-loop QP solve is essentially free at these problem sizes (the tail column combines detection with its own QPs), replacing OSQP with a faster QP backend (e.g., active-set or batched dense) would not meaningfully change the main-loop wall-time picture; future acceleration work should target the mesh-proximity calls and the tail-refinement loop instead.
Moreover, tail refinement is the largest phase on Kubric, on par with detection on Thingi10K, and second largest on HY3D-Full; its budget is the conservatively set iteration cap (Sec.~\ref{sec:tail}). A smarter tail --- warm-started from the main loop, or re-solving only the still-penetrating connected components --- would target the largest line of the Kubric rows directly.

\end{document}